\documentclass[11pt, UKenglish]{article}

\usepackage[UKenglish]{babel}

\usepackage{dsfont}
\usepackage{geometry}
\usepackage{graphics}
\usepackage{graphicx}
\usepackage{xcolor}
\usepackage{amsmath,amsfonts,amsthm,amssymb,framed,color,bbm}
\usepackage[utf8]{inputenc}

\usepackage{empheq}

\usepackage{pdfpages}

\usepackage{tikz}
\usetikzlibrary{matrix}

\usetikzlibrary{decorations.pathreplacing}





\theoremstyle{plain}
\newtheorem{thm}{Theorem}[section]
\newtheorem{prop}[thm]{Proposition}
\newtheorem{lemma}[thm]{Lemma}
\newtheorem{remark}[thm]{Remark}

\newtheorem{defn}[thm]{Definition}


\theoremstyle{definition}

\newcommand{\Z}{\mathbb{Z}}
\newcommand{\N}{\mathbb{N}}

\newcommand{\R}{\mathbb{R}}

\newcommand {\T} {\mathbb T} 






\newcommand{\diam}{\mathrm{diam}}

\newcommand{\es}{\emptyset}

\newcommand{\1}{\mathbbm{1}}

\newcommand{\vertiii}[1]{{\left\vert\kern-0.25ex\left\vert\kern-0.25ex\left\vert #1 
    \right\vert\kern-0.25ex\right\vert\kern-0.25ex\right\vert}}


\newcommand{\de}{\mathrm{d}}								

\parindent0cm 				

\begin{document}

\hyphenation{co-va-ri-ance cor-re-la-tion}

\title{Scaling limit and strict convexity of free energy for gradient models with non-convex potential}
\author{Susanne Hilger\footnote{Email: shilger@posteo.de}}
\date{}
\maketitle

\begin{abstract}
We consider gradient models on the lattice $\Z^d$. These models serve as effective models for interfaces and are also known as \textit{continuous Ising models}. The height of the interface is modelled by a random field with an energy which is a non-convex perturbation of the quadratic interaction. We are interested in the Gibbs measure with tilted boundary condition $u$ at inverse temperature $\beta$ of this model.
In 
\cite{AKM16}, 
\cite{Hil16}
and 
\cite{ABKM}
the authors show that for small tilt $u$ and large inverse temperature $\beta$ the surface tension is strictly convex, where the limit is taken on a subsequence. Moreover, it is shown that the scaling limit (again on a subsequence) is the Gaussian free field on the continuum torus. The method of the proof is a rigorous implementation of the renormalisation group method following a general strategy developed by Brydges and coworkers.

In this paper the renormalisation group analysis is extended from the finite-volume flow to an infinite-volume version to eliminate the necessity of the subsequence in the results in \cite{AKM16}, \cite{Hil16} and \cite{ABKM}.

\end{abstract}

\section{Introduction}

We analyse continuous Ising models which are effective models for random interfaces. Let $\Lambda \subset \Z^d$ be a finite subset of the lattice.  We consider fields $\varphi: \Lambda \to \R$ which can be interpreted as height variables of the interface. To each configuration $\varphi \in \R^{\Lambda}$ an energy $H_{\Lambda}(\varphi)$ is assigned This Hamiltonian is given by a potential $W:\R \to \R$ that only depends on discrete gradients of the field,
$$
H_{\Lambda}(\varphi) = \sum_{x \in \Lambda} \sum_{i = 1}^d W(\nabla_i \varphi (x)),
$$
where $\nabla_i \varphi (x) = \varphi(x+e_i) - \varphi(x)$ is the finite difference quotient on the lattice.
We impose tilted boundary conditions, namely
$$
\varphi(x) = \psi^u(x) \quad\mbox{for }x \in \partial\Lambda,
\quad \psi^u(x) = u \cdot x \mbox{ for } u \in \R^d.
$$
The finite-volume Gibbs measure with boundary condition $\psi^u$ at inverse temperature $\beta>0$ is then
$$
\gamma_{\beta,\Lambda}^{\psi^u} (\de \varphi) = \frac{1}{Z_{\beta,\Lambda}^{\psi^u}} \,\, e^{-\beta H_{\Lambda}(\varphi)} \prod_{x \in \Lambda} \de \varphi(x) \prod_{x \in \partial \Lambda} \delta_{\psi^u(x)}(\de \varphi (x)),
$$
where $Z_{\beta,\Lambda}^{\psi_u}$ is the partition function which normalizes the measure.

~\\
In the case of strictly convex, symmetric $W$ a lot is known about the behaviour of $\gamma_{\beta,\Lambda}^{\psi^u} (\de \varphi)$: The infinite-volume gradient Gibbs measure exists and is uniquely determined by the tilt, see \cite{FS97}. The long distance behaviour is described by the Gaussian free field (see \cite{NS97} and \cite{GOS01}) and the decay of the covariance is polynomial as in the massless Gaussian case (\cite{DD05}). Moreover the surface tension is strictly convex (DGI00).

~\\
The situation is not that clear for models with non-convex potentials.

A special class of gradient fields with non-convex potentials (log-mixture of centered Gaussians) is considered in \cite{BK07}. At tilt $u=0$, a phase transition is shown to happen at some critical value of inverse temperature $\beta_c$. This result demonstrates that one can expect neither the uniqueness of gradient Gibbs measures corresponding to a fixed tilt $u$ nor strict convexity of the surface tension. However, the scaling limit in this case is still the Gaussian free field, as shown in \cite{BS11}.

For a class of gradient models where the potential is a small non-convex perturbation of a strictly convex one, \cite{CDM09}shows strict convexity of the surface tension at high temperature. For the same class in the same temperature regime, in \cite{CD12} it is shown that for any $u$ there exists a unique ergodic, shift-invariant gradient Gibbs measure . Moreover, the measure scales to the Gaussian free field and the decay of the covariance is algebraic as above.

~\\
The complementary temperature regime is considered in \cite{AKM16}. The authors consider potentials which are small perturbations of the quadratic one, the perturbation chosen such that it does not disturb the convexity at the minimum of the potential. For small tilt $u$ and large inverse temperature $\beta$ they prove strict convexity of the surface tension obtained as a limit of a subsequence of $(N_l)_{l \in \N}$, where $L^N$ is the side length of the box $\Lambda$, and relying on a quite restrictive lower bound on $W$, namely
$$
W(s) \geq (1-\epsilon) s^2
$$
for a small $\epsilon$.

In the same setting the paper \cite{Hil16} shows that there is $q \in \R^{d \times d}_{\text{sym}}$ small, such that the scaling limit is the Gaussian free field on $\T^d$ with covariance $\mathcal{C}^q_{\T^d}$, where
$$
 \left( \mathcal{C}_{\T^d}^q \right)^{-1}
 = - \sum_{i,j=1}^d \left( \delta_{ij} + q_{ij} \right) \partial_i \partial_j,
$$
 and that a "smoothed" covariance decays algebraically. The convergences are on a subsequence.

~\\
In \cite{ABKM} the class of potentials is widened to such which satisfy less restrictive bounds on the potential, namely
$$
W(s) \geq \epsilon s^2,
$$
and to vector-valued fields and finite-range instead of only nearest-neighbour interaction. The last two improvements are of interest for the application in nonlinear elasticity. The authors show that the surface tension is strictly convex and that the scaling limit is the Gaussian free field on the torus. Unfortunately, all convergences are still on a subsequence.

~\\
The setting in this paper is similar to the one from \cite{ABKM}: We restrict to small tilts and large inverse temperature and use the same smallness condition on the potential. For the sake of simplicity we formulate our results and proofs for scalar-valued fields and nearest-neighbour interaction. We show that the necessity for the subsequence in the statements about the surface tension and the scaling limit can be removed.

~\\
The proof builds on a rigorous renormalisation group approach for the partition function as developed by Bauerschmidt, Brydges and Slade in a series of papers (\cite{BS1},\cite{BS2}, \cite{BS3}, \cite{BS4}, \cite{BS5}). This approach is developed for the model at hand in \cite{AKM16} and improved in \cite{ABKM}. We augment the technique in the following direction: we extend the finite-volume flow apparent in the renormalisation group method to infinite volume. This enables us to get rid of the restriction on the subsequence.

\paragraph{Structure of the paper}
In Section \ref{sec:SettingResults}, gradient models are introduced and the main results concerning the scaling limit (Theorem \ref{Thm:ScalingLimit}) and the strict convexity of the surface tension (Theorem \ref{Thm:SurfaceTension}) are stated. Furthermore, a technical theorem on which the proofs of these results are based is formulated (Theorem \ref{Thm:RepresentationPartitionFunction}). THe technical theorem contains a representation of the generating partition function and provides straightforward proofs of the main results.

Finally, Section \ref{sec:RG-analysis_BulkFlow} contains the proof of the first technical result, Theorem \ref{Thm:RepresentationPartitionFunction}. The proof is by renormalisation group analysis which closely follows \cite{ABKM}. To improve the convergence results in \cite{ABKM}, the method is extended from finite-volume to infinite-volume flows. This extension is explained in \cite{BS5} for the $\varphi^4$-model and adapted to gradient models in this paper.

\paragraph{Notations}
Throughout the whole paper we will use the following notations.
\begin{itemize}
	\item $C_c^{\infty}$ will denote the set of smooth, compactly supported functions.
	\item Partial derivatives will be denoted by $\partial_s$ instead of $\frac{\partial}{\partial s}$.
	\item The symbol $\partial_i$ will be used for usual derivatives, in contrast to $\nabla_i$ for discrete finite differences.
	\item $C^r$ denotes the set of $r$-times differential functions.
	\item $\R^{d \times d}_{{\text{sym}}}$ denotes the set of $d \times d$ symmetric matrices.
	\item The Kronecker-delta $\delta_{ij}$ is $1$ if $i=j$ and $0$ else.
	\item The indicator function $\1_z$ is given by $\1_z = 1$ if condition $z$ is satisfied and $\1_z = 0$ otherwise.
	\item The symbol $C$ will mostly denote a positive constant whose value is allowed to change in a chain of inequalities from line to line.
\end{itemize}

\section{Setting and results} \label{sec:SettingResults}

We start by describing gradient models and their finite-volume Gibbs distributions and stating the main results, namely the scaling limit of the measure in Theorem~\ref{Thm:ScalingLimit}, strict convexity of the surface tension in Theorem \ref{Thm:SurfaceTension}.

Then we state a technical key theorem (Theorem \ref{Thm:RepresentationPartitionFunction}), which contains a powerful representation of the normalisation constant of the Gibbs measure. From this representation the proofs of the main results can be deduced straightforwardly.

\subsection{Gradient models}\label{sec:GradientModels}

Fix an integer $L \geq 3$ and a dimension $d \geq 2$. Let $\T_N = \left( \Z / L^N \Z \right)^d$ be the $d$-dimensional discrete torus of side length $L^N$ where $N$ is a positive integer. We equip $\T_N$ with the quotient distances $| \cdot |$ and $| \cdot |_{\infty}$ induced by the Euclidean and maximum norm respectively. The torus can be represented by the cube
$$
\Lambda_N = \left\lbrace x \in \Z^d: |x|_{\infty} \leq \frac{1}{2} \left(L^N - 1 \right) \right\rbrace
$$
of side length $L^N$ once it is equipped with the metric
$$
|x-y|_{\mathrm{per}} = \inf \left\lbrace |x-y+k|_{\infty}: k \in \left( L^N \Z \right)^d \right\rbrace.
$$

 Define the space of $m$-component fields on $\Lambda_N$ as 
$$
\mathcal{V}_N = \lbrace \varphi: \Lambda_N \rightarrow \R^m \rbrace = \left(\R^m \right)^{\Lambda_N}.
$$
Since we will consider shift invariant energies, we are only interested in gradient fields on $\mathcal{V}_N$. Gradient fields can be described by elements in $\mathcal{V}_N /_{\lbrace\text{constants}\rbrace}$, or, equivalently, by usual fields with vanishing average
$$
\chi_N = \bigg\lbrace \varphi \in \mathcal{V}_N: \sum_{x \in \Lambda_N} \varphi(x) = 0 \bigg\rbrace.
$$
We equip $\chi_N$ with a scalar product via
$$
(\varphi,\psi) = \sum_{x \in \Lambda_N} \varphi(x) \psi(x).
$$
Let $\lambda_N$ be the $m\left(L^{Nd}-1\right)$-dimensional Hausdorff measure on $\chi_N$. Let $e_i$, $i= 1, \ldots, d$, be the standard unit vectors in $\Z^d$. Then the discrete forward and backward derivatives are defined by
\begin{align*}
\left( \nabla_i \varphi \right)_s(x) 
= \varphi_s (x + e_i) - \varphi_s(x), 
\quad i \in \lbrace 1, \ldots, d \rbrace,
\quad s \in \lbrace 1, \ldots, m \rbrace,
\\
\left( \nabla_i^* \varphi \right)_s (x) 
= \varphi_s (x - e_i) - \varphi_s(x), 
\quad i \in \lbrace 1, \ldots, d \rbrace,
\quad s \in \lbrace 1, \ldots, m \rbrace.
\end{align*}

Let $A \subset \Z^d$ be a finite set with range $R_0 = \diam_{\infty} (A)$ and let $U: \left(\R^m\right)^A \rightarrow \R$ be a finite-range potential which is invariant with respect to translations in $\R^m$, i.e., $U(\psi) = U(\tau_a \psi)$ for any $\psi \in \left( \R^m \right)^A$ with $\left( \tau_a \psi \right)_s (x) = \psi_s(x) + a_s$, $a \in \R^m$.

We study a class of random gradient fields defined in terms of a Hamiltonian
\begin{align*}
H_N(\varphi)
= \sum_{x \in \Lambda_N} U \left( \varphi_{\tau_x(A)} \right),
\quad
\tau_x(A) = A+x = \lbrace y: y-x \in A \rbrace
\quad \text{for } x \in \Lambda_N.
\end{align*}

We equip the space $\chi_N$ with the $\sigma$-algebra $\mathfrak{B}_{\chi_N}$ induced by the Borel-$\sigma$-algebra with respect to the product topology, and use $\mathcal{M}_1(\chi_N) = \mathcal{M}_1(\chi_N, \mathfrak{B}_{\chi_N} )$ to denote the set of probability measures on $\chi_N$.

The finite-volume gradient Gibbs measure $\gamma_{N,\beta} \in \mathcal{M}_1(\chi_N)$ at inverse temperature $\beta$ is defined as
\begin{align*}
\gamma_{N,\beta}(\de \varphi) = \frac{1}{Z_{N,\beta}} e^{-\beta H_N(\varphi)} \lambda_N(\de \varphi)
\end{align*}
with partition function
\begin{align*}
Z_{N,\beta} = \int_{\chi_N} e^{-\beta H_N(\varphi)} \lambda_N(\de \varphi).
\end{align*}

We implement suitable boundary conditions following the Funaki-Spohn-trick introduced in \cite{FS97}. Given a linear map $F: \R^d \rightarrow \R^m$, we define the Hamiltonian $H_N^F$ on $\Lambda_N$ with the external deformation $F$ by
\begin{align*}
H_N^F(\varphi) = \sum_{x \in \Lambda_N}U\left( (\varphi + F)_{\tau_x(A)} \right).
\end{align*}
Consequently, the finite-volume gradient Gibbs measure $\gamma_{N,\beta}^F$ with deformation $F$ is defined~as
\begin{align*}
\gamma_{N,\beta}^F(\de \varphi) = \frac{1}{Z_{N,\beta}(F)} e^{-\beta H_N^F(\varphi)} \lambda_N(\de\varphi),
\end{align*}
where $Z_{N,\beta}(F)$ is the normalisation constant. A useful generalisation of the partition function with a source term $f \in \mathcal{V}_N$ is given by the generating functional
\begin{align}
Z_{N,\beta}(F,f) = \int_{\chi_N} e^{-\beta H_N^F(\varphi)+(f,\varphi)} \lambda_N(\de \varphi).
\label{eq:gen_func}
\end{align}

We can rewrite the model as \textit{generalized gradient model} as it is done in detail in \cite{ABKM}, Section 2.2. Let $Q_{R_0} = \lbrace 0, \ldots, R_0 \rbrace^d$. We introduce the $m$-dimensional space of shifts
$$
\mathcal{V}_{Q_{R_0}} = \left\lbrace (a, \ldots, a) \in \left( \R^m \right)^{Q_{R_0}}: a \in \R^m \right\rbrace
$$ 
and its orthogonal complement $\mathcal{V}^{\bot}_{Q_{R_0}}$ in $\left( \R^m \right)^{Q_{R_0}}$. Furthermore, let
$$
\mathcal{I}_{R_0} =
\left\lbrace \alpha \in \N_0^d \setminus \lbrace (0,\ldots , 0) \rbrace : |\alpha|_{\infty} \leq R_0 \right\rbrace,
\quad
\text{and}
\quad
\mathcal{G}_{R_0} = \left( \R^m \right)^{\mathcal{I}_{R_0}}.
$$
We will also need a more general index set $\mathcal{I}$ given by
$$
\lbrace e_i \in \R^d: 1 \leq i \leq d \rbrace
\subset \mathcal{I}
\subset \mathcal{I}_{R_0}
$$
and the corresponding vector space $\mathcal{G} = \left( \R^m \right)^{\mathcal{I}}$.
We define the \textit{extended gradient} $D\varphi(x)$ as the vector $\left( \nabla^{\alpha} \varphi(x) \right)_{\alpha \in \mathcal{I}} \in \mathcal{G}$ where $\nabla^{\alpha}\varphi(x) = \prod_{j=1}^d \nabla_j^{\alpha(j)} \varphi(x)$.

Lemma 2.1 in \cite{ABKM} states that for any $U: \mathcal{V}^{\bot}_{Q_{R_0}} \rightarrow \R$ there is $\mathcal{U}: \mathcal{G}_{R_0} \rightarrow \R$ such that $\mathcal{U}(D \psi (0)) = U(\psi)$ for any $\psi \in \mathcal{V}^{\bot}_{Q_{R_0}}$. Moreover, the linear deformation $F: \R^d \rightarrow \R^m$ can be identified with the element $\bar{F}= DF(x) \in \mathcal{G}$ (for any $x \in \R^d$). Thus we get $U(\psi + F) = \mathcal{U}(D \psi(0) + \bar{F})$ for any $\psi \in \mathcal{V}_{Q_{R_0}}^{\bot}$ leading to an alternative expression for the Hamiltonian $H_N^F(\varphi)$,
\begin{align}
H_N^F(\varphi)
= \sum_{x \in \Lambda_N} U \left( \varphi_{\tau_x(A)} + F \right)
= \sum_{x \in \Lambda_N} \mathcal{U}(D \varphi(x) + \bar{F}).
\end{align}
In the following we will use the formulation of the Hamiltonian relying on extended gradients.

\subsection{Main results}

We give improved versions of Theorem 2.9 in \cite{ABKM} (which was firstly proven in \cite{Hil16} with stronger assumptions on the potential $\mathcal{U}$) and Theorem 2.6 in \cite{ABKM}. The improvement consists in the removal of the need for a subsequence $(N_l)_l$.

As in \cite{ABKM}, let $\mathbf{Q}: \mathcal{G} \rightarrow \mathcal{G}$ be a symmetric positive linear operator and $\mathcal{Q}: \mathcal{G} \rightarrow \R$ the corresponding quadratic form $\mathcal{Q}(z) = (z, \mathbf{Q}z)$. Set
\begin{align}
\mathcal{Q}_{\mathcal{U}}(z)
= D^2 \mathcal{U}(0)(z,z).
\end{align}

~\\
We impose the following assumptions on the potential $\mathcal{U}$:
\begin{align}
\begin{cases}
&\mbox{Let } r_0, r_1 \in \N_0, \, \mathcal{U} \in C^{r_0+r_1}(\mathcal{G}), 
\\ & \quad\quad
\text{ and } \omega_0 |z|^2 \leq \mathcal{Q}_{\mathcal{U}}(z) \leq \omega_0^{-1} |z|^2 \text{ for some } \omega_0 \in (0,1). \nonumber
\\
&\mbox{Let } 0 < \omega < \frac{\omega_0}{8}  \mbox{ and suppose that } 
\mathcal{U}(z) - D \mathcal{U}(0)z - \mathcal{U}(0) \geq \omega |z|^2 \text{ for all } z \in \mathcal{G}
\tag{$\star$}
\label{eq:AssumptionsW}
\\
&  \mbox{ and}
\lim_{t \rightarrow \infty}t^{-2} \ln \Psi (t) = 0
\\ & \quad\quad
 \mbox{ where }
\Psi(t) = \sup_{|z|\leq t} \sum_{3 \leq |\alpha| \leq r_0+r_1} \frac{1}{\alpha !} \lvert \partial^{\alpha} \mathcal{U}(z)\rvert.
\nonumber
\end{cases}
\end{align}

~\\
Let $\T^d = \left( \R/\Z \right)^d$ be the continuum torus, $q: \R^{d \times m} \rightarrow \R^{d \times m}$ be symmetric, and $\mathcal{C}^{q}_{\T^d}$ be the inverse of the elliptic partial differential operator $\mathcal{A}^{q}_{\T^d}$,
$$
\mathcal{C}^{q}_{\T^d} = \left( \mathcal{A}^{q}_{\T^d} \right)^{-1},
\quad
\mathcal{A}^{q}_{\T^d}
 = - \sum_{i,j=1}^d \left( \left(\mathbf{Q}_{\mathcal{U}}\right)_{ij} - q_{ij} \right) \partial_j \partial_i,
$$
which acts on the space of all functions $f \in W^{1,2}(\T^d)$ with mean zero.

The following theorem states that the Laplace transform of $\gamma_{N,\beta}^F$ converges to the Laplace transform of the Gaussian free field $\mu_{\mathcal{C}^q_{\T^d}}$ on the continuum torus with covariance $\mathcal{C}^q_{\T^d}$ as the lattice spacing tends to zero in a suitably scaled way. The convergence is not restricted to a subsequence as it is needed in a similar statement in \cite{ABKM}.

\begin{thm}[Scaling limit]\label{Thm:ScalingLimit}
Fix the spatial dimension $d$, the number of components $m$, the range of interaction $R_0$, $\omega_0 \in (0,1)$, $r_0 \geq 3$, $r_1 \geq 0$ and let $\mathcal{U}$ satisfy \eqref{eq:AssumptionsW}.
Then there is $L_0$ such that for all integers $L \geq L_0$ there is $\delta > 0$ and $\beta_0>0$ with the following property.
For all $F \in B_{\delta}(0)$  and $\beta \geq \beta_0$ there is $q \in \R^{(d \times m) \times (d \times m)}_{\mathrm{sym}}$ such that for any $f \in C_c^{\infty}\left(\T^d\right)$ satisfying $\int f = 0$ and $f_N(x)= L^{-N \frac{d-2}{2}}f\left(L^{-N}x\right)$ for $x \in \Lambda_N$,
\begin{align*}
\lim_{N \rightarrow \infty} \mathbb{E}_{\gamma_{N,\beta}^F}(e^{(f_N,\cdot)})
=
\lim_{N \rightarrow \infty} \frac{Z_{N,\beta}(F,f_N)}{Z_{N,\beta}(F,0)}
= e^{\frac{1}{2 \beta}\left(f,\,\mathcal{C}^q_{\T^d} f\right)}.
\end{align*}
\end{thm}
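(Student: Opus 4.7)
The plan is to derive Theorem \ref{Thm:ScalingLimit} as a short consequence of the representation of the generating functional supplied by Theorem \ref{Thm:RepresentationPartitionFunction}. Applying that representation to both numerator and denominator of
$$
\frac{Z_{N,\beta}(F,f_N)}{Z_{N,\beta}(F,0)},
$$
I expect the $f$-independent pieces (the bulk free energy per site and the constant, linear and quadratic-in-$F$ contributions) to appear identically in both and therefore cancel. What should survive is a Gaussian factor coming from completing the square in the source, of the form $\exp\bigl(\tfrac{1}{2\beta}(f_N,\mathcal{C}^{q}_N f_N)\bigr)$, multiplied by a ratio $(1+K_N(F,f_N))/(1+K_N(F,0))$, where $\mathcal{C}^{q}_N$ denotes the discrete Green's function on $\chi_N$ associated to the effective quadratic form with coefficients $(\mathbf{Q}_{\mathcal{U}})_{ij}-q_{ij}$, and $K_N$ is the renormalisation remainder. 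The crucial improvement over \cite{ABKM} and \cite{Hil16} is that $q$ comes from the infinite-volume flow and is therefore a single matrix rather than an accumulation point of a subsequence.

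Next I would carry out the standard lattice-to-continuum analysis of the Gaussian factor. Because $f\in C_c^\infty(\T^d)$ with $\int f=0$ and $f_N(x)=L^{-N(d-2)/2}f(L^{-N}x)$, a Fourier decomposition on $\Lambda_N$, together with the ellipticity of $\mathcal{A}^{q}_N$ (which is guaranteed by \eqref{eq:AssumptionsW} once $q$ is small), yields
$$
\lim_{N\to\infty}\frac{1}{\beta}(f_N,\mathcal{C}^{q}_N f_N)_{\Lambda_N}=\frac{1}{\beta}(f,\mathcal{C}^{q}_{\T^d}f)_{\T^d}.
$$
The mean-zero assumption on $f$ disposes of the non-invertibility of $\mathcal{A}^{q}_{\T^d}$ and $\mathcal{A}^{q}_N$ on the constants, while the smoothness of $f$ provides the decay of Fourier coefficients needed to pass from Riemann sums to the continuum quadratic form.

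The crux is the third step: proving that $K_N(F,f_N)\to 0$ and $K_N(F,0)\to 0$ as $N\to\infty$, without restricting to a subsequence. This is precisely the reason to extend the renormalisation flow to infinite volume in the spirit of \cite{BS5}. In \cite{ABKM} the finite-volume flow leaves a residual error at the last RG scale whose smallness is available only along an accumulation subsequence of coupling parameters; once the flow is promoted to the infinite lattice, the renormalised coefficients converge to honest limits and the remainder admits uniform-in-$N$ estimates in the norms supplied by Theorem \ref{Thm:RepresentationPartitionFunction}. I would verify that the source-dependent contributions to $K_N(F,f_N)$, after absorbing the rescaling $L^{-N(d-2)/2}$ and the derivatives of the smooth test function $f$ into the relevant seminorms on test functions, indeed decay in $N$ at a rate that beats the growth introduced by the scaling of $f_N$. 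Once this uniform decay is in place, combining the three steps proves the stated limit.
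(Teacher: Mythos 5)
Your proposal follows essentially the same route as the paper: apply the representation of Theorem \ref{Thm:RepresentationPartitionFunction} to numerator and denominator, let the $f$-independent factors cancel, pass the discrete quadratic form $(f_N,\mathcal{C}^q_{\Lambda_N}f_N)$ to its continuum limit, and use the uniform-in-$N$ bound $|Z_N-1|\leq C\eta^N$ (which the key theorem already supplies, so you need not re-derive the decay of the remainder). The only small point you gloss over is that $f_N$ need not sum to zero on the lattice even though $\int f=0$, so one first replaces $f_N$ by $\tilde f_N=f_N-c_N\in\chi_N$, which leaves the expectation unchanged since constants pair to zero against fields in $\chi_N$.
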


~\\
Let us denote
\begin{align}
W_{N,\beta}(F) = - \frac{1}{\beta L^{Nd}} \ln Z_{N,\beta}(F,0).
\end{align}
The \textit{free energy} can be written as
\begin{align}
W_{\beta}(F) = \lim_{N \rightarrow \infty} W_{N,\beta}(F).
\end{align}

In the following Theorem we show strict convexity of $W_{\beta}$ for small deformations $F$ and small temperature $\beta^{-1}$. The convergence $W_{N,\beta}(F) \rightarrow W_{\beta}(F)$ is not restricted to a subsequence as it is done in a similar statement in \cite{ABKM}.

\begin{thm}[Strict convexity of free energy]\label{Thm:SurfaceTension}
Fix the spatial dimension $d$, the number of components $m$, the range of interaction $R_0$, $\omega_0 \in (0,1)$, $r_0 \geq 3$, $r_1 \geq 2$ and let
 $\mathcal{U}$ satisfy \eqref{eq:AssumptionsW}.
Then there is $L_0$ such that for all integers $L \geq L_0$ there is $\delta > 0$ and $\beta_0$ with the following property.
For all $F \in B_{\delta}(0)$ and $\beta \geq \beta_0$ there is $q \in \R^{(d \times m) \times ( d \times m)}_{\text{sym}}$ such that
for any $N$ the free energy $W_{N,\beta}:B_{\delta}(0)\rightarrow \R$ is in $C^{r_1}$ and uniformly convex. Moreover, the limit $W_{\beta}(F)$ is uniformly convex in $B_{\delta}(0)$.
\end{thm}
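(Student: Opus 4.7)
The plan is to derive both the uniform convexity and the $C^{r_1}$-regularity of $W_{N,\beta}$ from the representation of the partition function provided by Theorem \ref{Thm:RepresentationPartitionFunction}. Such a representation should yield a decomposition
\[
W_{N,\beta}(F) = \tfrac{1}{2} \langle \bar F, (\mathbf{Q}_\mathcal{U} - q) \bar F \rangle + a_N + E_N(F),
\]
valid for $F \in B_{\delta}(0)$, where $\bar F \in \mathcal{G}$ is the extended gradient of $F$, $q$ is the small symmetric renormalisation of the quadratic form, $a_N$ is $F$-independent, and $E_N \in C^{r_1}(B_{\delta}(0))$ is an error. The crucial output of the infinite-volume RG flow (beyond the finite-volume analysis of \cite{ABKM}) is that $q$ and the pointwise limit $E_\infty(F)=\lim_{N\to\infty}E_N(F)$ are identified for the full sequence $N$, and that $\|D^2 E_N\|_{C^0(B_{\delta}(0))}$ is small uniformly in $N$.

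Given such a decomposition, the finite-$N$ claim reduces to a Hessian computation. The bilinear form $z \mapsto \langle z, (\mathbf{Q}_\mathcal{U}-q)z\rangle$ is smooth, and since $\mathcal{Q}_\mathcal{U}(z) \geq \omega_0 |z|^2$ by \eqref{eq:AssumptionsW}, the norm of $q$ can be made smaller than $\omega_0/4$ by taking $\delta$ small and $\beta_0$ large. A further tightening makes $\|D^2 E_N(F)\|$ at most $\omega_0/8$ for every $F \in B_{\delta}(0)$ and every $N$. Combining these,
\[
D^2 W_{N,\beta}(F)(z,z) \geq \tfrac{\omega_0}{2}\, |z|^2
\]
for all $F \in B_{\delta}(0)$, $z \in \mathcal{G}$ and all $N$. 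The hypothesis $r_1 \geq 2$, strictly stronger than the one needed in Theorem \ref{Thm:ScalingLimit}, is precisely what guarantees that the Hessian of $E_N$ is continuous in $F$ and is controlled through Theorem \ref{Thm:RepresentationPartitionFunction}.

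For the infinite-volume free energy, pointwise convergence $W_{N,\beta}(F) \to W_\beta(F)$ on $B_{\delta}(0)$ is delivered by the whole-sequence convergence of $E_N$ supplied by the infinite-volume flow. For fixed $F_1, F_2 \in B_{\delta}(0)$ with $\tfrac{1}{2}(F_1+F_2) \in B_{\delta}(0)$, the finite-$N$ uniform convexity yields
\[
W_{N,\beta}(F_1) + W_{N,\beta}(F_2) - 2 W_{N,\beta}\bigl(\tfrac{F_1+F_2}{2}\bigr) \geq \tfrac{\omega_0}{4}\, |F_1-F_2|^2,
\]
and letting $N \to \infty$ preserves the inequality, so $W_\beta$ is uniformly convex on $B_{\delta}(0)$.

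The main obstacle, exactly as in Theorem \ref{Thm:ScalingLimit}, is the first step: obtaining the quantitative decomposition with an $N$-uniform $C^{r_1}$ bound on $E_N$ and a full-sequence $N\to\infty$ limit of the RG output. This is the content of Theorem \ref{Thm:RepresentationPartitionFunction}; its proof extends the finite-volume RG of \cite{ABKM} to an infinite-volume flow following \cite{BS5}, the latter being what eliminates the subsequence in \cite{ABKM}. Once Theorem \ref{Thm:RepresentationPartitionFunction} is in hand, the present theorem follows from the elementary Hessian and limiting arguments sketched above.
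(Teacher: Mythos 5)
Your proposal is correct in substance and follows the same overall logic as the paper: decompose $W_{N,\beta}$ via Theorem \ref{Thm:RepresentationPartitionFunction}, bound the Hessian of the explicit part from below, make the Hessian of the remainder small, conclude finite-$N$ uniform convexity, and pass to the pointwise limit (which preserves uniform convexity). The one place where you diverge is the form of the decomposition. Theorem \ref{Thm:RepresentationPartitionFunction} does not deliver a renormalised quadratic form $\tfrac12\langle \bar F,(\mathbf{Q}_{\mathcal{U}}-q)\bar F\rangle$ as the leading term: the paper's decomposition is $W_{N,\beta}(F)=\mathcal{U}(\bar F)+(\text{const})+\mathcal{W}_{N,\beta}(F)$, where the full (non-convex) potential $\mathcal{U}(\bar F)$ is the leading term, with $D^2\mathcal{U}(\bar F)(z,z)\geq\tfrac{\omega_0}{2}|z|^2$ for small $\delta$ by continuity of $D^2\mathcal{U}$ at $0$, and where all the RG output ($e(\mathcal{K})$, $\ln(Z_N^{(q)}/Z_N^{(0)})$, $\ln Z_N(\mathcal{K},0)$) sits in $\mathcal{W}_{N,\beta}$ carrying an explicit prefactor $\beta^{-1}$. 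In particular, the $q$-dependent piece of the representation is a log-ratio of Gaussian normalisations depending on $F$ through $q(\mathcal{K}_{F,\beta,\mathcal{U}})$, not a quadratic form in $\bar F$, and the smallness of $D^2 E_N$ cannot come from smallness of the RG output itself: its $F$-Hessian is only bounded uniformly in $N$ by some constant $\Xi$ (via the $C^{r_1}$ smoothness of $e$, $q$, $Z_N$ in $\mathcal{K}$, the $C^{r_1}$ regularity of $F\mapsto\mathcal{K}_{F,\beta,\mathcal{U}}$ from Lemma \ref{Lemma:From_K_to_V}, and the chain rule), and one must choose $\beta\geq C\Xi/\omega_0$ to beat it. Your phrase ``taking $\delta$ small and $\beta_0$ large'' invokes exactly the right knobs, so the argument closes; but as written the posited decomposition is not what the key theorem provides, and a referee would ask you to either derive your form from it (absorbing $\mathcal{U}(\bar F)-\tfrac12\mathcal{Q}_{\mathcal{U}}(\bar F)$ and the log-determinant into $E_N$) or work directly with the paper's grouping. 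The final limiting step is identical in both arguments.
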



\begin{remark} \label{rem:extension_main_thm}

One can state the assumptions \eqref{eq:AssumptionsW} on the potential $\mathcal{U}$ in a more general form allowing a bigger class. We will comment on this again in the next section, see Lemma~\ref{Lemma:From_K_to_V} and Remark \ref{rem:weak_ass_V}. For the sake of simplicity we decided to state the main results with assumptions \eqref{eq:AssumptionsW}.

\end{remark}

\subsection{Key theorem and proofs of the main results}

The goal of this section is the formulation of a technical key theorem (Theorem \ref{Thm:RepresentationPartitionFunction}), which states a powerful representation of the generating functional of the model. The proof of this theorem is obtained by a subtle renormalisation group analysis which is an extension of the corresponding proof in \cite{ABKM} and  will be carried out in Section \ref{sec:RG-analysis_BulkFlow}.

\subsubsection{Reformulation of $Z_{N,\beta}(F,f)$}

Similar to \cite{ABKM}, we define $\overline{\mathcal{U}}(z,F)$ by
\begin{align*}
\overline{\mathcal{U}}(z,F) = \mathcal{U}(z+\bar{F}) -  \mathcal{U}(\bar{F}) - D  \mathcal{U}(\bar{F})z
- \frac{\mathcal{Q}_{\mathcal{U}}(z)}{2}.
\end{align*}
We can write the generating functional $Z_{N,\beta}(F,f)$ from \eqref{eq:gen_func} in the form
\begin{align*}
Z_{N,\beta}(F,f) &=
e^{- \beta L^{Nd}  \mathcal{U}(\bar{F})} 
\\ & \quad \times
\int_{\chi_N} e^{(f,\varphi)} e^{- \beta \sum_{x \in \Lambda_N}
\left(
 \overline{\mathcal{U}}(D\varphi(x), F) + \frac{\mathcal{Q}_{ \mathcal{U}}(D\varphi(x))}{2}
\right)
}
 \lambda_N(\de \varphi).
\end{align*}
Let
\begin{align}
\mu_{\beta}(\de\varphi) = 
\frac{1}{Z_{N,\beta}^{\mathcal{Q}_ {\mathcal{U}}}} 
e^{-\frac{\beta}{2} \sum_{x \in \Lambda_N}
\mathcal{Q}_{ \mathcal{U}}(D\varphi(x))
} \lambda_N(\de\varphi)
\end{align}
be the Gaussian measure at inverse temperature $\beta$ with corresponding normalisation factor
\begin{align}
Z_{N,\beta}^{\mathcal{Q}_{ \mathcal{U}}} 
= \int_{\chi_N} e^{-\frac{\beta}{2} \sum_{x \in \Lambda_N}
\mathcal{Q}_{ \mathcal{U}}(D\varphi(x))
} \lambda_N(\de\varphi).
\label{eq:10}
\end{align}
Consequently,
\begin{align*}
Z_{N,\beta}(F,f)
=
e^{- \beta L^{Nd}  \mathcal{U}(\bar{F}) } 
Z_{N,\beta}^{\mathcal{Q}_{ \mathcal{U}}}
\int_{\chi_N} 
e^{(f,\varphi)}
 e^{- \beta \sum_{x \in \Lambda_N}
 \overline{ \mathcal{U}} \left(
\frac{D \varphi(x)}{\sqrt{\beta}}, F
\right)
} \mu_{\beta}(\de\varphi).
\end{align*}
Now we rescale the field by $\sqrt{\beta}$ and introduce the \textit{Mayer function} $\mathcal{K}_{F,\beta, \mathcal{U}}$,
\begin{align}
\mathcal{K}_{F,\beta, \mathcal{U}}(z) = 
e^{-\beta 
\overline{ \mathcal{U}}\left( \frac{z}{\sqrt{\beta}}, F \right)
} 
- 1.
\label{eq:zero_perturbation_K}
\end{align}
We can express the partition function $Z_{N,\beta}(F,f)$ in terms of the polymer expansion:
\begin{align*}
Z_{N,\beta}(F,f)
&=
e^{- \beta L^{Nd}  \mathcal{U}(\bar{F}) } Z_{N,\beta}^{\mathcal{Q}_{ \mathcal{U}}}
\int_{\chi_N}
 e^{\left(f,\frac{\varphi}{\sqrt{\beta}}\right)}
e^{- \beta \sum_{x \in \Lambda_N} \overline{ \mathcal{U}} \left(\frac{D \varphi(x)}{\sqrt{\beta}}, F \right)} \mu_1(\de\varphi)
\\
&=
e^{- \beta L^{Nd}  \mathcal{U}(\bar{F})} Z_{N,\beta}^{\mathcal{Q}_{ \mathcal{U}}}
\int_{\chi_N} 
e^{\left(f,\frac{\varphi}{\sqrt{\beta}}\right)}
\prod_{x \in \Lambda_N} \left( 1 + \mathcal{K}_{F,\beta, \mathcal{U}}(D \varphi(x)) \right) \mu_1(\de\varphi)
\\
&=
e^{- \beta L^{Nd}  \mathcal{U}(\bar{F})} Z_{N,\beta}^{\mathcal{Q}_{ \mathcal{U}}}
\int_{\chi_N} 
e^{\left(f,\frac{\varphi}{\sqrt{\beta}}\right)}
\sum_{X \subset \Lambda_N} \prod_{x \in X} 
\mathcal{K}_{F,\beta, \mathcal{U}}(D \varphi(x)) \mu_1(\de\varphi).
\end{align*}
The integral in the last expression gives the perturbative contribution
\begin{align}
\mathcal{Z}_{N,\beta}\left( F,\frac{f}{\sqrt{\beta}} \right) =
\int_{\chi_N} e^{\left(\frac{f}{\sqrt{\beta}},\varphi\right)}
\sum_{X \subset \Lambda_N} \prod_{x \in X} \mathcal{K}_{F,\beta, \mathcal{U}}(D \varphi(x)) \mu_1(\de\varphi).
\label{eq:perturbative_part}
\end{align}
In summary, we obtain the representation
\begin{align*}
Z_{N,\beta}(F,f)
=
e^{- \beta L^{Nd}  \mathcal{U}(\bar{F})} Z_{N,\beta}^{\mathcal{Q}_{ \mathcal{U}}}
\,
\mathcal{Z}_{N,\beta}\left( F,\frac{f}{\sqrt{\beta}} \right).
\end{align*}

We introduce a space for the perturbation $\mathcal{K}_{F,\beta,\mathcal{U}}$.
Let $\zeta \in (0,1)$. For $r_0 \geq 3$ we define the Banach space $\mathbf{E}_{\zeta,\mathcal{Q}_{\mathcal{U}}}$ consisting of functions $\mathcal{K}:\mathcal{G} \rightarrow\R$ such that the following norm is finite
\begin{align*}
\Vert \mathcal{K} \Vert_{\zeta, \mathcal{Q}} 
= \sup_{z \in \R^d} \sum_{|\alpha| \leq r_0} \frac{1}{\alpha !} \vert \partial^{\alpha}\mathcal{K}(z) \vert e^{-\frac{1}{2}(1- \zeta)  \mathcal{Q}_{\mathcal{U}}(z)
}.
\end{align*}

Let us generalise the expression for the perturbative part of the partition function in \eqref{eq:perturbative_part} to general positive definite quadratic form $ \mathcal{Q}: \mathcal{G} \rightarrow \R$ and to arbitrary $\mathcal{K} \in \mathbf{E}_{\zeta, \mathcal{Q}}$ from the rather explicit $\mathcal{K}_{F,\beta, \mathcal{U}}$ in \eqref{eq:zero_perturbation_K}. Namely, let
\begin{align}
\mathcal{Z}_{N}\left( \mathcal{K},  \mathcal{Q},f \right) =
\int_{\chi_N} e^{\left(f,\varphi\right)}
\sum_{X \subset \Lambda_N} \prod_{x \in X} \mathcal{K}(D \varphi(x)) \mu_{ \mathcal{Q}}(\de\varphi)
\label{eq:expression_PF}
\end{align}
with the Gaussian measure
\begin{align*}
\mu_{ \mathcal{Q}}(\de\varphi)
= \frac{1}{Z_{N,\beta}^{\mathcal{Q}}}
e^{-\frac{1}{2} \sum_{x \in \Lambda_N} \mathcal{Q}(D\varphi(x))} \lambda_N(\de\varphi).
\end{align*}
In the next subsection we state a representation for \eqref{eq:expression_PF} under some conditions on $\mathcal{K}$ and $\mathcal{Q}$ and conclude the proofs for Theorems \ref{Thm:ScalingLimit} and \ref{Thm:SurfaceTension}.

\subsubsection{Representation of $Z_{N,\beta}(F,f)$ and conclusions}

Let us introduce $\mathcal{C}^q_{\Lambda_N} = \left( \mathcal{A}_{\Lambda_N}^{q} \right)^{-1}$ for a symmetric map $q: \R^{d \times m} \rightarrow \R^{d \times m}$, where
$$
\mathcal{A}^{q}_{\Lambda_N}:
\chi_N \rightarrow \chi_N,
\quad
\mathcal{A}^{q}_{\Lambda_N}
 = \sum_{i,j=1}^d \left( \mathbf{Q}_{ij} - q_{ij} \right) \nabla_j^* \nabla_i.
$$
We use $\Vert q \Vert$ to denote the operator norm of $q$ viewed as an operator on $\R^{d \times m}$ equipped with the $l_2$ metric.
If $q$ is small, $\Vert q \Vert \leq \frac{1}{2}$, we can define a Gaussian measure $\mu_{\mathcal{C}^{q}_{\Lambda_N}}$ on $\chi_N$ with covariance $\mathcal{C}^{q}_{\Lambda_N}$,
$$
\mu_{\mathcal{C}^{q}_{\Lambda_N}} (\de \varphi) = \frac{1}{Z_N^{(q)}} e^{-\frac{1}{2}\left( \varphi, \mathcal{A}^{q}_{\Lambda_N} \varphi \right)} \de \lambda_N(\varphi).
$$

~\\
The following theorem states that the perturbative contribution $\mathcal{Z}_{N}\left( \mathcal{K}, \mathcal{Q},f \right)$ in \eqref{eq:expression_PF} can be written as the product of a rather explicit term and a term which is almost $1$, the error being exponentially decreasing in $N$ if $\mathcal{K}$ is small enough. This result is the key ingredient for the proofs of Theorem \ref{Thm:ScalingLimit} and Theorem \ref{Thm:SurfaceTension}. The proof is a subtle renormalisation group~(RG) analysis established in \cite{AKM16} and reviewed and extended in Section~\ref{sec:RG-analysis_BulkFlow}.

\begin{thm}[Representation of the partition function]\label{Thm:RepresentationPartitionFunction}
Fix $\zeta, \eta \in (0,1)$ and let $\mathcal{Q}$ be a quadratic form on $\mathcal{G}$ satisfying $\omega_0 |z|^2 \leq \mathcal{Q}(z) \leq \omega_0^{-1}|z|^2$. There is $L_0$ such that for all integers $L \geq L_0$ there is $\epsilon_0 > 0$ with the following properties. There exist smooth maps
$$
e: B_{\epsilon_0}(0) \subset \mathbf{E}_{\zeta} \rightarrow \R,
\quad
q: B_{\epsilon_0}(0) \subset \mathbf{E}_{\zeta} \rightarrow \R^{(d \times m) \times (d \times m)}_{\mathrm{sym}},
$$
and, for any $N \in \N$, a smooth map $Z_N: B_{\epsilon_0}(0) \times \chi_N \rightarrow \R$ (with bounds on the derivatives which are uniform in $N$)
 such that for any $f \in \chi_N$ and $\mathcal{K} \in B_{\epsilon_0}(0)$
 the following representation holds:
\begin{align*}
\mathcal{Z}_{N}(\mathcal{K}, \mathcal{Q},f)
=  e^{\frac{1}{2}\left(f,\mathcal{C}_{\Lambda_N}^{q(\mathcal{K})}f\right)}
\frac{Z_N^{(q(\mathcal{K}))}}{Z_N^{(0)}}
e^{- L^{Nd} e(\mathcal{K}) }
Z_N\left(\mathcal{K},\mathcal{C}_{\Lambda_N}^{q(\mathcal{K})}f\right).
\end{align*}
If $f(x) = g_N(x)-c_N$, $g_N(x)=L^{-N\frac{d+2}{2}}g(L^{-N}x)$ for $g \in C_c^{\infty}(\T^d)$ with $\int g =0$, $c_N$ such that $\sum_{x \in \T_N}f(x) = 0$, then there is a constant $C$ which is independent of $N$ such that the remainder $Z_N(\mathcal{K})$ satisfies the estimate
\begin{align*}
 \left\vert
Z_N \left(\mathcal{K}, \mathcal{C}_{\Lambda_N}^{q(\mathcal{K})}f\right) - 1
\right\vert
\leq C \eta^N.
\end{align*}
\end{thm}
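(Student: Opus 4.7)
The overall plan is to follow the renormalisation group (RG) framework of Brydges--Slade as adapted to gradient models in \cite{AKM16} and \cite{ABKM}, and to implement the infinite-volume extension of \cite{BS5}. The strategy is to integrate out the Gaussian fluctuations scale by scale, tracking at each scale three pieces of data: (a) a renormalisation of the quadratic form giving rise to the matrix $q(\mathcal{K})$, (b) an energy-per-site renormalisation giving $e(\mathcal{K})$, and (c) an irrelevant polymer activity $K^{(j)}$ that must be made to contract exponentially.

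First I would reabsorb the source $f$. Completing the square against the eventual renormalised covariance $\mathcal{C}^{q(\mathcal{K})}_{\Lambda_N}$, i.e.\ shifting $\varphi \mapsto \varphi + \mathcal{C}^{q(\mathcal{K})}_{\Lambda_N} f$, changes the reference Gaussian from $\mu_{\mathcal{Q}}$ to $\mu_{\mathcal{C}^{q(\mathcal{K})}_{\Lambda_N}}$ and naturally produces the explicit prefactor $e^{\frac{1}{2}(f,\,\mathcal{C}^{q(\mathcal{K})}_{\Lambda_N}f)}\,Z_N^{(q(\mathcal{K}))}/Z_N^{(0)}$ in the theorem. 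The residual polymer integral is then handled by a finite-range decomposition of $\mathcal{C}^{q(\mathcal{K})}_{\Lambda_N}$ into fluctuation covariances $C_{j+1}$ supported on $L^j$-blocks, as in Bauerschmidt's construction. At each scale $j$, I would apply the RG step $(H^{(j)},K^{(j)}) \mapsto (H^{(j+1)},K^{(j+1)})$ from \cite{ABKM}, where the relevant parts of the activity are extracted by projecting onto constants and quadratic monomials in the extended gradient, producing increments to $e$ and $q$, and the remainder $K^{(j)}$ contracts in the relevant norms with a factor $\rho<1$ gained from rescaling the irrelevant coordinates.

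The novelty, and the main technical point, is to run this flow in infinite volume on $\Z^d$ rather than on the torus. Following \cite{BS5}, I would set up the bulk flow as a discrete dynamical system on a Banach space splitting into relevant and irrelevant coordinates; the condition that $K^{(j)}_{\mathrm{bulk}}$ stays bounded for all $j \geq 0$ determines $e(\mathcal{K})$ and $q(\mathcal{K})$ uniquely via an implicit function theorem argument, and smoothly in $\mathcal{K}$ on $B_{\epsilon_0}(0)\subset \mathbf{E}_\zeta$. With these $N$-independent choices of $e$ and $q$ plugged in as the second-order coefficients of the Hamiltonian, the finite-volume flow on $\T_N$ is compared to the bulk flow: as long as polymers remain strictly smaller than $\T_N$, i.e.\ for scales $j\le N-\mathrm{const}$, the two flows coincide, and only at the last few scales the torus geometry contributes. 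At that point the irrelevant activity has already been contracted by roughly $\rho^N$, which combined with the reblocking steps and the $O(\eta^N)$-smallness of the test-function contribution $\mathcal{C}^{q(\mathcal{K})}_{\Lambda_N}f$ for $f = g_N - c_N$ yields the bound $|Z_N(\mathcal{K},\mathcal{C}^{q(\mathcal{K})}_{\Lambda_N}f)-1| \le C\eta^N$ after choosing $L$ sufficiently large so that $\rho \leq \eta$.

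The main obstacle will be step three, namely carefully setting up norms on the bulk polymer activities so that (i) the bulk flow is contractive, (ii) its fixed-point data $e$ and $q$ depend smoothly on $\mathcal{K}$, and (iii) the difference between the torus flow and the truncated bulk flow is controlled uniformly by $L^{-\alpha N}$ rather than merely along a subsequence. This requires splitting polymer activities into bulk and boundary/wrap-around parts and bounding the latter using the finite range of $C_j$; additional care is needed because the source $f$ breaks translation invariance, so the corresponding estimate on $\mathcal{C}^{q(\mathcal{K})}_{\Lambda_N}f$ and its derivatives must be carried through the flow. Once this is done, the representation formula and the $O(\eta^N)$ remainder estimate for $f$ of the stated scaling form both follow.
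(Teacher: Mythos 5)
Your proposal follows essentially the same route as the paper: complete the square to extract the explicit Gaussian prefactors, run the \cite{ABKM} finite-range RG flow, construct an infinite-volume bulk flow on $\Z^d$ whose fixed point (via the implicit function theorem) determines the $N$-independent maps $e(\mathcal{K})$ and $q(\mathcal{K})$, and then transfer the contraction estimates back to the torus flow, with the last-scale data $(H_N,K_N)\in\mathbb{D}_N$ yielding $|Z_N-1|\le C\eta^N$. One small correction: the $\eta^N$ decay comes entirely from the bounds $\Vert H_N\Vert_{N,0}\le\bar\epsilon\eta^N$ and $\Vert K_N\Vert_N^{(A)}\le\bar\epsilon\eta^{2N}$, while the weights evaluated at $\mathcal{C}^{q(\mathcal{K})}f$ are merely bounded uniformly in $N$ (not $O(\eta^N)$-small), and the torus/bulk comparison must be done polymer-by-polymer at every scale (large-diameter polymers exist at all scales) using the improved derivative bound on $\mathbf{S}_k$ restricted to such polymers, rather than only at the last few scales.
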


Notice that the condition on $f$ includes the case $f \equiv 0$.

\begin{remark}
This statement is similar to Theorem 11.1 in \cite{ABKM} with the key difference that in \cite{ABKM} the quantities $e(\mathcal{K})$ and $q(\mathcal{K})$ depend on the size of the torus, i.e., on $N$, and here they are independent of $N$. This improvement is obtained by introducing a global flow (see Section \ref{subsection:Inf_vol_bulk_flow}). As a consequence, there is no subsequence needed in Theorems \ref{Thm:ScalingLimit} and \ref{Thm:SurfaceTension}.

\end{remark}

Proposition 2.4 in \cite{ABKM} provides conditions on $\mathcal{U}$ such that $\mathcal{K}_{F,\beta,\mathcal{U}} \in B_{\rho}(0) \subset \mathbf{E}_{\zeta}$ for any $\rho>0$ is satisfied. We cite the proposition in the following lemma.

\begin{lemma}
\label{Lemma:From_K_to_V}
Let $r_0 \geq 3$ and $r_1 \geq 0$ be integers and assume that $\mathcal{U}$ satisfies \eqref{eq:AssumptionsW}.
Then there exist $\tilde{\zeta}$, $\delta_0 > 0$, $C_1$ and $\theta > 0$ such that for all $\delta \in \left(0,\delta_0\right]$ and for all $\beta \geq 1$ the map 
$$
B_{\delta}(0) \ni F \mapsto \mathcal{K}_{F,\beta,\mathcal{U}} \in \mathbf{E}_{\tilde{\zeta},\mathcal{Q}_{\mathcal{U}}}
$$
is $C^{r_1}$ and satisfies
\begin{align}
&\Vert \mathcal{K}_{F,\beta,\mathcal{U}} \Vert_{\tilde{\zeta},\mathcal{Q}_{\mathcal{U}}}
\leq
C_1 \left( \delta + \beta^{-\frac{1}{2}} \right)
\quad \text{and} \quad
\sum_{|\gamma|\leq r_1} \frac{1}{\gamma !} \Vert \partial_u^{\gamma} \mathcal{K}_{F,\beta,\mathcal{U}} \Vert_{\tilde{\zeta},\mathcal{Q}_{\mathcal{U}}} \leq \theta.
\label{eq:Bound_K}
\end{align}
In particular, given $\rho>0$, there exist $\delta > 0$ and $\beta_0 \geq 1$ such that for all $\beta \geq \beta_0$ and all $F \in B_{\delta}(0)$ we have
$$
\Vert \mathcal{K}_{F,\beta,\mathcal{U}} \Vert_{\tilde{\zeta},\mathcal{Q}_{\mathcal{U}}} \leq \rho
$$ 
and the bound on the derivatives in \eqref{eq:Bound_K} holds.
\end{lemma}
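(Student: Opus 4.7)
The starting point is the explicit formula $\mathcal{K}_{F,\beta,\mathcal{U}}(z) = e^{-\beta\,\overline{\mathcal{U}}(z/\sqrt{\beta},\,F)}-1$. The weighted norm $\Vert\cdot\Vert_{\tilde\zeta,\mathcal{Q}_{\mathcal{U}}}$ demands pointwise control of $|\partial^{\alpha}\mathcal{K}(z)|\,e^{-\frac{1}{2}(1-\tilde\zeta)\mathcal{Q}_{\mathcal{U}}(z)}$ for $|\alpha|\le r_0$, uniformly in $z$, together with an analogous bound after differentiating $r_1$ times in $F$. I would reduce the entire lemma to two pointwise estimates on $\beta\,\overline{\mathcal{U}}(z/\sqrt\beta,F)$: a \emph{smallness bound} in the near-origin regime (quantitative in $\delta$ and $\beta^{-1/2}$), and an \emph{absorbing bound at infinity} that is swallowed by the Gaussian weight.

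For the smallness bound I would Taylor expand $\overline{\mathcal{U}}(\cdot,F)$ at $w=0$: by construction the constant, linear, and $F=0$ quadratic contributions cancel, leaving
\begin{align*}
\beta\,\overline{\mathcal{U}}\!\left(\tfrac{z}{\sqrt{\beta}},F\right)
= \tfrac{1}{2}\bigl[D^{2}\mathcal{U}(\bar F)-D^{2}\mathcal{U}(0)\bigr](z,z)
+\sum_{k=3}^{r_0+r_1}\beta^{1-k/2}\,\tfrac{1}{k!}D^{k}\mathcal{U}(\bar F)(z)^{k}+\text{Taylor remainder}.
\end{align*}
The quadratic discrepancy is at most $C|F|\,|z|^{2}\le C\delta|z|^{2}$ by smoothness of $D^{2}\mathcal{U}$ at $0$, while each higher-order term is bounded via the hypothesis $\Psi(t)=o(t^{2})$ from \eqref{eq:AssumptionsW} times the explicit $\beta^{-(k-2)/2}$ prefactor, delivering a uniform estimate of order $\delta+\beta^{-1/2}$ on moderate $z$.

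For the absorbing bound at infinity, the uniform lower bound $\mathcal{U}(w)-D\mathcal{U}(0)w-\mathcal{U}(0)\ge \omega|w|^{2}$ from \eqref{eq:AssumptionsW} combined with smoothness of $\mathcal{U}$ near $0$ and smallness of $|\bar F|$ yields a shifted variant $\mathcal{U}(w+\bar F)-D\mathcal{U}(\bar F)w-\mathcal{U}(\bar F)\ge \tilde\omega|w|^{2}$ with $\tilde\omega$ close to $\omega$. Substituting $w=z/\sqrt{\beta}$ and multiplying by $\beta$,
\begin{align*}
-\beta\,\overline{\mathcal{U}}\!\left(\tfrac{z}{\sqrt{\beta}},F\right)
\;\le\; \tfrac{1}{2}\mathcal{Q}_{\mathcal{U}}(z)-\tilde\omega|z|^{2}
\;\le\; \tfrac{1}{2}(1-\tilde\zeta)\mathcal{Q}_{\mathcal{U}}(z),
\end{align*}
where the last step uses $\mathcal{Q}_{\mathcal{U}}(z)\le\omega_{0}^{-1}|z|^{2}$ and chooses $\tilde\zeta$ so small that $\tfrac{\tilde\zeta}{2}\mathcal{Q}_{\mathcal{U}}(z)\le\tilde\omega|z|^{2}$; this is feasible precisely because $\omega<\omega_{0}/8$. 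Combining both regimes, $|e^{x}-1|\le |x|e^{|x|}$ on moderate $z$ and the absorbing bound on the tail together give $\Vert\mathcal{K}_{F,\beta,\mathcal{U}}\Vert_{\tilde\zeta,\mathcal{Q}_{\mathcal{U}}}\le C_{1}(\delta+\beta^{-1/2})$. Derivatives in $z$ are handled by Fa\`a di Bruno, each differentiation gaining an extra $\beta^{-1/2}$ chain-rule factor; derivatives in $F$ up to order $r_{1}$ are admissible because $\mathcal{U}\in C^{r_{0}+r_{1}}$, and yield the $\beta$-independent constant $\theta$.

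The main obstacle is the absorbing bound at infinity: it must hold \emph{globally} in $z$ and uniformly in the allowed $F$, \emph{simultaneously} with a strictly positive $\tilde\zeta$ left over in the Gaussian weight. This is exactly where the budget $\omega<\omega_{0}/8$ in \eqref{eq:AssumptionsW} is essential. A secondary, more mechanical difficulty is promoting the lower bound from the anchor point $0$ to the shifted point $\bar F$: this relies on smallness of $\delta$ together with the $\Psi$-control on higher derivatives in order to close the gap between $\omega$ and $\tilde\omega$.
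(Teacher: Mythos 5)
The paper itself gives no proof of this lemma: it is quoted verbatim from Proposition 2.4 of \cite{ABKM} (the text immediately above the statement says ``We cite the proposition in the following lemma''). Your two-regime strategy --- Taylor expansion of $\beta\,\overline{\mathcal{U}}(z/\sqrt\beta,F)$ near the origin to extract the $O(\delta+\beta^{-1/2})$ smallness, plus a global lower bound on the shifted potential to make $e^{-\beta\overline{\mathcal{U}}}$ absorbable by the weight $e^{-\frac12(1-\tilde\zeta)\mathcal{Q}_{\mathcal{U}}(z)}$ --- is exactly the structure of that proof, and the reduction is sound.

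Two points deserve correction. First, you misquote the hypothesis: \eqref{eq:AssumptionsW} assumes $\lim_{t\to\infty}t^{-2}\ln\Psi(t)=0$, i.e.\ $\Psi(t)=e^{o(t^2)}$, not $\Psi(t)=o(t^2)$. This is not cosmetic: the derivatives of order $\geq 3$ may grow like $e^{\epsilon t^2}$ for every $\epsilon>0$, and the actual work in the tail regime is to check that $\Psi(|z|/\sqrt\beta)\le\Psi(|z|)$ (using $\beta\ge1$) is dominated by the strictly positive Gaussian margin $e^{-(\tilde\omega-\frac{\tilde\zeta}{2}\omega_0^{-1})|z|^2}$ left over after fixing $\tilde\zeta<2\tilde\omega\omega_0$; under the polynomial reading of $\Psi$ this step would be vacuous, so your plan as written elides the one genuinely quantitative estimate. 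Second, the feasibility of choosing $\tilde\zeta>0$ requires only $\tilde\omega>0$ (hence $\omega>0$ and $\delta$ small); the upper bound $\omega<\omega_0/8$ is not what makes this possible --- it is a constraint used elsewhere in the RG construction (compatibility of $\zeta$ with the weight functions). Finally, the bookkeeping claim that each $z$-derivative ``gains $\beta^{-1/2}$'' is only true from order $3$ on; for orders $1$ and $2$ the scaling factor is $\beta^{1/2}$ and $\beta^{0}$ respectively, and the smallness there must instead come from the vanishing of $\overline{\mathcal{U}}(\cdot,F)$ to second order at $0$ up to the $O(\delta)$ discrepancy $D^2\mathcal{U}(\bar F)-D^2\mathcal{U}(0)$. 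With these repairs the argument closes and coincides with the cited one.
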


\begin{remark}\label{rem:weak_ass_V}
As noted in the previous section we can state more general assumptions on the potential $\mathcal{U}$ than \eqref{eq:AssumptionsW}. Namely, it is enough to assume the smallness condition on the Mayer function $\mathcal{K}$, $\Vert \mathcal{K}_{F,\beta,\mathcal{U}} \Vert_{\tilde{\zeta},\mathcal{Q}_{\mathcal{U}}} \leq \rho$. Then the main theorems can be applied for every $\mathcal{U}$ such that its Mayer function satisfies the bound.
\end{remark}

The proofs of Theorems \ref{Thm:ScalingLimit} and \ref{Thm:SurfaceTension} are straightforward consequences of the representation of the partition function in Theorem \ref{Thm:RepresentationPartitionFunction}.

\begin{proof}[Proof of Theorem \ref{Thm:ScalingLimit}]
The proof may be handled in the very same way as in \cite{Hil16} or \cite{ABKM} but without the need for taking a subsequence. We review the main arguments.

Let $\tilde{\zeta}$ be the parameter from Lemma \ref{Lemma:From_K_to_V}, and let $L_0$ and $\epsilon_0$ the corresponding parameters from Theorem \ref{Thm:RepresentationPartitionFunction}. Then, by Lemma \ref{Lemma:From_K_to_V}, there is $\delta > 0$ and $\beta_0 \geq 1$ such that for all $\beta \geq \beta_0$ and $F \in B_{\delta}(0)$ we have $\mathcal{K}_{F,\beta,\mathcal{U}} \in B_{\epsilon_0}(0) \subset \mathbf{E}_{\tilde{\zeta},\mathcal{Q}_{\mathcal{U}}}$. Fix $f \in \chi_N$. By Theorem~\ref{Thm:RepresentationPartitionFunction}, the function $\mathcal{Z}_{N,\beta}(F,f)$ can be written as an explicit term multiplied by a perturbation $Z_N(\mathcal{K}_{F,\beta,\mathcal{U}})$.

Let $f_N$ be as in the assumptions of the theorem. Define
$$
\tilde{f}_N = f_N - c_N,
\quad
c_N \text{ such that } \sum_{x \in \T_N} \tilde{f}_N(x) =0.
$$
Then $\tilde{f}_N \in \chi_N$.
Since $(c_N,\varphi)=0$ for all $\varphi \in \chi_N$,
$$
\mathbb{E}_{\gamma_{N,\beta}^F}\left( e^{(f_N,\varphi)} \right)
=
\mathbb{E}_{\gamma_{N,\beta}^F}\left( e^{(\tilde{f}_N,\varphi)} \right),
$$
and we can use Theorem \ref{Thm:RepresentationPartitionFunction} to rewrite
\begin{align*}
\mathbb{E}_{\gamma_{N,\beta}^F}\left( e^{(f_N,\varphi)} \right)
&= \frac{Z_{N,\beta}(F,\tilde{f}_N)}{Z_{N,\beta}(F,0)}
= \frac{\mathcal{Z}_{N,\beta}\left(F,\frac{\tilde{f}_N}{\sqrt{\beta}}\right)}{\mathcal{Z}_{N,\beta}(F,0)}
\\ &
= e^{\frac{1}{2 \beta}\left(\tilde{f}_N,\mathcal{C}^q \tilde{f}_N\right)} \frac{Z_N\left(\mathcal{K}_{F,\beta,\mathcal{U}},\mathcal{C}_{\Lambda_N}^q \frac{\tilde{f}_N}{\sqrt{\beta}}\right)}{Z_N(\mathcal{K}_{F,\beta,\mathcal{U}},0)}.
\end{align*}
A standard argument (see Proposition 4.7 in \cite{Hil16} or the proof of Theorem 2.7 in \cite{ABKM}) shows that
$$
\left(
\tilde{f}_N,\mathcal{C}^q_{\Lambda_N} \tilde{f}_N
\right)
\rightarrow 
\left(
f,\mathcal{C}^q_{\T^d} f
\right)_{L^2(\T^d)},
\quad \text{as } N \rightarrow \infty,
$$
and from Theorem \ref{Thm:RepresentationPartitionFunction} it follows that
$$
\left|Z_N(\mathcal{K}_{F,\beta,\mathcal{U}},0)-1 \right|,
 \left|Z_N\left(\mathcal{K}_{F,\beta,\mathcal{U}},\mathcal{C}_{\Lambda_N}^q \frac{f_N}{\sqrt{\beta}}\right)-1 \right|
 \rightarrow 0 \text{ as } N \rightarrow \infty.
$$
This concludes the proof.
\end{proof}

\begin{proof}[Proof of Theorem \ref{Thm:SurfaceTension}]
The proof is similar to the one in \cite{ABKM} but without the need for taking a subsequence. We sketch the main steps here.

Let $\tilde{\zeta}$ be the parameter from Lemma \ref{Lemma:From_K_to_V}, and let $L_0$ and $\epsilon_0$ be as in Theorem \ref{Thm:RepresentationPartitionFunction}. Then, by Lemma \ref{Lemma:From_K_to_V}, there is $\delta_0 > 0$ and $\beta_0 \geq 1$ such that for all $\beta \geq \beta_0$ and $F \in B_{\delta_0}(0)$ we have $\mathcal{K}_{F,\beta,\mathcal{U}} \in B_{\epsilon_0}(0) \subset \mathbf{E}_{\tilde{\zeta},\mathcal{Q}_{\mathcal{U}}}$. Hence we can apply the representation of the partition function in Theorem \ref{Thm:RepresentationPartitionFunction} and we can rewrite the finite-volume surface tension as follows:
\begin{align*}
W_{N,\beta}(F) &= - \frac{1}{\beta L^{Nd}} \ln Z_{N,\beta}(F,0)
\\
&= \mathcal{U}(\bar{F}) - \frac{1}{\beta L^{Nd}} \ln Z_{N,\beta}^{(0)} - \frac{1}{\beta L^{Nd}} \ln \mathcal{Z}_{N,\beta}(F,0)
\\
&= \mathcal{U}(\bar{F})  - \frac{1}{\beta L^{Nd}} \ln Z_{N,\beta}^{(0)}
+ \frac{\lambda(\mathcal{K}_{F,\beta,\mathcal{U}})}{\beta} - \frac{1}{\beta L^{Nd}} \ln \frac{Z_N^{(q(\mathcal{K}_{F,\beta,\mathcal{U}}))}}{Z_N^{(0)}} - \frac{1}{\beta L^{Nd}} \ln Z_N(\mathcal{K}_{F,\beta,\mathcal{U}},0).
\end{align*}

The assumptions \eqref{eq:AssumptionsW} on the potential $\mathcal{U}$ in Theorem \ref{Thm:SurfaceTension} imply that there is $\delta_1 > 0$ such that for $F \in B_{\delta_1}(0)$
$$
\left|
D^2 \mathcal{U} (\bar{F})(z,z) - \mathcal{Q}_{\mathcal{U}}(z)
\right|
= \left| 
D^2 \mathcal{U}(\bar{F})(z,z) - D^2 \mathcal{U}(0)(z,z)
\right|
\leq \frac{\omega_0}{2} |z|^2
$$
and thus
$$
D^2 \mathcal{U}(\bar{F})(z,z) \geq \frac{\omega_0}{2} |z|^2.
$$

The second term $\frac{1}{\beta L^{Nd}} \ln Z_{N,\beta}^{(0)}$ is independent of $F$.

Our next concern is to show that
$$
\mathcal{W}_{N,\beta}(F) = 
\frac{\lambda(\mathcal{K}_{F,\beta,\mathcal{U}})}{\beta} - \frac{1}{\beta L^{Nd}} \ln \frac{Z_N^{(q)}}{Z_N^{(0)}} - \frac{1}{\beta L^{Nd}} \ln Z_N(\mathcal{K}_{F,\beta, \mathcal{U}},0)
$$
is $C^{r_1}$ uniformly in $N$.
The map $F \mapsto \lambda(\mathcal{K}_{F,\beta,\mathcal{U}})$ is $C^{r_1}$ uniformly in $N$ by Theorem~\ref{Thm:RepresentationPartitionFunction} and then chain rule. Similar arguments apply to the second term (see Lemma~11.2 in \cite{ABKM}). The third term is $C^{r_1}$ by smoothness of $Z_N(\mathcal{K})$ in $\mathcal{K}$ with uniform bounds in $N$ as stated in Theorem \ref{Thm:RepresentationPartitionFunction}. Thus there is a constant $\Xi>0$ independent of $\beta$ and $\delta$ such that
$$
\left|
D^2 \mathcal{W}_{N,1}(F)(z,z)
\right|
\leq \Xi |z|^2.
$$

In summary, with the choice $\beta_1 = \frac{4 \Xi}{\omega_0}$ for $\beta \geq \max\lbrace \beta_0,\beta_1 \rbrace$, $\delta \leq \min\lbrace\delta_0,\delta_1\rbrace$ and $F \in B_{\delta}(0)$, we get
\begin{align*}
D^2 W_{N,\beta}(u)(z,z)
&= D^2 \mathcal{U}(\bar{F}) (z,z) + D^2 \mathcal{W}_{N,\beta}(u)(z,z)
\\ &
\geq \frac{\omega}{2} |z|^2 - \frac{\Xi}{\beta}|z|^2
\geq \frac{\omega}{4}|z|^2.
\end{align*}
The uniform convexity of $W_{\beta}(F)$ follows by using the fact that the pointwise limit of uniformly convex functions is uniformly convex.
\end{proof}

\section{Renormalisation group analysis} \label{sec:RG-analysis_BulkFlow}

The proof of Theorem \ref{Thm:RepresentationPartitionFunction} is carried out by renormalisation group analysis. This is an iterative averaging process over different scales. We will introduce the multiscale method in this section and prove Theorem \ref{Thm:RepresentationPartitionFunction}.
We start by motivating the idea of RG.

~\\
We aim to get an expression for
$$
\mathcal{Z}_N(\mathcal{K}, \mathcal{Q},f) 
= \int_{\chi_N} e^{(f,\varphi)} \sum_{X \subset \Lambda_N} \prod_{x \in X} \mathcal{K}(D\varphi(x)) \mu_{\mathcal{Q}}(\de\varphi),
$$
where $\mathcal{Q}$ is a quadratic form, $f \in \chi_N$, $\mathcal{K}\in \mathbf{E}_{\zeta, \mathcal{Q}}$, and $\zeta \in (0,1)$ fixed.
Remember that
$$
\mathcal{C}^q_{\Lambda_N} = \left( \mathcal{A}^q_{\Lambda_N} \right)^{-1},
\quad
\mathcal{A}^q_{\Lambda_N}
= \sum_{i,j=1}^d \left( \mathbf{Q}_{ij} - q_{ij} \right) \nabla_j^* \nabla_i,
$$
is the covariance of the Gaussian free field on $\Lambda_N$. For ease of notation, we will drop the subscript $\Lambda_N$ from now on.

~\\
To sketch the rough idea of the method, set $f=0$ and let us denote
$$
F(\varphi) = \sum_{X \subset \Lambda_N} \prod_{x \in X} \mathcal{K}(D \varphi(x)).
$$
The starting point is to put an additional parameter $q$ into the measure,
\begin{align*}
\mathcal{Z}_{N}(\mathcal{K}, \mathcal{Q},0) 
&= \int_{\chi_N} F(\varphi) \,\mu_{\mathcal{Q}}(\de \varphi) 
= \frac{Z^{(q)}}{Z^{(0)}} \int_{\chi_N} F^q(\varphi) \,\mu_{\mathcal{C}^{q}}(\de \varphi),
\\
\text{where }&
F^q(\varphi) = e^{\frac{1}{2}\sum_{i,j=1}^d
\left(\nabla_i\varphi, q_{ij} \nabla_j \varphi\right)} F(\varphi).
\end{align*}
With the help of the implicit function theorem we "tune" $q$ to find the "correct" Gaussian measure producing a useful formula for the partition function.

~\\
A \textit{finite-range decomposition} of $\mu_{\mathcal{C}^{q}} = \mu_{\mathcal{C}_1} \ast \ldots \ast \mu_{\mathcal{C}_{N}}$ enables us to integrate out iteratively scale by scale,
\begin{align*}
\int_{\chi_N} F^q(\varphi + \phi) \mu_{\mathcal{C}^{q}}(\de \varphi)
&=
\int_{\chi_N} F^q(\xi_1 + \ldots + \xi_{N} + \phi) \mu_{\mathcal{C}_1}(\de\xi_1) \ldots \mu_{\mathcal{C}_{N}}(\de\xi_{N})
\\ &
= \int_{\chi_N} F_1^q(\xi_2 + \ldots + \xi_{N} + \phi) \mu_{\mathcal{C}_2}(\de\xi_2)\ldots \mu_{\mathcal{C}_{N}}(\de\xi_{N})
\\
&= \ldots
\\ &
= \int_{\chi_N} F_{N-1}^q(\xi_{N} + \phi) \mu_{\mathcal{C}_{N}}(\de \xi_{N})
= F_N^q(\phi)
.
\end{align*}

$F^q$ can be written by polymer expansion as,
\begin{align*}
&F^q = \sum_{X \subset \Lambda} e^{-H_0(X)} K_0(\Lambda \setminus X)
= \left(e^{-H_0} \circ K_0\right) (\Lambda),
\\ \text{where} \quad
&H_0(\varphi)(X) = \sum_{x \in X} \sum_{i,j=1}^d \nabla_i\varphi(x) q_{ij} \nabla_j\varphi(x),
\\
\text{and} \quad
&K_0(\varphi)(Y) = e^{- H_0(\varphi)(Y)}
 \prod_{x \in Y} \mathcal{K} \left(D\varphi(x)\right).
\end{align*}
This decomposition can be maintained on each scale $k\in \lbrace 1, \ldots, N \rbrace$, that is there are maps $(H_k^q, K_k^q)$ such that $F_k^q = e^{-H_k^q} \circ K_k^q$. This so-called \textit{circ product} acts on scale~$k$ with polymers consisting of \textit{$k$-blocks}, which are cubes of side length $L^k$ (a precise definition can be found in \eqref{eq:defn_circ_product} in Subsection \ref{subsubsec:polymers_functionals_norms}). At the last scale $N$ there is only one block left, namely the whole set $\Lambda_N$, and the circ product is just a sum of two terms, $\left(e^{-H_N^q} + K_N^q\right)(\Lambda)$.

 The maps $H_k^q$ are the \textit{relevant} (more precisely: relevant and marginal) directions which collect all increasing (and constant) parts in the procedure $F \mapsto \mu_{k+1}\ast F$ and they will live in finite dimensional spaces. The flow $(H,K) \mapsto H_+ = \mathbf{A}^qH + \mathbf{B}^qK$ will be defined in such a way that $(H,K)\mapsto K_+$ is a contraction (by a suitable choice of the map $\mathbf{B}^q$). Moreover, the linear part of $H$ should remain relevant, so that $H$ appears in $K_+$ to second order (by a suitable choice of the map $\mathbf{A}^q$). Then the implicit function theorem can be applied to the flow to find the stable manifold for the initial condition $(H_0,K_0)$ so that the flow converges to its fixed point $(0,0)$.

~\\
This method is described and performed in detail in \cite{BS1}, \cite{BS2}, \cite{BS3}, \cite{BS4} and \cite{BS5} and adapted to gradient models in \cite{AKM16} and \cite{ABKM}.
For the convenience of the reader we review the relevant material from \cite{ABKM} without proofs, see Subsection~\ref{subsec:finite-volume_bulk_single_step}.

For the asserted improvement in Theorem \ref{Thm:RepresentationPartitionFunction}, namely the $N$-independence of the maps $\lambda(\mathcal{K})$ and $q(\mathcal{K})$, we will need some additional properties which we will state explicitly as extensions from \cite{ABKM}. These are the restriction property and $\Z^d$-property as stated in Propositions \ref{Prop:field_locality} and \ref{Prop:Zd-property}, an improved bound on the first derivative of the irrelevant part in Lemma \ref{lem:First_der_improved_large_polymers}, and the single step estimate in Proposition \ref{Prop:SingleStepRG_BulkFlow}.

In Subsection \ref{subsection:Inf_vol_bulk_flow} the flow in \cite{ABKM} will be extended to an infinite-volume flow and the stable manifold theorem will be applied to this flow instead on the finite-volume flow as in \cite{ABKM}.

Finally, estimates on the finite-volume flow and the proof of Theorem \ref{Thm:RepresentationPartitionFunction} will be deduced (see Subsection \ref{subsec:Back_to_finite-volume}).



\subsection{Finite-volume flow and single step estimates} \label{subsec:finite-volume_bulk_single_step}

We start by describing the \textit{finite-range decomposition} of the measure $\mu_{\mathcal{C}^q}$. This decomposition is the starting point for the iterative procedure.

\subsubsection{Finite-range decomposition}\label{sec:FRD}

The operator $\mathcal{A}^{q}: \chi_N \rightarrow \chi_N$ commutes with translations and so does its inverse $\mathcal{C}^{q}$. Thus there exists a unique kernel $C^{q}: \Lambda_N \rightarrow \R$ with $\sum_{x \in \Lambda_N} C^{q}(x) = 0$ such that
$$
\mathcal{C}^{q}\varphi (x) = \sum_{y \in \Lambda_N}C^{q}(x-y) \varphi(y).
$$
The next proposition is Theorem 2.3 in \cite{BUC18}.

\begin{prop}[Finite-range decomposition]\label{Prop:FRD_Buchholz}
Fix $q \in \R^{(d \times m) \times (d \times m)}_{\text{sym}}$ such that $\mathcal{C}^{q}$ is positive definite. Let $L > 3$ be an integer and $N \geq 1$. Then there exist positive, translation invariant operators $\mathcal{C}_k^{q}$ such that
\begin{align*}
\mathcal{C}^{q} &= \sum_{k=1}^{N+1} \mathcal{C}^{q}_k,
\\
C^{q}_k(x) &= -M_k \quad \text{for} \quad |x|_{\infty} \geq \frac{L^k}{2}, \quad k \in \lbrace 1, \ldots, N \rbrace,
\end{align*}
where $M_k \geq 0$ is a constant, positive semi-definite matrix that is independent of $q$. The following bounds hold for any positive integer $l$ and any multiindex $\alpha$:
\begin{align*}
\sup_{x \in \Lambda_N}
\sup_{\Vert \dot{q} \Vert \leq \frac{1}{2}} \Big\vert \nabla^{\alpha} D^l_q C^{q}_k(x)(\dot{q}, \ldots, \dot{q}) \Big\vert
\leq
\begin{cases}
C_{\alpha,l} L^{-(k-1)(d-2+|\alpha|)} & \quad \text{for } d + |\alpha| > 2 \\
C_{\alpha,l} \ln(L) L^{-(k-1)(d-2+|\alpha|)} & \quad \text{for } d + |\alpha| = 2 .
\end{cases}
\end{align*}
Here, $C_{\alpha,l}$ denotes a constant that does not depend on $L$, $N$, and $k$.
\end{prop}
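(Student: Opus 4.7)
The plan is to follow the approach of Bauerschmidt and of \cite{BUC18}: write the inverse of $\mathcal{A}^q$ as an integral over a scale parameter $t$ whose integrand is positive semi-definite and has a translation-invariant kernel supported in a ball of radius $\sim t$; then split the integral into dyadic bands in $t$ and transfer the result from the full lattice $\Z^d$ to the torus $\Lambda_N$ by periodisation, which is where the constants $-M_k$ enter.

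First, working on $\Z^d$, I would construct a family of positive semi-definite operators $\{M_t^q\}_{t > 0}$ whose kernels are supported in $\{|x|_\infty < t\}$ and which satisfy
\begin{equation*}
(\mathcal{A}^q_{\Z^d})^{-1} = \int_0^\infty M_t^q \, dt.
\end{equation*}
One concrete route, in the spirit of Hainzl--Seiringer/Bauerschmidt, is a convolution-square construction: choose a smooth compactly supported profile $W$ on $\R^d$ and build $M_t^q$ from $W$ and $\mathcal{A}^q_{\Z^d}$ so that the finite-range support of $W$ is inherited by the kernel. Cutting the $t$-integral into dyadic windows $t \in [L^{k-1}/2, L^k/2)$ for $k = 1, \ldots, N$ and a tail $t \geq L^N/2$ yields operators $\widetilde{\mathcal{C}}^q_k$ on $\Z^d$ that are positive semi-definite with kernels supported in $\{|x|_\infty < L^k/2\}$.

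Next, I would periodise to $\Lambda_N$. For $k \leq N$ set
\begin{equation*}
\mathcal{C}^q_k(x) = \sum_{y \in (L^N\Z)^d} \widetilde{\mathcal{C}}^q_k(x + y) - M_k,
\end{equation*}
where $M_k$ is chosen so that $\sum_{x \in \Lambda_N} \mathcal{C}^q_k(x) = 0$, as required because $\mathcal{A}^q$ acts on the zero-mean space $\chi_N$. Since the support of $\widetilde{\mathcal{C}}^q_k$ lies in the ball of radius $L^k/2 \leq L^N/2$, the periodised sum vanishes on $\{|x|_\infty \geq L^k/2\}$, so $\mathcal{C}^q_k(x) = -M_k$ there. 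The matrix $M_k$ is $q$-independent because it is determined by the $q$-independent total mass of $\widetilde{\mathcal{C}}^q_k$ (which one arranges by an appropriate choice of $W$ at the construction stage). Translation invariance and positivity on $\chi_N$ are preserved by periodisation and the rank-one subtraction. Finally, $\mathcal{C}^q_{N+1}$ is defined as the residual $\mathcal{C}^q - \sum_{k=1}^N \mathcal{C}^q_k$, positive because it collects the tail $t \geq L^N/2$ of the spectral integral.

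The derivative bounds then follow by rescaling: at scale $k$ the integrand is concentrated at $t \sim L^k$, so dimensionally the kernel is of size $L^{-k(d-2)}$, each discrete derivative costs $L^{-k}$, and the logarithm at $d + |\alpha| = 2$ comes from the boundary contribution at the outer endpoint of the dyadic window. Derivatives in $q$ are controlled by differentiating the defining representation of $(\mathcal{A}^q)^{-1}$; the hypothesis $\|\dot q\| \leq 1/2$ guarantees that $\mathcal{A}^q$ stays uniformly comparable to $\mathcal{A}^0$, so each additional factor of $\dot q$ in the resulting Neumann-type expansion is harmless. The main obstacle I would expect is constructing the $q$-dependent, positive, finite-range integrand $M_t^q$ on the lattice and verifying that the variable coefficients $\mathbf{Q}_{ij} - q_{ij}$ do not destroy the strict finite-range property while keeping positivity and the derivative bounds uniform in $\|\dot q\| \leq 1/2$; this is precisely the technical core of \cite{BUC18}.
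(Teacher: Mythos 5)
The paper itself offers no proof of this proposition: it is quoted verbatim as Theorem 2.3 of \cite{BUC18}, so there is nothing internal to compare against. Your outline does follow the same strategy as that reference (and Bauerschmidt's earlier work on which it builds): an integral representation $(\mathcal{A}^q)^{-1}=\int_0^\infty M_t^q\,\de t$ with positive semi-definite, strictly finite-range integrands obtained by a convolution-square construction, dyadic slicing in $t$, periodisation to $\Lambda_N$ with the constants $-M_k$ restoring the zero-mean normalisation on $\chi_N$, and scaling for the derivative bounds. As a proof, however, it remains a plan: the step you flag at the end --- constructing $M_t^q$ for the variable-coefficient operator $\sum_{i,j}(\mathbf{Q}_{ij}-q_{ij})\nabla_j^*\nabla_i$ so that positivity, the exact finite-range property, the $q$-independence of $M_k$, and the uniformity of all bounds in $\Vert\dot q\Vert\le\frac12$ hold simultaneously --- is not a technical afterthought but the entire content of the cited theorem, and your justification of the $q$-independence of $M_k$ (via a ``$q$-independent total mass'') is asserted rather than derived. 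So the proposal is a faithful reconstruction of the cited approach at the level of architecture, but it does not constitute an independent proof of the statement.
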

In \cite{BUC18} further bounds in Fourier space are stated. 
For the sake of simplicity they are omitted here.

~\\
In contrast to \cite{ABKM} we combine the last two covariances to a single one:
\begin{align}
\mathcal{C}_{N,N}^{q} = \mathcal{C}_N^{q} + \mathcal{C}_{N+1}^{q}. \label{eq:Last_Scale_Cov}
\end{align}
We will use the following decomposition:
\begin{align}
\mathcal{C}^{q} &= \sum_{k=1}^{N-1} \mathcal{C}^{q}_k + \mathcal{C}^{q}_{N,N},
\label{eq:dec}
\end{align}
where the last term is different from \cite{ABKM}.
The reason for this change is that we extend the \cite{ABKM} flow to infinite volume. In order to have good estimates for the finite-volume covariance we have to perform the last step of integration in the RG flow instead of dealing with a remaining integral in $\int e^{-H_N} + K_N \, \de\mu_{N+1}$ at the last step.

Let us denote by $\mu_k$ the Gaussian measure with covariance $\mathcal{C}_k^q$.

For the sake of completeness we state the following property of Gaussian measures. A~proof can be found, e.g., in \cite{Bry09}.

\begin{lemma}\label{lemma:Dec_Gaussian_measure}
Let $C_k$ be a family of positive definite operators such that $C=\sum_{k} C_k$. Then a field $\varphi$ which is distributed according to $\mu_C$ can be written as $\varphi = \sum_{k} \xi_k$ where $\xi_k$ is distributed according to $\mu_{C_k}$.
\end{lemma}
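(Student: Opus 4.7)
The plan is to realise the $\xi_k$ as independent centred Gaussian fields with covariances $C_k$ on a suitable product probability space and then identify the distribution of their sum by a characteristic function computation. Since each $C_k$ is assumed positive definite, each measure $\mu_{C_k}$ on $\chi_N$ is well defined, so I would work on the product space $\Omega = \prod_k \chi_N$ equipped with $\bigotimes_k \mu_{C_k}$, letting $\xi_k$ be the projection onto the $k$-th factor. By construction the $\xi_k$ are independent, centred, and $\xi_k \sim \mu_{C_k}$.

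Next, define $\varphi = \sum_k \xi_k$ (the sum is finite in our application, with $k = 1, \ldots, N+1$). Being a finite linear combination of independent centred Gaussians, $\varphi$ is itself a centred Gaussian random variable on $\chi_N$, and its covariance is computed directly as
\begin{align*}
\mathbb{E}\bigl[(f,\varphi)(g,\varphi)\bigr]
= \sum_{k,l} \mathbb{E}\bigl[(f,\xi_k)(g,\xi_l)\bigr]
= \sum_{k} (f, C_k g)
= (f, C g),
\end{align*}
where the cross terms vanish by independence.

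To conclude that the law of $\varphi$ is exactly $\mu_C$, I would compare characteristic functions: for any $f \in \chi_N$,
\begin{align*}
\mathbb{E}\bigl[e^{i(f,\varphi)}\bigr]
= \prod_k \mathbb{E}\bigl[e^{i(f,\xi_k)}\bigr]
= \prod_k e^{-\frac{1}{2}(f, C_k f)}
= e^{-\frac{1}{2}(f, C f)},
\end{align*}
which is precisely the characteristic function of $\mu_C$. Since the characteristic function uniquely determines the distribution on the finite-dimensional space $\chi_N$, the law of $\sum_k \xi_k$ coincides with $\mu_C$, as claimed.

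There is no real obstacle here: the statement is a standard textbook fact whose only subtlety is ensuring that the positive definiteness hypothesis actually gives a well-defined product construction, which it does. In our application the index $k$ runs over the finitely many scales $1, \ldots, N+1$ produced by Proposition \ref{Prop:FRD_Buchholz}, so no convergence issues arise and the simple product-space construction suffices.
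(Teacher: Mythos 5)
Your argument is correct and complete: the product-space construction of independent $\xi_k\sim\mu_{C_k}$ followed by the characteristic-function identity $\prod_k e^{-\frac{1}{2}(f,C_k f)}=e^{-\frac{1}{2}(f,Cf)}$ is exactly the standard proof of this fact, and no convergence issues arise since $\chi_N$ is finite-dimensional and the sum over scales is finite. The paper itself gives no proof and merely cites \cite{Bry09}, where the same convolution/characteristic-function argument appears, so there is nothing further to reconcile.
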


Another property of the finite range decomposition is independence of $N$, which is stated in Remark 2.4 in \cite{BUC18}. We need this property in order to expand the flow in \cite{ABKM} to infinite volume.

\begin{remark}[Independence of $N$]\label{Rem:FRD_Indep_of_N}
Let $N < N'$ and $\Lambda_N \subset \Lambda_{N'}$ be the corresponding tori. Let us denote by $C_k^N$ and $C_k^{N'}$ the kernels of the decomposition depending on the torus size $L^N$, and $M_k^N$, $M_k^{N'}$ be the corresponding constants from Proposition \ref{Prop:FRD_Buchholz}. It can be shown that for $k < N \leq N'$ and $x \in \Lambda_N$ the decomposition satisfies
\begin{align}
C_k^N (x) - C_k^{N'}(x) = -\left(M_k^N - M_k^{N'}\right), \label{eq:FRD_dep_on_N}
\end{align}
hence the kernels agree up to a constant shift locally, and they are constant for $|x|_{\infty} \geq L^k/2$. We define $\Lambda_N' = \lbrace x \in \Z^d: |x|_{\infty} < (L^N-1)/4\rbrace$. Then we have $x-y \in \Lambda_N$ for $x,y \in \Lambda_N'$. Let $x,y \in \Lambda_N'$ such that $x+e_i,y+e_j \in \Lambda_N'$. Then \eqref{eq:FRD_dep_on_N} implies that
\begin{align*}
\mathbb{E}_{\mu_k^N} \nabla_i \varphi(x) \nabla_j \varphi(y)
&= \nabla_j^* \nabla_i C_k^N(x-y)
= \nabla_j^* \nabla_i C_k^{N'}(x-y)
\\
&= \mathbb{E}_{\mu_k^{N'}} \nabla_i \varphi(x) \nabla_j \varphi(y).
\end{align*}
This means that the covariance structures of $\mu_k^N$ and $\mu_k^{N'}$ agree locally. In particular we can conclude that for any set $X \subset \Lambda_N'$ satisfying $X + e_i \subset \Lambda_N'$ for $1 \leq i \leq d$, any $1 \leq k \leq N$, and any measurable functional $F:\R^X \rightarrow \R$
\begin{align*}
\int_{\chi_N} F(\nabla \varphi \vert_X) \mu_k^N(\de\varphi)
=
\int_{\chi_{N'}} F(\nabla \varphi \vert_X) \mu_k^{N'}(\de\varphi).
\end{align*}
\end{remark}

\subsubsection{Polymers, functionals and norms}\label{subsubsec:polymers_functionals_norms}

As mentioned in the preface to Section \ref{sec:RG-analysis_BulkFlow}, we apply an iterative averaging process over various scales. In this subsection, we discuss several key notions and introduce the setting of the scales and spaces for functionals. We follow closely the presentation in \cite{ABKM}.

~\\
Fix $R = \max \lbrace R_0, 2 \lfloor d/2 \rfloor + 3 \rbrace$.
At each scale $k$ we pave the torus with blocks of side length $L^k$. These so-called \textit{$k$-blocks} are translations by $(L^k \Z)^d$ of the block $B_0 = \left\lbrace z \in \Z^d: |z_i| \leq \frac{L^k-1}{2} \right\rbrace $. Together, they form the set of $k$-blocks denoted by
$$
\mathcal{B}_k = \lbrace B: B \text{ is a $k$-block} \rbrace.
$$
Unions of blocks are called \textit{polymers}. For $X \subset \Lambda$ let $\mathcal{P}_k(X)$ be the set of all $k$-polymers in $X$ at scale~$k$.

Furthermore we need the following notations:
\begin{itemize}
\item
A polymer $X$ is \textit{connected} if for any $x,y \in X$ there is a path $x_1 = x,x_2, \ldots$, $x_n = y$ in $X$ such that $|x_{i+1} - x_i|_{\infty}=1$ for $i=1,\ldots, n-1$. The set of all connected $k$-polymers in $X$ is denoted by $\mathcal{P}_k^c(X)$. The set of connected components of a polymer $X$ is denoted by $\mathcal{C}_k(X)$.
\item
Let $\mathcal{B}_k(X)$ be the set of $k$-blocks contained in $X$ and $|X|_k = |\mathcal{B}_k(X)|$ be the number of $k$-blocks in $X$.
\item
The \textit{closure} $\bar{X} \in \mathcal{P}_{k+1}$ of $X \in \mathcal{P}_k$ is the smallest $(k+1)$-polymer containing~$X$.
\item
The set of \textit{small polymers} $\mathcal{S}_k$ is given by all polymers $X \in \mathcal{P}_k^c$ such that $|X|_k \leq 2^d$. The other polymers in $\mathcal{P}_k \setminus \mathcal{S}_k$ are \textit{large}.
\item
For any block $B \in \mathcal{B}_k$ let $\hat{B} \in \mathcal{P}_k$ be the cube of side length $(2^{d+1}+1)L^k$ centered at~$B$.
\item
The \textit{small neighbourhood} $X^{*} \in \mathcal{P}_{k-1}$ of $X \in \mathcal{P}_k$ is defined by
$$
X^* = \bigcup_{B \in \mathcal{B}_{k-1}(X)} \hat{B}.
$$
\item
The \textit{large neighbourhood} $X^+$ of $X \in \mathcal{P}_k$ is defined by
$$
X^+ = \bigcup_{\substack{B \in \mathcal{B}_k:\\ B \text{ touches }X}} B \cup X.
$$
\end{itemize}

Additionally, we introduce a class of functionals.
\begin{itemize}
	\item Let $M(\mathcal{V}_N)$ be the set of measurable real functions on $\mathcal{V}_N$ with respect to the Borel-$\sigma$-algebra.
	\item Let $\mathcal{N}$ be the space of real-valued functions of $\varphi$ which are in $C^{r_0}$.
	\item A map $F: \mathcal{P}_k \rightarrow \mathcal{N}$ is called \textit{translation invariant} if for every $y \in (L^k \Z)^d$ we have $F(\tau_y(X),\tau_y(\varphi)) = F(X,\varphi)$ where $\tau_y(B) = B+y$ and $\tau_y \varphi(x) = \varphi(x-y)$.
	\item A map $F: \mathcal{P}_k \rightarrow \mathcal{N}$ is called \textit{local} if $\varphi \big\vert_{X^*} = \psi \big\vert_{X^*}$ implies $F(X,\varphi) = F(X,\psi)$.
	\item A map $F: \mathcal{P}_k \rightarrow \mathcal{N}$ is called \textit{shift invariant} if $F(X, \varphi + \psi) = F(X,\varphi)$ for $\psi$ such that $\psi(x) = c$, $x \in X^*$ on each connected component of $X^*$.
\end{itemize}

We set
\begin{align*}
M(\mathcal{P}_k, \mathcal{V}_N)
= \lbrace
F: \mathcal{P}_k \rightarrow \mathcal{N} \big\vert
F(X) \in M(\mathcal{V}_N), F \text{ translation inv., shift inv., local}
\rbrace.
\end{align*}

Notice that we included $C^{r_0}$-smoothness in the definition of the space $M(\mathcal{P}_k,\mathcal{V}_N)$ which is not done in \cite{ABKM}.

Generalisations of $M(\mathcal{P}_k,\mathcal{V}_N)$ are given by $M(\mathcal{P}^c_k,\mathcal{V}_N)$, $M(\mathcal{S}_k,\mathcal{V}_N)$ and $M(\mathcal{B}_k,\mathcal{V}_N)$ where the first component is changed appropriately. We will write $M(\mathcal{P}_k)$, $M(\mathcal{P}^c_k)$, $M(\mathcal{S}_k)$ and $M(\mathcal{B}_k)$ for short.

~\\
The \textit{circ product} of two functionals $F,G \in M(\mathcal{P}_k)$ is defined by
\begin{align}
(F \circ G)(X) = \sum_{Y \in \mathcal{P}_k(X)} F(Y) G(X \setminus Y).
\label{eq:defn_circ_product}
\end{align}

The space of \textit{relevant Hamiltonians} $M_0(\mathcal{B}_k)$, a subspace of $M(\mathcal{B}_k)$, is given by all functionals of the form
$$
H(B,\varphi) = \sum_{x \in B} \mathcal{H}\left(\lbrace x \rbrace, \varphi\right)
$$
where $\mathcal{H}(\lbrace x \rbrace, \varphi)$ is a linear combination of the following \textit{relevant monomials}:
\begin{itemize}
\item
The constant monomial $M(\lbrace x \rbrace)_{\es}(\varphi) = 1$;
\item
the linear monomials $M(\lbrace x \rbrace)_{\beta}(\varphi) = \nabla^{\beta} \varphi(x)$ for $1 \leq |\beta| \leq \lfloor \frac{d}{2} \rfloor + 1$;
\item
the quadratic monomials $M(\lbrace x \rbrace)_{\beta,\gamma}(\varphi) = \nabla^{\beta} \varphi(x) \nabla^{\gamma} \varphi(x)$ for $1 = |\beta| = |\gamma|$.
\end{itemize}

~\\
Next we introduce norms on the space of functionals. Fix $r_0 \in \N$, $r_0 \geq 3$.

\begin{itemize}
\item
Define
\begin{align*}
&\bigoplus_{r=0}^{\infty} \mathcal{V}_N^{\otimes r}
\\ & \quad
= \left\lbrace g = \left(g^{(0)}, g^{(1)}, \ldots \right) \Big\vert \, g^{(r)} \in \mathcal{V}_N^{(r)}, \text{ only finitely many non-zero elements} \right\rbrace.
\end{align*}

The space of test function is given by
$$
\Phi = \Phi_{r_0} = \left\lbrace g \in \bigoplus_{r=0}^{\infty} \mathcal{V}_N^{\otimes r}: g^{(r)} = 0 \,\, \forall r \geq r_0 \right\rbrace.
$$

A norm on $\Phi$ is given as follows: On $\mathcal{V}_N^{\otimes 0} = \R$ we take the usual absolute value on~$\R$.
For $\varphi \in \mathcal{V}_N$ we define
\begin{align*}
\vert \varphi \vert_{j,X} = \sup_{x \in X^*} \sup_{1 \leq |\alpha| \leq p_{\Phi}} \mathfrak{w}_j(\alpha)^{-1} \big\vert \nabla^{\alpha} (\varphi)(x) \big\vert
\end{align*}
where $\mathfrak{w}_j(\alpha) = h_j L^{-j|\alpha|} L^{-j \frac{d-2}{2}}$, $h_j = 2^j h$ and $p_{\Phi} = \left\lfloor \frac{d}{2} \right\rfloor + 2$.
For $g^{(r)} \in \mathcal{V}_N^{\otimes r}$ we define
\begin{align*}
&\left\vert g^{(r)} \right\vert_{j,X}
\\
&\quad
= \sup_{x_1, \ldots, x_r \in X^*} \sup_{1 \leq |\alpha_1|, \ldots, |\alpha_r| \leq p_{\Phi}} \left(\prod_{l=1}^r \mathfrak{w}_j(\alpha_l)^{-1}\right) \nabla^{\alpha_1} \otimes \ldots \otimes \nabla^{\alpha_r} g^{(r)}(x_1, \ldots, x_r).
\end{align*}
Then set $|g|_{j,X} = \sup_{r \leq r_0} \left|g^{(r)}\right|_{j,X}$.

\item
A homogeneos polynomial $P^{(r)}$ of degree $r$ on $\mathcal{V}_N$ can be uniquely identified with a symmetric $r$-linear form and hence with an element $\overline{P^{(r)}}$ in the dual of $\mathcal{V}_N^{\otimes r}$.
So we can define the pairing
$$
\langle P,g \rangle = \sum_{r=0}^{\infty} \left\langle \overline{P^{(r)}}, g^{(r)} \right\rangle
$$
and a norm
$$
\vert P \vert_{j,X} = \sup \left\lbrace \langle P,g \rangle: g \in \Phi, |g|_{j,X} \leq 1 \right\rbrace.
$$
For $F \in C^{r_0}(\mathcal{V}_N)=\mathcal{N}^{\es}$ the pairing is given by $\langle F,g \rangle_{\varphi} = \langle \mathrm{Tay}_{\varphi} F,g \rangle$ which defines a norm
$$
\vert F \vert_{j,X,T_{\varphi}} = \vert \mathrm{Tay}_{\varphi} F \vert_{j,X}
= \sup \left\lbrace \langle F,g \rangle_{\varphi}: g \in \Phi, |g|_{j,X} \leq 1 \right\rbrace.
$$
Here, $\mathrm{Tay}_{\varphi} F $ denotes the Taylor polynomial of order $r_0$ of $F$ at $\varphi$.

\item
Let $F \in M(\mathcal{P}_k^c)$. In \cite{ABKM} weights $W_k^{X}, w_k^X, w_{k:k+1}^X \in M(\mathcal{P}_k)$ are defined. Useful properties are summarized in Lemma \ref{lemma:Properties_of_weights} below. Weighted norms are given~by
\begin{align*}
\vertiii{F(X)}_{k,X} &= \sup_{\varphi} \vert F(X) \vert_{k,X,T_{\varphi}} W_k^X(\varphi)^{-1},
\\
\Vert F(X) \Vert_{k,X} &= \sup_{\varphi} \vert F(X) \vert_{k,X,T_{\varphi}} w_k^X(\varphi)^{-1},
\\
\Vert F(X) \Vert_{k:k+1,X} &= \sup_{\varphi} \vert F(X) \vert_{k,X,T_{\varphi}} w_{k:k+1}^X(\varphi)^{-1}.
\end{align*}

\item
The global weak norm for $F \in M(\mathcal{P}_k^c)$ for $A \geq 1$ is given by
$$
\Vert F \Vert_k^{(A)} = \sup_{X \in \mathcal{P}_k^c} \Vert F(X) \Vert_{k,X} A^{|X|_k}.
$$

\item A norm on relevant Hamiltonians is given as follows. For $H \in M_0(\mathcal{B}_k)$ we can write
$$
H(B,\varphi) = \sum_{x \in B}
\left(
 a_{\es} +
 \sum_{\beta \in \mathfrak{v}_1} a_{\beta} \nabla^{\beta} \varphi(x) +
\sum_{x \in B} \sum_{\beta,\gamma \in \mathfrak{v}_2} a_{\beta,\gamma} \nabla^{\beta} \varphi(x) \nabla^{\gamma} \varphi(x)
\right).
$$
Here
\begin{align*}
\mathfrak{v}_1 &= \left\lbrace \beta \in \N_0^{\mathcal{U}}, 1 \leq |\beta| \leq \left\lfloor \frac{d}{2} \right\rfloor + 1 \right\rbrace,
\\
\mathfrak{v}_2 &= \left\lbrace (\beta, \gamma) \in \N_0^{\mathcal{U}} \times \N_0^{\mathcal{U}}, |\beta| = |\gamma|=1, \beta < \gamma \right\rbrace,
\end{align*}
where $\mathcal{U} = \lbrace e_1, \ldots, e_d \rbrace$ and the expression $\beta < \gamma$ refers to any ordering of $\lbrace e_1, \ldots, e_d \rbrace$.
With these preparations we define a norm on $M_0(\mathcal{B}_k)$ as follows:
$$
\Vert H \Vert_{k,0} =
L^{dk} \left| a_{\es}\right|
+ \sum_{\beta \in \mathfrak{v}_1} h_k L^{kd} L^{-k\frac{d-2}{2}} L^{-k|\beta|} \left| a_{\beta} \right|
+ \sum_{(\beta,\gamma)\in \mathfrak{v}_2} h_k^2 \left| a_{(\beta,\gamma)} \right|.
$$

\end{itemize}

For the sake of completeness we review Proposition 7.1 from \cite{ABKM}.
The last scale weights ($k=N$) differ from \cite{ABKM} due to the modified definition of the last scale covariance (see \eqref{eq:Last_Scale_Cov}).
However, this does not change the properties of the weights as stated in the following lemma.

\begin{lemma}
\label{lemma:Properties_of_weights}
Let $L \geq 2^{d+3} + 16 R$. The weight functions $w_k$, $w_{k:k+1}$ and $W_k$ are well-defined and satisfy the following properties:
\begin{enumerate}
\item
For any $Y \subset X \in \mathcal{P}_k$, $0 \leq k \leq N$, and $\varphi \in \mathcal{V}_N$
\begin{align*}
w_k^Y(\varphi) \leq w_k^X(\varphi)
\quad\mathrm{and}\quad
w_{k:k+1}^Y(\varphi) \leq w_{k:k+1}^X(\varphi).
\end{align*}
\item
For any strictly disjoint polymers $X,Y \in \mathcal{P}_k$, $0 \leq k \leq N$, and $\varphi \in \mathcal{V}_N$
\begin{align*}
w_{k}^{X \cup Y}(\varphi) = w_k^X(\varphi) w_k^Y(\varphi).
\end{align*}
\item
For any polymers $X,Y \in \mathcal{P}_k$ such that $\mathrm{dist}(X,Y) \geq \frac{3}{4} L^{k+1}$, $0 \leq k \leq N$, and $\varphi \in \mathcal{V}_N$
\begin{align*}
w_{k:k+1}^{X \cup Y}(\varphi) = w_{k:k+1}^X(\varphi)w_{k:k+1}^Y(\varphi).
\end{align*}
\item
For any disjoint polymers $X,Y \in \mathcal{P}_k$, $0 \leq k \leq N$, and $\varphi \in \mathcal{V}_N$
\begin{align*}
W_k^{X \cup Y}(\varphi) = W_k^X(\varphi) W_k^Y(\varphi).
\end{align*}
\end{enumerate}
Moreover, there is a constant $h_0 = h_0(L,\zeta)$ such that for all $h \geq h_0$ the weight functions satisfy the following properties:
\begin{enumerate}
\setcounter{enumi}{4}
\item
For any disjoint polymers $X,Y \in \mathcal{P}_k$ and $U = \pi(X) \in \mathcal{P}_{k+1}$, $0 \leq k \leq N-1$, and $\varphi \in \mathcal{V}_N$
\begin{align*}
w_{k+1}^U(\varphi) \geq w_{k:k+1}^X(\varphi) \left( W_k^{U^+}(\varphi)\right)^2.
\end{align*}
\item
For all $0 \leq k \leq N-1$, $X \in \mathcal{P}_{k+1}$ and $\varphi \in \mathcal{V}_N$,
\begin{align*}
e^{\frac{|\varphi|^2_{k+1,X}}{2}} w_{k:k+1}^X(\varphi)
\leq w_{k+1}^X(\varphi).
\end{align*}
\end{enumerate}
Lastly, there exists a constant $\kappa = \kappa(L,\zeta)$ with the following properties:
\begin{enumerate}
\setcounter{enumi}{6}
\item
There is a constant $A_{\mathcal{P}}$ such that for $q \in B_{\kappa}(0)$, $\rho = (1+ \frac{\zeta}{4})^{1/3}-1$, $Y \in \mathcal{P}_k$, $0 \leq k \leq N$, and $\varphi \in \mathcal{V}_N$
\begin{align*}
\left( \int_{\chi_N} \left( w_k^X(\varphi + \xi) \right)^{1+\rho} \mu_{k+1}(\de \xi) \right)^{\frac{1}{1+\rho}}
\leq \left( \frac{A_{\mathcal{P}}}{2} \right)^{|X|_k} w_{k:k+1}^X(\varphi).
\end{align*}
\item
There is a constant $A_{\mathcal{B}}$ independent of $L$ such that for $q \in B_{\kappa}$, $\rho = (1+\frac{\zeta}{4})^{1/3}-1$, $B \in \mathcal{B}_k$, $0 \leq k \leq N$, and $\varphi \in \mathcal{V}_N$
\begin{align*}
\left( \int_{\chi_N} \left( w_k^B(\varphi + \xi) \right)^{1+\rho} \mu_{k+1}(\de \xi) \right)^{\frac{1}{1+\rho}}
\leq \frac{A_{\mathcal{B}}}{2}  w_{k:k+1}^B(\varphi).
\end{align*}
\end{enumerate} 
\end{lemma}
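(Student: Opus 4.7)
The plan is to reduce the lemma to Proposition 7.1 of \cite{ABKM}, since the present setup coincides with that reference except for the modification of the last-scale covariance to $\mathcal{C}^q_{N,N} = \mathcal{C}^q_N + \mathcal{C}^q_{N+1}$, which alters only the top-scale weights. I would proceed property by property, checking where this modification is felt and where it is not.

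First I would dispatch (1)--(4). In \cite{ABKM} each weight $w_k^X$, $w_{k:k+1}^X$, $W_k^X$ is constructed as an exponential of a sum over blocks of a non-negative local functional, so monotonicity in the polymer argument and factorisation on disjoint (respectively well-separated) polymers follow immediately from additivity of the exponent and from the block decomposition $\mathcal{B}_k(X \cup Y) = \mathcal{B}_k(X) \sqcup \mathcal{B}_k(Y)$ on disjoint polymers. These arguments invoke no covariance bound and transfer verbatim, including for the modified top-scale weight.

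Next I would address (5)--(6), which link weights at consecutive scales and rely on the size of typical Gaussian fluctuations. At scales $0 \leq k \leq N-2$ the measures $\mu_{k+1}$ coincide with those in \cite{ABKM} and the proofs there apply without change. At scale $k = N-1$ the measure is now $\mu_{N,N}$; its covariance is bounded by summing the estimates of Proposition~\ref{Prop:FRD_Buchholz} at scales $N$ and $N+1$, which still gives control of the required form up to an $O(1)$ multiplicative constant. Choosing $h_0$ slightly larger to absorb this constant recovers (5)--(6).

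The main obstacle lies in (7)--(8), the $(1+\rho)$-integration bounds, at the top scale. For $0 \leq k \leq N-2$ they coincide with \cite{ABKM}. For $k = N-1$, I would invoke Lemma~\ref{lemma:Dec_Gaussian_measure} to split $\mu_{N,N} = \mu_N \ast \mu_{N+1}$ and iterate the single-step \cite{ABKM} bound: first integrate against $\mu_{N+1}$ to produce an intermediate weight, then integrate against $\mu_N$. The delicate point is to preserve the exponent $1+\rho$ through both integrations, which I would handle via H\"older's inequality, and to ensure that the output is controlled by the single weight $w_{N-1:N}^X$ rather than a two-step composite. This is arranged by tuning the definition of $w_{N-1:N}$ so as to mirror the \cite{ABKM} construction with $\mathcal{C}^q_{N,N}$ in place of $\mathcal{C}^q_N$, and absorbing the extra multiplicative factor arising from the iterated integration into the constants $A_{\mathcal{P}}$ and $A_{\mathcal{B}}$.
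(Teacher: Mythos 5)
Your proposal takes essentially the same route as the paper, which offers no proof at all for this lemma: it simply cites Proposition 7.1 of \cite{ABKM} and remarks that the modified last-scale covariance $\mathcal{C}^q_{N,N}=\mathcal{C}^q_N+\mathcal{C}^q_{N+1}$ from \eqref{eq:Last_Scale_Cov} ``does not change the properties of the weights,'' exactly the reduction you propose. For the top-scale instances of (7)--(8) it is simpler, and closer to what the paper intends, to note that $\mathcal{C}^q_{N,N}$ obeys the same scale-$N$ bounds of Proposition \ref{Prop:FRD_Buchholz} up to an $O(1)$ constant (the scale-$(N+1)$ contribution being even smaller), so the single-step \cite{ABKM} argument applies verbatim with $\mu_{N,N}$ treated as one integration step; this sidesteps the exponent-composition problem that your two-step H\"older iteration of $\mu_N\ast\mu_{N+1}$ would still need to resolve.
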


\begin{remark}\label{Remark:Weight_Independent_N}
The weights $w_k$ and $w_{k:k+1}$ are independent of the size of the torus as long as the input polymer is small enough. This can be seen when examining the construction of the weights. The weights essentially arise as follows: Take the local quadratic form from [AKM16], integrate against the covariance of the finite-range decomposition (which is independent of the size of the torus by Remark \ref{Rem:FRD_Indep_of_N}) and add explicit local perturbing terms. These steps are independent of the size of the torus as long as the input-polymer is small enough compared to the torus.

The weights $W_k$ are given explicitly and obviously local.
\end{remark}

Aside from the parameter $L$ two parameters appear above in the definition of the norms: $h$ and $A$.

The parameter $h$ is determined by desired properties for the weights $W_k,w_k,w_{k:k+1}$, see Proposition 7.1 in \cite{ABKM} (cited here in Lemma \ref{lemma:Properties_of_weights}), dependent on the choice of $L$. We will use the weights without explaining the construction and thus we will always choose $h$ large enough as required, depending on $L$.

The parameter $A$ (also dependent on $L$) will be fixed in Proposition \ref{Prop:Existence_BulkFlow}. It will be chosen larger than in \cite{ABKM}.

Finally, there will be a small parameter $\kappa = \kappa(L)$. It constrains the parameter $q \in \R^{(d \times m) \times (d \times m)}_{\text{sym}}$ which determines the Gaussian covariance $\mathcal{C}^q$. The constraint will be that $q \in B_{\kappa}(0)$ for $\kappa$ small. The parameter $\kappa$ is determined by desired properties for the weights, $W_k,w_k,w_{k:k+1}$, see Proposition 7.1 in \cite{ABKM} (cited here in Lemma \ref{lemma:Properties_of_weights}).

\subsubsection{Definition of the renormalisation map}\label{subsubsec:The_renormalisation_map}

We use the finite-range decomposition of $\mathcal{C}^{q}$ into covariances $\mathcal{C}^q_1,\ldots, \mathcal{C}^q_{N-1},\mathcal{C}^q_{N,N}$ defined in Subsection \ref{sec:FRD} (see \eqref{eq:dec}). The decomposition implies that a field $\varphi$ distributed according to $\mu_{\mathcal{C}^{(q)}}$ can be decomposed into fields $\xi_k$ distributed according to $\mu_{\mathcal{C}^q_{k}} =: \mu^q_{k}$,
$$
\varphi \stackrel{\mathcal{D}}{=} \sum_{k=1}^{N} \xi_k,
$$
and that $\mu_{\mathcal{C}^{(q)}} = \mu_1^q \ast \cdots \ast \mu_{N-1}^q \ast \mu^q_{N,N}$ (see Lemma \ref{lemma:Dec_Gaussian_measure}).

Let us define the renormalisation map
\begin{align*}
\mathcal{R}_k F(\varphi) = \int_{\chi_N}F(\varphi + \xi) \mu_k(\de \xi).
\end{align*}
Then
$$
\int_{\chi_N} F(\varphi) \mu_{\mathcal{C}^{(q)}}(\de \varphi)
= \mathcal{R}_{N,N} \mathcal{R}_{N-1} \ldots \mathcal{R}_1(F)(0).
$$

The flow under $\mathcal{R}_k$ will be described by two sequences of functionals $H_k \in M_0(\mathcal{B}_k)$ and $K_k \in M(\mathcal{P}_k^c)$. In the following we define those sequences and state properties as far as it is needed for our purpose of proving Theorem \ref{Thm:RepresentationPartitionFunction} and for the understanding of the extension to observables.

~\\
The flow is given by
\begin{align*}
\mathbf{T}_k: M_0(\mathcal{B}_k) \times M(\mathcal{P}_k^c) \times \R_{\text{sym}}^{d \times d}
&\quad\rightarrow\quad
 M_0(\mathcal{B}_{k+1}) \times M(\mathcal{P}_{k+1}^c),
\\
(H,K,q) &\quad\mapsto\quad (H_+,K_+).
\end{align*}
Note that we sometimes omit the scale $k$ from the notation; if doing so, the $+$ indicates the change of scale from $k$ to $k+1$. The maps $H_+ \in M_0(\mathcal{B}_{k+1})$ and $K_+ \in M(\mathcal{P}_{k+1})$ are chosen such that
$$
\mathcal{R}_+ (e^{-H} \circ K)(\Lambda_N) = (e^{-H_+} \circ K_+)(\Lambda_N).
$$

Let us introduce a projection $\Pi_2: M(\mathcal{B}_k) \rightarrow M_0(\mathcal{B}_k)$ on the space of relevant Hamiltonians. For $F \in M(\mathcal{B}_{k})$, $\Pi_2 F$ is attained as homogenisation of the second order Taylor expansion of $F(B)$ given by $\dot{\varphi} \mapsto  F(B,0) + DF(B,0) \dot{\varphi} + \frac{1}{2} D^2F(B,0)(\dot{\varphi},\dot{\varphi})$.
More precisely, $\Pi_2 F$ is the relevant Hamiltonian  $F(B,0) + l(\dot{\varphi}) + Q(\dot{\varphi},\dot{\varphi})$ where $l$ is the unique linear relevant Hamiltonian that satisfies $l(\dot{\varphi}) = DF(B,0) \dot{\varphi}$ for all $\dot{\varphi}$ who are polynomials of order $\left\lfloor \frac{d}{2} + 1 \right\rfloor$ on $B^+$, and $Q$ is the unique quadratic relevant Hamiltonian that agrees with $\frac{1}{2} D^2 F(B,0)(\dot{\varphi},\dot{\varphi})$ on all $\dot{\varphi}$ which are affine on $B^+$. These heuristics are made precise in \cite{ABKM}, Section 8.4.

~\\
The relevant part of the flow on the next scale, the map $H_+$, is defined as follows: For $B_+ \in  \mathcal{B}_{k+1}$
\begin{align*}
H_+(B_+)
&= \mathbf{A}^q_k H (B_+) + \mathbf{B}^q_k K(B_+)
\\
&= \sum_{B \in \mathcal{B}_{k+1}(B_+)} \Pi_2 \mathcal{R}_{k+1} H(B)
- \sum_{B \in \mathcal{B}_{k+1}(B_+)} \Pi_2 \mathcal{R}_{k+1} K(B).
\end{align*}

\begin{remark}
We comment again on the motivation for the decomposition into $H$ and $K$ (see also at the beginning of this section). $\mathbf{A}^q_k H$ is a linear order perturbation which results in the fact that $H$ appears to second order in $K_+$, see Proposition \ref{Prop:SingleStepRG_BulkFlow}. Moreover, $\mathbf{B}^q_k K$ is defined in such a way that $(H,K)\mapsto K_+$ is a contraction, see Proposition~\ref{Prop:FirstDerivative_BulkFlow}.
\end{remark}

For the definition of the irrelevant part $K_+$ of the flow at the next scale, set
$$
\widetilde{H}(B) = \Pi_2 \mathcal{R}_{k+1} H(B) - \Pi_2 \mathcal{R}_{k+1} K(B),
$$
and for $X \in \mathcal{P}_k$ and $U \in \mathcal{P}_{k+1}$,
\begin{align*}
&\chi(X,U) = \1_{\pi(x)=U}, \quad\mathrm{where}
\\
&\pi(X) = \bigcup_{Y \in \mathcal{C}(X)} \tilde{\pi}(Y) \quad \mathrm{and}
\\
&\tilde{\pi}(Y)
= \begin{cases}
\bar{X} & \text{ if } X \in \mathcal{P}^c \setminus \mathcal{S},
\\
B_+ & \text{ where } B_+ \in \mathcal{B}_+ \text{ with } B_+ \cap X \neq \es \text{ for } X \in \mathcal{S} \setminus \lbrace \es \rbrace,
\\
\es & \text{ if } X = \es.
\end{cases}
\end{align*}
Then
\begin{align}
K_+(U,\varphi) &= \mathbf{S}^q_k (H_+,K_+)(U,\varphi)
\nonumber
\\
&= \sum_{X \in \mathcal{P}} \chi(X,U) \left( e^{\widetilde{H}(\varphi)} \right)^{U \setminus X} \left( e^{\widetilde{H}(\varphi)} \right)^{- X \setminus U}
\nonumber
\\ & \qquad \times
\int 
\left[
\left( 1 - e^{\widetilde{H}(\varphi)} \right)
\circ \left( e^{-H(\varphi + \xi)} - 1 \right)
\circ K(\varphi + \xi)
\right]
(X) \mu_+(\de\xi).
\label{eq:defn_K+}
\end{align}
If the dependence of $\mathbf{S}^q_k$ on $q$ is not of direct importance we omit it from the notation.

~\\
We review the following properties of the map $(H,K) \mapsto K_+$ from Lemma 6.4 in \cite{ABKM}.

\begin{lemma}
Let $L \geq 2^{d+2} + 4R$.
For $H \in M_0(\mathcal{B}_k)$ the functional $K_+$ defined above has the following properties.
\begin{enumerate}
	\item If $K \in M(\mathcal{P}_k)$, then $K_+ \in M(\mathcal{P}_+)$.
	\item If $K \in M(\mathcal{P}_k)$ factors on scale $k$, then $K_+$ factors on scale $k+1$.
\end{enumerate}
\end{lemma}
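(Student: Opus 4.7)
The plan is to verify the stated properties by direct inspection of the defining formula
\[
K_+(U,\varphi) = \sum_{X \in \mathcal{P}} \chi(X,U)\, \left(e^{\widetilde H(\varphi)}\right)^{U\setminus X} \left(e^{\widetilde H(\varphi)}\right)^{-X\setminus U} \int \left[(1-e^{\widetilde H(\varphi)}) \circ (e^{-H(\varphi+\xi)}-1) \circ K(\varphi+\xi)\right](X)\, \mu_+(\de\xi),
\]
checking each condition in the definition of $M(\mathcal{P}_+)$ one at a time for item (1), and then tracing factorization through the same formula for item (2).

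\medskip

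First I would handle measurability and the required $C^{r_0}$-smoothness in $\varphi$. The polymer sum over $X$ is finite, since the constraint $\chi(X,U)=1$ forces $X$ to lie in a bounded collection of polymers incident to $U$. The functional $\widetilde H = \Pi_2\mathcal{R}_{k+1} H - \Pi_2\mathcal{R}_{k+1} K$ inherits $C^{r_0}$-smoothness from $H$ and $K$, and the circ-product inside the integral is $C^{r_0}$ in $\varphi$ by composition. Integration against $\mu_+$ preserves smoothness by a standard dominated-convergence argument, supported by the exponential weight estimates in Lemma \ref{lemma:Properties_of_weights}.

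\medskip

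Next I would address translation invariance at scale $k+1$. For $y \in (L^{k+1}\Z)^d \subset (L^k\Z)^d$, the functionals $H$ and $K$ are $\tau_y$-invariant at scale $k$, while $\widetilde H$ is $\tau_y$-invariant at scale $k+1$ by construction. The measure $\mu_+$ is translation invariant since $\mathcal{C}_+^q$ commutes with translations. Performing the change of variable $X \mapsto \tau_y X$ in the sum -- using that $\pi$, $\tilde\pi$, the block structure and $\chi$ all commute with $\tau_y$ -- and $\xi \mapsto \tau_y\xi$ in the integral yields $K_+(\tau_y U, \tau_y\varphi) = K_+(U,\varphi)$.

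\medskip

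The crux is shift invariance and locality. For shift invariance let $\psi$ be constant on each connected component of the scale-$k$ small neighbourhood $U^*$, and for locality let $\varphi|_{U^*} = \psi|_{U^*}$. For every $X$ contributing to the sum, $\chi(X,U)=1$ forces $\pi(X) = U$: either $X$ is small and each component meets a single scale-$(k+1)$ block of $U$, or each large component $Y \subset X$ satisfies $\overline Y \subset U$. Under the hypothesis $L \geq 2^{d+2}+4R$, a block-geometry argument shows that the scale-$(k-1)$ small neighbourhoods $Y^*$ of every $Y \in \mathcal{C}_k(X)$, together with $B^*$ for each $B \in \mathcal{B}_k(X)$ appearing in the evaluation of $K$ and $H$, all lie inside the scale-$k$ small neighbourhood $U^*$, and their connected components respect those of $U^*$. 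Hence on every relevant scale-$(k-1)$ neighbourhood $\varphi$ and $\psi$ agree (or $\psi$ is constant on each component), so the shift invariance and locality of $H$ and $K$ at scale $k$, together with that of $\widetilde H$ at scale $k+1$, transfer to $K_+$.

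\medskip

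For the factorization claim, suppose $K(X) = \prod_{Y \in \mathcal{C}_k(X)} K(Y)$. Then the circ-product $(1-e^{\widetilde H}) \circ (e^{-H}-1) \circ K$ factorizes over $\mathcal{C}_k(X)$, since each of its factors is block-local at its own scale. Splitting $U = \bigsqcup_{V\in\mathcal{C}_{k+1}(U)} V$ and decomposing each $X$ accordingly as $X = \bigsqcup_V X_V$ with $\pi(X_V) = V$, the prefactor $(e^{\widetilde H})^{U\setminus X}(e^{\widetilde H})^{-X\setminus U}$ factors over $V$. Finally, the finite-range property of the covariance $\mathcal{C}_+^q$ ensures that the $\mu_+$-expectation factorizes across the well-separated components. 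Grouping contributions componentwise yields $K_+(U) = \prod_{V\in\mathcal{C}_{k+1}(U)} K_+(V)$, the desired factorization at scale $k+1$.

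\medskip

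I expect the main obstacle to be the third step -- specifically the polymer-geometric containment $Y^* \subset U^*$ across scales, together with the non-fragmentation of connected components of $U^*$. This is precisely what the hypothesis $L \geq 2^{d+2}+4R$ is calibrated to provide; the remaining verifications are bookkeeping on top of this geometric fact.
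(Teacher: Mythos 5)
The paper does not actually prove this lemma: it is quoted verbatim as Lemma~6.4 of \cite{ABKM} ("We review the following properties \dots from Lemma 6.4 in \cite{ABKM}"), so there is no in-paper proof to compare against. Your outline reconstructs the standard argument from \cite{ABKM} and is essentially sound: the finiteness of the sum over $X$ with $\chi(X,U)=1$, translation invariance via the equivariance of $\pi$, $\tilde\pi$ and $\mu_+$ under $(L^{k+1}\Z)^d$, the cross-scale containment of neighbourhoods as the crux of locality and shift invariance, and factorization via the finite-range decomposition are exactly the ingredients used there. Two points deserve more care than your sketch gives them. First, the covariance $C^q_{k+1}$ is not zero but equal to a \emph{constant} $-M_{k+1}$ for $|x|_\infty\geq L^{k+1}/2$ (Proposition~\ref{Prop:FRD_Buchholz}), so the Gaussian expectation does not factor over distant sets for arbitrary functionals; it factors here only because every integrand is shift invariant and hence depends on the field only through its discrete gradients, whose covariance genuinely vanishes beyond that range. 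Second, for the factorization step you should record why distinct connected components of a $(k+1)$-polymer are separated by at least $L^{k+1}$ (non-touching blocks in the block tiling are a full block apart), so that after enlarging to small neighbourhoods the separation still exceeds $L^{k+1}/2$; this, together with the containment $X^*\subset U^*$ for all $X$ with $\pi(X)=U$, is precisely what the hypothesis $L\geq 2^{d+2}+4R$ guarantees, as you correctly anticipate.
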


The construction on $K_+$ gives it a local dependence on $K$, as formulated in the next proposition.

\begin{prop}\label{Prop:field_locality}
The map $(H,K) \mapsto K_+$ satisfies the \normalfont{restriction property}\textit{, that is
for $U \in \mathcal{P}_{k+1}$ the value of $K_+(U)$ depends on $U$ only via the restriction $K \big\vert_{U^*}$ of $K$ to polymers in $\mathcal{P}(U^*)$.}
\end{prop}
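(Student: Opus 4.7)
The plan is to read the restriction property directly off the defining equation \eqref{eq:defn_K+} by tracking every occurrence of the argument $K$ on the right-hand side and verifying that $K$ is always queried at polymers lying inside $U^*$.

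First I would analyse the indicator $\chi(X,U)=\1_{\pi(X)=U}$ in the outer sum and show that it forces $X \subset U$, hence $X \in \mathcal{P}_k(U^*)$. Indeed, $\pi(X)$ is the union of $\tilde\pi$ over the connected components of $X$, and for each component $Y$ one has $Y \subset \tilde\pi(Y)$: for large connected $Y$ by definition of the closure $\bar Y$, and for small $Y$ (with $|Y|_k \leq 2^d$) because for $L$ large enough such a polymer fits inside a single $(k+1)$-block, so the block $B_+$ selected in the definition of $\tilde\pi(Y)$ actually contains $Y$. In particular the factor $(e^{\widetilde H})^{-X\setminus U}$ reduces to $1$ in every surviving term of the sum.

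Next I would expand the triple circle product inside the integrand as a sum over ordered decompositions $X = Y_1 \sqcup Y_2 \sqcup Y_3$ into pairwise disjoint pieces. The functional $K$ enters this expansion in exactly two ways. Directly, via the factor $K(\varphi+\xi)(Y_3) = \prod_{Z \in \mathcal{C}(Y_3)} K(Z, \varphi+\xi)$, whose connected components satisfy $Z \subset Y_3 \subset X \subset U \subset U^*$ and thus belong to $\mathcal{P}^c(U^*)$. Indirectly, through $\widetilde H(B) = \Pi_2 \mathcal{R}_{k+1} H(B) - \Pi_2 \mathcal{R}_{k+1} K(B)$, which only reads $K$ on the $(k+1)$-block $B$ itself; the relevant blocks are those appearing in the prefactor $(e^{\widetilde H})^{U\setminus X}$, which lie in $\mathcal{B}_{k+1}(U)$, and those appearing in the inner factor $(1-e^{\widetilde H})(Y_1)$, which are $(k+1)$-blocks meeting $Y_1 \subset X \subset U$. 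In every case the corresponding $K(B)$ is an evaluation at a polymer in $\mathcal{P}(U) \subset \mathcal{P}(U^*)$. The remaining factor $(e^{-H(\varphi+\xi)}-1)(Y_2)$ is built from $H$ alone and carries no $K$-dependence.

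Combining these observations, altering $K$ on polymers outside $\mathcal{P}(U^*)$ leaves every summand and every integrand in \eqref{eq:defn_K+} unchanged, which is precisely the restriction property. The main technical point where I expect to need some care is the justification that a small polymer $X$ is contained in a single $(k+1)$-block, since the definition of $\tilde\pi$ on small $X$ only demands $B_+ \cap X \neq \es$; the inclusion $X \subset B_+$ follows from the smallness condition $|X|_k \leq 2^d$ combined with the choice of $L \geq 2^{d+3}+16R$ made in Lemma \ref{lemma:Properties_of_weights}, and is the only place where the precise form of $\tilde\pi$ matters for the argument.
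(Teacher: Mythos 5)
Your overall strategy---reading the restriction property off \eqref{eq:defn_K+} by tracking every occurrence of $K$---is exactly what the paper does (its proof is a one-line appeal to the definition of $K_+$ and the locality of $\mathcal{R}_{k+1}$), and your accounting of where $K$ enters (directly through the last factor of the circ product, indirectly through $\widetilde H(B)=\Pi_2\mathcal{R}_{k+1}H(B)-\Pi_2\mathcal{R}_{k+1}K(B)$) is the right bookkeeping. However, the step you yourself single out as the crux is wrong: it is \emph{not} true that a small connected polymer $Y\in\mathcal{S}_k$ is contained in a single $(k+1)$-block, no matter how large $L$ is. A small polymer consisting of two adjacent $k$-blocks sitting on opposite sides of a $(k+1)$-block boundary (or, worse, one meeting up to $2^d$ blocks at a corner) straddles several $(k+1)$-blocks; the definition of $\tilde\pi$ then picks out just one block $B_+$ meeting $Y$, and $Y\not\subset B_+$. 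Consequently $\chi(X,U)=1$ does \emph{not} force $X\subset U$, and the factor $\bigl(e^{\widetilde H}\bigr)^{-X\setminus U}$ does not in general reduce to $1$ --- indeed, if it did, that factor would be pointless in the definition.

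The correct statement, and the reason the proposition is phrased in terms of $U^*$ rather than $U$, is that $\pi(X)=U$ implies $X\subset U^*$: a large connected component is contained in its closure $\bar Y\subset U$, while a small component $Y$ meets some $(k+1)$-block $B_+\subset U$ and has $\ell^\infty$-diameter at most $2^dL^k$, so it is absorbed by the enlargement $\hat B$ of side length $(2^{d+1}+1)L^k$ entering the definition of $U^*=\bigcup_{B\in\mathcal{B}_k(U)}\hat B$. With $X\subset U^*$ in place of $X\subset U$, the rest of your argument goes through verbatim: all polymers at which $K$ is evaluated, whether the connected components of the third circ factor or the $k$-blocks of $U\setminus X$, $X\setminus U$ and $Y_1$ feeding into $\widetilde H$, lie in $\mathcal{P}(U^*)$. (A minor point: $\widetilde H$ is defined on $k$-blocks, matching the sum $\sum_{B}\Pi_2\mathcal{R}_{k+1}(\cdot)(B)$ in the definition of $H_+$, so the blocks in the prefactors are $k$-blocks of $U\setminus X$ and $X\setminus U$, not elements of $\mathcal{B}_{k+1}(U)$; this does not affect the conclusion.)
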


\begin{proof}
This follows from the definition of $K_+$ and from the fact that $\mathcal{R}_{k+1}$ preserves locality.
\end{proof}


For the construction of the infinite-volume flow later we  consider the family $(K^{\Lambda})_{\Lambda}$ in dependence on the torus $\Lambda$. More precisely, we consider tori $\Lambda_N$ with increasing side length $L^N$, $N \in \N$. Let $\mathcal{P}_k(\Z^d)$ be the set of finite unions of $k$-blocks in~$\Z^d$. We need the following compatibility condition.

\begin{defn}\label{Defn:Zd-property}
We say that a family of maps $(K^{\Lambda})_{\Lambda}$ satisfies the $(\Z^d)$-property if for any $X \in \mathcal{P}_k(\Z^d)$ and for $\Lambda \subset \Lambda'$ satisfying $\diam (X) \leq \frac{1}{2} \diam(\Lambda)$ it holds that
$$
K^{\Lambda}(X) = K^{\Lambda'}(X).
$$
\end{defn}

Given $(H,K^{\Lambda})$, we note the dependence on $\Lambda$ also in the map $\mathbf{S}^{\Lambda}$. By the definition of the map $(H,K) \mapsto \mathbf{S}^{\Lambda}_k(H,K)$ we directly get the following property.

\begin{prop}\label{Prop:Zd-property}
Let $(K^{\Lambda})_{\Lambda}$ satisfy the $(\Z^d)$-property and let $H \in M_0(\mathcal{B})$. Then $(\mathbf{S}^{\Lambda}(H,K,q))_{\Lambda}$ also satisfies the $(\Z^d)$-property.
\end{prop}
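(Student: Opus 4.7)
The plan is to combine the restriction property (Proposition~\ref{Prop:field_locality}) with the local $N$-independence of the finite-range decomposition (Remark~\ref{Rem:FRD_Indep_of_N}). Fix $\Lambda = \Lambda_N \subset \Lambda_{N'} = \Lambda'$ and $U \in \mathcal{P}_{k+1}(\Z^d)$ with $\diam(U) \leq \tfrac{1}{2}\diam(\Lambda)$; the goal is to show $\mathbf{S}^{\Lambda}(H, K^{\Lambda}, q)(U) = \mathbf{S}^{\Lambda'}(H, K^{\Lambda'}, q)(U)$, and I would do this by verifying that every factor in the explicit formula \eqref{eq:defn_K+} coincides under the two tori.

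First I would observe that the sum over $X$ in \eqref{eq:defn_K+} is restricted by $\chi(X, U) = \1_{\pi(X) = U}$, which forces $X$ (through the closure $\bar{\cdot}$ in $\tilde\pi$, or through being small and touching $U$) into a neighbourhood of $U$ of diameter at most $C\diam(U)$ with $C = C(d, R)$. Consequently every polymer argument of $K^{\Lambda}$, $H$, and $\widetilde H$ that actually appears lies well inside $\Lambda$, so the $\Z^d$-property of $(K^{\Lambda})_\Lambda$ yields $K^{\Lambda}(Y) = K^{\Lambda'}(Y)$ on all such $Y$; the functional $H$ and the projection $\Pi_2$ are manifestly independent of the ambient torus. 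It then remains to handle the two Gaussian operations: the inner renormalisation $\mathcal{R}_{k+1}$ entering $\widetilde H(B) = \Pi_2 \mathcal{R}_{k+1}(H - K^{\Lambda})(B)$, and the outer integral against $\mu_+$. For the first, $H(B)$ and $K^{\Lambda}(B)$ are shift-invariant and local on $B^*$, and $\Pi_2$ extracts only second-order terms at $\varphi = 0$, so the output is a polynomial expression in the gradient covariances $\mathbb{E}_{\mu_{k+1}}\nabla_i \xi(x)\nabla_j \xi(y)$ with $x, y \in B^*$, which agree in $\Lambda$ and $\Lambda'$ by Remark~\ref{Rem:FRD_Indep_of_N}. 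For the outer integral, the integrand is a product of local functionals supported on a bounded neighbourhood of $X$, and the same remark ensures its distribution under $\mu_+$ is the same in $\Lambda$ and $\Lambda'$.

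The only real obstacle is bookkeeping the diameter constraints: one must verify that each geometric enlargement $U \leadsto X \leadsto X^* \leadsto B^*$, combined with the range $L^{k+1}/2$ of the covariance $\mathcal{C}_{k+1}^q$ appearing in the convolutions, stays within the region where Remark~\ref{Rem:FRD_Indep_of_N} actually applies (the set called $\Lambda_N'$ there). This amounts to absorbing a single geometric constant, depending only on $d$, $R$, and $L$, into the factor $\tfrac{1}{2}$ of Definition~\ref{Defn:Zd-property}; equivalently, one can argue at scales $k$ sufficiently far below $N$. Once this geometric accounting is in place, the equality of $\mathbf{S}^{\Lambda}$ and $\mathbf{S}^{\Lambda'}$ at $U$ follows term-by-term from the identifications above, and the $\Z^d$-property is preserved.
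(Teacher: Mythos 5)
Your proof is correct and follows essentially the same route as the paper's: the restriction property and the definition of $\chi(X,U)$ confine all polymers appearing in \eqref{eq:defn_K+} to a neighbourhood of $U$, after which the $(\Z^d)$-property of $(K^{\Lambda})_{\Lambda}$ applies. The paper's own proof is terser --- it does not spell out the torus-independence of the Gaussian operations $\mathcal{R}_{k+1}$ and $\mu_+$ (that point is deferred to the discussion via Remark~\ref{Rem:FRD_Indep_of_N} after the definition of $K_+^{\Z^d}$), nor the diameter bookkeeping for $U^*$ versus the factor $\tfrac{1}{2}$ in Definition~\ref{Defn:Zd-property} that you flag, so your extra care there is a welcome supplement rather than a deviation.
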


\begin{proof}
Let $U \in \mathcal{P}_{+}(\Z^d)$ such that $\diam (U) \leq \frac{1}{2} \diam (\Lambda)$. Let $\Lambda'$ be a torus larger than $\Lambda$. Then
$$
K^{\Lambda'}_+(U) = \mathbf{S}(H, K^{\Lambda'})(U).
$$
We use the restriction property in Proposition \ref{Prop:field_locality} to see that $\mathbf{S}(H,K^{\Lambda'})(U)$ only depends on $K^{\Lambda'}$ through $K^{\Lambda'}\big\vert_{U^*}$. In fact, no polymers that are larger than $U$ can appear in the formula for $\mathbf{S}_k$ due to the definition of $\chi(X,U)$. Thus for any $X \in \mathcal{P}(U^*)$ that appears in $\mathbf{S}$ it holds that $\diam (X) \leq \frac{1}{2} \diam (\Lambda)$, and we can apply the assumption that $(K^{\Lambda})_{\Lambda}$ satisfies the $(\Z^d)$-property.
\end{proof}

\subsubsection{Properties of the renormalisation map}

Here we state important properties of the renormalisation map $\mathbf{T}_k$, namely smoothness of the irrelevant part (Proposition \ref{Prop:Smoothness_BulkFlow}), an improved bound on the first derivative of the irrelevant part (Lemma \ref{lem:First_der_improved_large_polymers}),
 contractivity of the linearisation of the irrelevant part (Proposition \ref{Prop:FirstDerivative_BulkFlow}), and a single step estimate (Proposition \ref{Prop:SingleStepRG_BulkFlow}). Smoothness and contractivity are proven in \cite{ABKM}, but we add restriction and $(\Z^d)$-property in the statements which will be useful to perform the extension to infinite volume in the next section.

We explicitly analyse the dependence of $\mathbf{S}_k$ on $q$ in the next statement, so we consider $\mathbf{S}_k$ as a map from $M_0(\mathcal{B}_k) \times M(\mathcal{P}_k^c) \times \R^{(d \times m) \times (d \times m)}_{\text{sym}}$ to $M(\mathcal{P}_{k+1})$. This proposition is a small extension of Theorem 6.7 in {\cite{ABKM}}.

\begin{prop}[Smoothness of the bulk flow]\label{Prop:Smoothness_BulkFlow}
Let
$$
U_{\rho,\kappa} = \lbrace
(H,K,q) \in M_0(\mathcal{B}_k) \times M(\mathcal{P}_k^c) \times \R^{(d \times m) \times (d \times m)}_{\text{sym}}:
\Vert H \Vert_{k,0} < \rho, \Vert K \Vert_k^{(A)} < \rho, \Vert q\Vert < \kappa
\rbrace.
$$
There is $L_0$ such that for all integers $L \geq L_0$ there are $A_0$, $h_0$ and $\kappa$ with the following property. For all $A \geq A_0$ and $h \geq h_0$ there exists $\rho = \rho(A)$ such that for all $k \leq N$
$$
\mathbf{S}_k \in C^{\infty}\left( U_{\rho,\kappa}, M(\mathcal{P}_{k+1}^c) \right).
$$
For any $j_1,j_2,j_3 \in \N$ there are constants $C_{j_1,j_2,j_3}$ independent of $N$ such that for any $(H,K,q) \in U_{\rho,\kappa}$
\begin{align*}
\left\Vert
D_1^{j_1} D_2^{j_2} D_3^{j_3} \mathbf{S}_k(H,K,q) (\dot{H}^{j_1}, \dot{K}^{j_2}, \dot{q}^{j_3})
\right\Vert_{k+1}^{(A)}
\leq
C_{j_1,j_2,j_3} \Vert \dot{H} \Vert_{k,0}^{j_1} \left( \Vert \dot{K} \Vert_k^{(A)} \right)^{j_2} \Vert \dot{q} \Vert^{j_3}.
\end{align*}
Moreover, $\mathbf{S}_k(H,K,q)(U)$ satisfies the restriction property and preserves the $(\Z^d)$-property.
\end{prop}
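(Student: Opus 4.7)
The plan is to split the statement into its three components: (a) smoothness of $\mathbf{S}_k$ together with the quantitative derivative bounds, (b) the restriction property, and (c) preservation of the $(\Z^d)$-property. Points (b) and (c) will be essentially immediate once (a) is in place: for each fixed $(H,K,q)$, restriction is exactly the content of Proposition \ref{Prop:field_locality}, and the $(\Z^d)$-property is Proposition \ref{Prop:Zd-property}. The only additional check is that both properties pass to the $H$, $K$, $q$ derivatives of $\mathbf{S}_k$, which is clear because differentiating under the Gaussian integral preserves the dependence of $\mathbf{S}_k(H,K,q)(U)$ on $K\vert_{U^*}$ and on the local structure of $K$ over polymers of bounded diameter.

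For part (a), the strategy is that of Theorem 6.7 in \cite{ABKM}, with the $q$-dependence tracked explicitly. First I would unpack the formula \eqref{eq:defn_K+} as a finite sum over polymers $X$ with $\pi(X) = U$, of Gaussian integrals against $\mu_{k+1}^q$ whose integrand is a finite product of the factors $e^{\widetilde{H}}$, $(1-e^{\widetilde{H}})$, $(e^{-H(\varphi+\xi)}-1)$, and $K(\varphi+\xi)$. Each such factor is analytic in $H$ (in the norm $\Vert\cdot\Vert_{k,0}$) and polynomial, hence analytic, in $K$ (in the norm $\Vert\cdot\Vert_k^{(A)}$). The $q$-dependence enters (i) inside $\widetilde{H} = \Pi_2 \mathcal{R}_{k+1}(H - K)$ through the covariance of $\mathcal{R}_{k+1}$, and (ii) directly in the outer integration against $\mu_{k+1}^q$. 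Smoothness of $q \mapsto \mu_{k+1}^q$, paired with integrands controlled by the weights, then follows from the smoothness and $N$-uniform bounds on $D_q^l \mathcal{C}^q_{k+1}$ supplied by Proposition \ref{Prop:FRD_Buchholz}. Differentiating under the Gaussian integral produces another Gaussian integral of the same algebraic shape with extra polynomial insertions in $\varphi$, so the iteration in $j_1, j_2, j_3$ proceeds formally.

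For the quantitative norm estimate, the main tool is Lemma \ref{lemma:Properties_of_weights}. I would bound each factor in the pointwise norm $\vert\cdot\vert_{k,X,T_\varphi}$ by the corresponding weighted norm times $w_k^X$. Item (7) converts the integration against $\mu_{k+1}$ of $w_k^X$ into $(A_{\mathcal{P}}/2)^{|X|_k} w_{k:k+1}^X$, and item (8) gives the analogous block-bound. Items (5) and (6) then exchange $w_{k:k+1}^{X}$ for $w_{k+1}^U$, up to the factor $W_k^{U^+}$ controlled by smallness of $H,K$. One ends up with a sum over $X$ with $\pi(X) = U$ of a product of $\Vert K \Vert_k^{(A)} A^{-|Y|_k}$ factors and some $A^{|X|_k}$-type prefactors; the constraint $\pi(X) = U$ together with standard polymer-counting already carried out in Sections 8 and 9 of \cite{ABKM} produces the desired factor $A^{|U|_{k+1}}$ once $A$ is chosen large relative to the combinatorial constants (this is the reason why $A_0$ and $\rho = \rho(A)$ must depend on $L$ via these constants). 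For the $q$-derivatives, each application of $D_q$ brings down at most one factor of $D_q \mathcal{C}^q_{k+1}$, bounded uniformly in $N$ by Proposition \ref{Prop:FRD_Buchholz}; these factors are absorbed in the constant $C_{j_1,j_2,j_3}$.

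The hard part will be the bookkeeping for large $X$, i.e.\ $X \in \mathcal{P}^c_k \setminus \mathcal{S}_k$: there the cancellation factors $(e^{\widetilde{H}})^{-X\setminus U}$ and the subtraction structure of the circ product are essential to absorb the growth in $|U|_{k+1}$ at the next scale, and this is the delicate content of Sections 8 and 9 of \cite{ABKM}. The genuinely new ingredient for the present statement, namely uniform-in-$N$ control of $q$-derivatives, is granted by the uniform-in-$N$ bounds of Proposition \ref{Prop:FRD_Buchholz} together with Remark \ref{Rem:FRD_Indep_of_N}, which guarantees that all quantities entering the single-step flow depend on $q$ in the same way regardless of $N$; this is precisely what is needed so that the $C_{j_1,j_2,j_3}$ do not depend on $N$.
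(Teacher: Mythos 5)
Your proposal is correct and follows essentially the same route as the paper: the restriction and $(\Z^d)$-properties are handled by citing Propositions \ref{Prop:field_locality} and \ref{Prop:Zd-property}, and the smoothness with $N$-uniform derivative bounds is deferred to the machinery of Theorem 6.7 in \cite{ABKM} (with the $q$-dependence controlled via Proposition \ref{Prop:FRD_Buchholz}), which is exactly what the paper does, albeit more tersely.
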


\begin{proof}
The restriction property is stated in Proposition \ref{Prop:field_locality}. The $(\Z^d)$-property is preserved by Proposition \ref{Prop:Zd-property}. The smoothness and bounds are part of Theorem 6.7 in \cite{ABKM}.
\end{proof}

For the transfer of smoothness properties from the global flow back to the finite-volume flow in Proposition \ref{Prop:Existence_BulkFlow} we need the following improved bound on the first derivative of $\mathbf{S}_k$ on long polymers.

\begin{lemma}\label{lem:First_der_improved_large_polymers}
Assume that Proposition \ref{Prop:Smoothness_BulkFlow} holds. Let $\mathcal{P}_{k+1}^2(\Lambda)$ be the set of polymers $U \in \mathcal{P}_{k+1}(\Lambda)$ such that $\diam(U) > \frac{1}{2} \diam(\Lambda)$. Then, for any $x \in (0,2\alpha)$, where $\alpha = \left[ (1+2^d)(1+6^d) \right]^{-1}$, and for any $(H,K) \in U_{\rho}$,
$$
\left\Vert
D_H D_K D_q \mathbf{S}_k (H,K,q)(\dot{H},\dot{K},\dot{q}) \big\vert_{\mathcal{P}_{k+1}^2(\Lambda)}
\right\Vert_{k+1}^{(A)}
\leq C_1
 A^{-\frac{x}{2}L^{N-(k+1)}} A^4
\Vert \dot{H} \Vert_{k,0}
\Vert \dot{K} \Vert_{k}^{(A)}
\Vert \dot{q} \Vert.
$$
\end{lemma}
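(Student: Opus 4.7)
The plan is to open up the proof of Proposition \ref{Prop:Smoothness_BulkFlow} (Theorem 6.7 of \cite{ABKM}) and extract an extra smallness factor $A^{-x|U|_{k+1}}$ in the per-polymer bound on $\Vert D\mathbf{S}_k(\dot H,\dot K,\dot q)(U)\Vert_{k+1,U}$. The geometric input is purely combinatorial: a connected $(k+1)$-polymer $U$ with $\diam(U) > \frac{1}{2}\diam(\Lambda)$ must contain at least $\frac{1}{2}L^{N-(k+1)}$ blocks at scale $k+1$ (since a connected $(k+1)$-polymer of diameter $D$ contains at least $D/L^{k+1}$ blocks), so any factor of the form $A^{-x|U|_{k+1}}$ in the bound automatically yields the advertised decay $A^{-\frac{x}{2}L^{N-(k+1)}}$.

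To produce this extra factor I would revisit the explicit expression \eqref{eq:defn_K+} and differentiate once in each of $H$, $K$ and $q$. Each derivative distributes one variation among the factors appearing under the sum: $\dot H$ acts on one of the $\widetilde H$'s or on $H$, $\dot K(X_j)$ on a single connected component of the polymer $X$ appearing in the sum, and $\dot q$ on the Gaussian measure $\mu_+$ through the $q$-dependence of $\mathcal{C}_{k+1}^q$; the remaining factors carry $H$ and $K$. Applying the pointwise polymer norm estimates, the weight inequalities from Lemma \ref{lemma:Properties_of_weights} and Gaussian integration exactly as in the proof of Theorem 6.7 of \cite{ABKM} gives a bound of the form
$\Vert D\mathbf{S}_k(\dot H,\dot K,\dot q)(U)\Vert_{k+1,U} \leq C A^{4}\,\Vert\dot H\Vert_{k,0}\,\Vert\dot K\Vert_k^{(A)}\,\Vert\dot q\Vert\;\sum_{X \in \mathcal{P}_k:\pi(X)=U} A^{-|X|_k}$,
where $A^{4}$ collects the $N$-independent number of $A$-factors produced by the weight manipulations and the three derivatives.

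The decisive input is the sharp reblocking estimate underlying the proof of Theorem 6.7 of \cite{ABKM}, namely $\sum_{X \in \mathcal{P}_k:\,\pi(X) = U} A^{-|X|_k} \leq A^{-(1+2\alpha)|U|_{k+1}}$ valid for $A$ sufficiently large, where $\alpha = [(1+2^d)(1+6^d)]^{-1}$ controls the combinatorial cost of reassembling a connected $(k+1)$-polymer from its $k$-components under the map $\pi$. For any $x \in (0,2\alpha)$, the factor $A^{-(2\alpha-x)|U|_{k+1}}$ absorbs the remaining $U$-dependent constants, leaving $A^{-(1+x)|U|_{k+1}}$. Multiplying by $A^{|U|_{k+1}}$ to pass to the global weak norm and using $|U|_{k+1} \geq \frac{1}{2}L^{N-(k+1)}$ for $U \in \mathcal{P}_{k+1}^2(\Lambda)$ then yields the stated estimate, with $C_1$ depending on $x$, $L$, $d$ but not on $N$ or $k$.

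The main obstacle is to locate in \cite{ABKM} the exact step where the gain $A^{-2\alpha|U|_{k+1}}$ (the source of the constant $\alpha$) appears in the reblocking of $\pi$-preimages, and to verify that all remaining combinatorial and weight-induced factors can indeed be absorbed into the prefactor $A^{4}$. This is a careful bookkeeping task rather than a new analytic estimate, and it is exactly the kind of improvement that is not needed in \cite{ABKM} (where one only requires the weaker bound without the extra $A^{-x|U|_{k+1}}$) but becomes essential here when transferring smoothness from the infinite-volume flow back to finite volume.
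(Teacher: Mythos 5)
Your proposal matches the paper's proof in all essentials: both extract the extra factor $A^{-x|U|_{k+1}}$, $x\in(0,2\alpha)$, from the reblocking step in the smoothness proof of \cite{ABKM} (the paper points to Lemma 9.6 there, where the factor $\bigl((48 A_{\mathcal{P}})^{2L^d}A^{-2\alpha}\bigr)^{|U|_{k+1}}$ appears and is dominated by $A^{-x|U|_{k+1}}$ once $A \geq (48A_{\mathcal{P}})^{2L^d/(2\alpha-x)}$), collect the $N$-independent $A$-powers from the three derivatives into $A^4$, and convert $\diam(U)>\tfrac12\diam(\Lambda)$ into $|U|_{k+1}>\tfrac12 L^{N-(k+1)}$. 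The only cosmetic difference is that you phrase the gain as a bound on $\sum_{X:\pi(X)=U}A^{-|X|_k}$ while the paper quotes the already-assembled per-polymer estimate for the map $P_1$; the mechanism and constants are the same.
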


\begin{proof}

When inspecting the proof of Lemma 9.6 in \cite{ABKM}, we get
\begin{align*}
A^{|U|_{k+1}}
&
\left\Vert
D_{I_1} D_{I_2} D_{J} D_{K}
P_1 (I_1,I_2,J,K)(U)(\dot{I_1},\dot{I_2},\dot{J}, \dot{K})
\right\Vert
\\ &
\leq
A^{-x|U|_{k+1}}A^2 
\vertiii{ \dot{I}_1 }
\vertiii{ \dot{I}_2 }
\vertiii{ \dot{J} }
\left\Vert \dot{K} \right\Vert_{k:k+1}^{(A/(2A_{\mathcal{P}}),B) }
\end{align*}
for $x \in (0,2\alpha)$. Namely, we have that
\begin{align*}
A^{|U|_{k+1}}
&
\left\Vert
D_{I_1} D_{I_2} D_{J} D_{K}
P_1 (I_1,I_2,J,K)(U)(\dot{I_1},D_{I_2},D_{J}, D_{K})
\right\Vert
\\ &
\leq
\left(
\frac{(48 A_{\mathcal{P}})^{2L^d}}{A^{2 \alpha}} 
\right)^{|U|_{k+1}}A^2 
\vertiii{ \dot{I}_1 }
\vertiii{ \dot{I}_2 }
\vertiii{ \dot{J} }
\left\Vert \dot{K} \right\Vert_{k:k+1}^{(A/(2A_{\mathcal{P}}),B) }
\\ &
\leq A^{-x|U|_{k+1}}A^2 
\vertiii{ \dot{I}_1 }
\vertiii{ \dot{I}_2 }
\vertiii{ \dot{J} }
\left\Vert \dot{K} \right\Vert_{k:k+1}^{(A/(2A_{\mathcal{P}}),B) }
\end{align*}
if we choose
$$
A \geq \left(48 A_{\mathcal{P}} \right)^{\frac{2 L^d}{2 \alpha - x}}.
$$

This estimate and chain rule implies that in the case of the bulk flow there is a constant $C_1$ such that for any $x \in (0,2 \alpha)$ and $(H,K) \in U_{\rho}$
\begin{align*}
&
A^{|U|_{k+1}}
\left\Vert
D_H D_K D_q \mathbf{S}_k (H,K,q) (\dot{H},\dot{K},\dot{q})(U)
\right\Vert_{k+1,U}
\\ &
\leq
 C_1 A^{-x |U|_{k+1}} A^4
\Vert \dot{H} \Vert_{k,0}
\Vert \dot{K} \Vert_k^{(A)}
\Vert \dot{q} \Vert,
\end{align*}
where the factors $A$ come from the estimates on $D_J P_1$, $DP_3$, and $D_I P_2$.

Thus
\begin{align*}
A^{|U|_{k+1}}
\left\Vert
D_H D_K D_q \mathbf{S}_k (H,K,q) (\dot{H},\dot{K}\dot{q})(U)
\right\Vert_{k+1,U}
\leq C_1 
A^{-x |U|_{k+1}} A^4
\Vert \dot{H} \Vert_{k,0}
\Vert \dot{K} \Vert_k^{(A)}
\Vert \dot{q} \Vert
.
\end{align*}
Since 
$$
\diam(\Lambda) = \sqrt{2}L^N
\quad \text{and} \quad
\diam(U) \leq |U|_k \sqrt{2}L^{k+1}
$$
we get for $U \in \mathcal{P}_{k+1}^{c,2}(\Lambda)$
$$
|U|_{k+1} > \frac{1}{2} L^{N-(k+1)}.
$$
Thus the claim follows.

\end{proof}

The following proposition is Theorem 6.8 in {\cite{ABKM}}.

\begin{prop}[Contractivity of the bulk flow]\label{Prop:FirstDerivative_BulkFlow}
The first derivative of $\mathbf{T}_k$ at $H=0$ and $K=0$ has the triangular form
\begin{align*}
D \mathbf{T}_k(0,0,q)  \begin{pmatrix} \dot
H \\ \dot{K} \end{pmatrix}
= \begin{pmatrix}
	\mathbf{A}^{q}_k & \mathbf{B}^{q}_k \\ 0 & \mathbf{C}^{q}_k
\end{pmatrix}
\begin{pmatrix} \dot
H \\ \dot{K} \end{pmatrix}
\end{align*}
where
\begin{align*}
\mathbf{C}^{q}_k \dot{K}(U) = \sum_{B: \bar{B}= U}(1-\Pi_2) \mathcal{R}_+ \dot{K}(B) + \sum_{\substack{X \in \mathcal{P}_k^c \setminus \mathcal{B}(X)\\ \pi(X) = U}} \mathcal{R}_+ \dot{K}(X).
\end{align*}
For any fixed $\theta \in (0,1)$ there is $L_0$ such that for all integers $L \geq L_0$ there exist $A_0$, $h_0$ and $\kappa$ with the following property. For all $A \geq A_0$, $h \geq h_0$  and for $\Vert q \Vert < \kappa$ the following bounds hold independent of $k$ and $N$:
\begin{align*}
\Vert \mathbf{C}_k^{q} \Vert \leq \theta,
\quad
\Vert \left(\mathbf{A}_k^{q}\right)^{-1} \Vert \leq \frac{3}{4},
\quad
\Vert \mathbf{B}_k^{q} \Vert \leq \frac{1}{3}.
\end{align*}
Moreover, the derivatives of the operators with respect to $q$ are bounded.
\end{prop}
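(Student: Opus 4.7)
The plan is to compute $D\mathbf{T}_k(0,0,q)$ directly from the definition of the flow, read off the triangular form, derive the formula for $\mathbf{C}_k^q$ by differentiation, and then bound the three blocks separately, tracking the dependence on $q$ throughout. Since by construction $H_+ = \mathbf{A}_k^q H + \mathbf{B}_k^q K$ is linear in $(H,K)$ on the nose, the first row of the Jacobian is $(\mathbf{A}_k^q, \mathbf{B}_k^q)$ read off directly; all the work is in analysing $K_+ = \mathbf{S}_k^q(H,K)$ through \eqref{eq:defn_K+}.

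Each summand of the right-hand side of \eqref{eq:defn_K+} contains the inner expression $(1-e^{\widetilde H}) \circ (e^{-H(\varphi+\xi)} - 1) \circ K(\varphi+\xi)$, in which both factors $e^{-H}-1$ and $1-e^{\widetilde H}$ vanish identically when $H=0$, so the whole quantity vanishes at $(H,K)=(0,0)$ irrespective of $K$. Hence $D_H \mathbf{S}_k|_{(0,0,q)} \equiv 0$, giving the triangular structure. Differentiating in $K$ at $(0,0)$ selects exactly one factor $\dot K(Y)$ from the inner circ product while forcing $H=0$ on the remainder; the outer factors $(e^{\widetilde H})^{U\setminus X}(e^{\widetilde H})^{-X\setminus U}$ evaluate to $1$, the integration against $\mu_+$ collapses to $\mathcal{R}_{k+1}\dot K(X)$, and the indicator $\chi(X,U)$ forces $\pi(X)=U$. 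Splitting the resulting sum according to whether $X$ is a single block $B$ (so $\pi(B)=\bar B =U$) or a larger connected polymer recovers the asserted formula, with the $\Pi_2\mathcal{R}_{k+1}\dot K(B)$ piece being precisely the $-\Pi_2\mathcal{R}_{k+1}K(B_+)$ contribution absorbed into $\mathbf{B}_k^q$ at the level of $H_+$.

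The three operator bounds are treated in turn. For $(\mathbf{A}_k^q)^{-1}$, I would diagonalise the action of $\Pi_2 \mathcal{R}_{k+1}$ on the finite-dimensional basis of relevant monomials $M_\emptyset, M_\beta, M_{\beta,\gamma}$ using explicit formulas for the Gaussian expectations and then rescale according to $\|\cdot\|_{k,0}$ from scale $k$ to $k+1$; the net volume/scale factors are expansive of order $L^d$ on the constant sector and dominated by $h_{k+1}/h_k=2$ and $L$-powers on the linear and quadratic sectors, so the inverse is contractive and can be made $\leq 3/4$ for $L$ large. For $\mathbf{B}_k^q$, the Taylor coefficients of $\mathcal{R}_{k+1} K(B)$ up to second order are estimated by $\|K(B)\|_{k:k+1}$ using parts (6) and (8) of Lemma~\ref{lemma:Properties_of_weights}, and converting back to the relevant-Hamiltonian norm at scale $k+1$ yields an extra gain $A^{-1}$ from the single-block entry in $\|K\|_k^{(A)}$; choosing $A$ and then $L$ large delivers $1/3$. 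For $\mathbf{C}_k^q$ I would split $\dot K(U)$ into a block component and a large-polymer component. On the block component the subtraction $(1-\Pi_2)$ removes the relevant Taylor coefficients, and a Taylor-remainder estimate against the scale ratio $\mathfrak{w}_{k+1}(\alpha)/\mathfrak{w}_k(\alpha)$ produces a contraction $L^{-\eta}$ with $\eta>0$, bounded by $\theta/2$ for $L$ large. On the large-polymer component the gain comes from the discrepancy between $A^{|U|_{k+1}}$ in the target norm and $A^{|X|_k}$ in the source (since $|X|_k$ is strictly larger than $|\pi(X)|_{k+1}$ for non-block $X$) combined with the reblocking counting estimate for $\pi(X)=U$; summing geometrically gives $\leq \theta/2$ for $A$ large.

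Finally, the $q$-derivative bounds follow because all three operators are expressed as $\Pi_2\mathcal{R}_{k+1}$ applied to fixed arguments with $q$-independent combinatorial prefactors, and $\mathcal{R}_{k+1}$ depends smoothly on $q$ through the covariance $\mathcal{C}_{k+1}^q$ with derivatives bounded uniformly in $N$ and $k$ by Proposition~\ref{Prop:FRD_Buchholz}; derivatives in $q$ bring down factors $\partial_q^j \mathcal{C}_{k+1}^q$ which are absorbed into the norms for $\|q\|\leq \kappa$ small. The restriction and $(\Z^d)$ properties required later are preserved by Propositions~\ref{Prop:field_locality} and~\ref{Prop:Zd-property}, since $\Pi_2$ acts blockwise and Gaussian integration respects both properties. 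The main obstacle is the block component of the $\mathbf{C}_k^q$ estimate, where the irrelevance produced by $(1-\Pi_2)$ must be sharp enough to beat the absence of the $A^{|X|_k}$ gain when $|X|_k = 1$; this is the analytical core of Theorem 6.8 in \cite{ABKM}, whose argument I would follow line by line with $q$ as a smooth spectator parameter.
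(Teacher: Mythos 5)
The paper gives no independent proof of this proposition: it is quoted verbatim as Theorem 6.8 of \cite{ABKM}, with the subsequent remark noting only that the arbitrary $\theta\in(0,1)$ (rather than the fixed $\theta=\tfrac{3}{4}\eta$ of \cite{ABKM}) is obtained by enlarging $L_0$ and $A_0$. Your sketch correctly reproduces the structure of that cited proof — triangularity from the vanishing of the inner circ product at $H=0$, the block/large-polymer splitting of $\mathbf{C}_k^q$ with the $(1-\Pi_2)$ gain and the $A$-gain, the contraction of $(\mathbf{A}_k^q)^{-1}$ from rescaling the relevant monomials, and the arbitrary $\theta$ achieved by taking $L$ and $A$ large — and defers the analytical core to \cite{ABKM} exactly as the paper does, so it is essentially the same approach.
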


\begin{remark}
In \cite{ABKM} $\theta$ is fixed to be $\frac{3}{4} \eta$. For the single step estimates in Proposition~\ref{Prop:SingleStepRG_BulkFlow} we have to choose $\theta$ smaller than $\frac{1}{2}$. Thus we formulated the Proposition with this additional flexibility. Inspection of the proof of the bound on $\Vert \mathbf{C}^q_k \Vert$ in \cite{ABKM} shows that a smaller $\theta$ can be obtained by choosing larger $L_0$ and~$A_0$.
\end{remark}

~\\
Proposition \ref{Prop:Smoothness_BulkFlow} and Proposition \ref{Prop:FirstDerivative_BulkFlow} can be combined to prove a single step estimate of the irrelevant part of the flow. This bound is not proven in \cite{ABKM}. The estimate will help us deduce estimates on the finite-volume flow given the infinite-volume flow, see Proposition~\ref{Prop:Existence_BulkFlow}.

~\\
For fixed $\eta$ and $\rho_0$, let us introduce the space
\begin{align}
&\mathbb{D}_k(\rho_0,\eta,\Lambda)
\nonumber
\\ & \quad
= 
\left\lbrace (H,K) \in M_0(\mathcal{B}_k) \times M(\mathcal{P}_k(\Lambda)):
H \in B_{\rho_0 \eta^k}(0),
K \in B_{\rho_0 \eta^{2k}}(0)
\right\rbrace.
\label{eq:defn_D_bulk}
\end{align}

\begin{prop}[Single step estimate for the bulk flow]\label{Prop:SingleStepRG_BulkFlow}
Fix $\eta \in (0,1)$. There is $L_0$ such that for all integers $L \geq L_0$ there are $A_0,h_0,\kappa$ with the following property. For all $A \geq A_0, h \geq h_0$ and $q \in B_{\kappa}(0)$ there is $\rho_0^{\es}>0$ such that if $(H,K)\in \mathbb{D}_k(\rho_0^{\es},\eta,\Lambda)$ then
$$
\Vert \mathbf{S}_k^q(H,K) \Vert_{k+1}^{(A)} \leq \rho^{\es}_0 \eta^{2(k+1)}.
$$
\end{prop}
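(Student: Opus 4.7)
The plan is to Taylor expand $\mathbf{S}_k^q$ around the trivial fixed point $(H,K)=(0,0)$, isolating the linear $K$-term (which is contractive by Proposition~\ref{Prop:FirstDerivative_BulkFlow}) from a purely second-order remainder controlled by the smoothness bounds of Proposition~\ref{Prop:Smoothness_BulkFlow}. The asymmetric scalings built into $\mathbb{D}_k(\rho_0^{\es},\eta,\Lambda)$, namely $\Vert H\Vert_{k,0}\le\rho_0^{\es}\eta^k$ and $\Vert K\Vert_k^{(A)}\le\rho_0^{\es}\eta^{2k}$, are exactly calibrated so that the quadratic $H$-contribution is of the same order $\eta^{2k}$ as the linear $K$-contribution.

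First I would observe that $\mathbf{S}_k^q(0,0)=0$: when $H=0$ and $K=0$, the functional $\widetilde{H}$ vanishes, $e^{\widetilde{H}}-1=0$, and the circ product in \eqref{eq:defn_K+} gives zero on every nonempty polymer. Proposition~\ref{Prop:FirstDerivative_BulkFlow} then yields $D_H \mathbf{S}_k^q(0,0)=0$ and $D_K \mathbf{S}_k^q(0,0)=\mathbf{C}_k^q$ with $\Vert\mathbf{C}_k^q\Vert\le\theta$. Writing Taylor's formula in integral form,
\begin{align*}
\mathbf{S}_k^q(H,K) = \mathbf{C}_k^q K + \int_0^1 (1-s)\, D^2\mathbf{S}_k^q(sH,sK)\bigl((H,K),(H,K)\bigr)\,ds,
\end{align*}
and applying the $(j_1,j_2,j_3)=(2,0,0),(1,1,0),(0,2,0)$ cases of Proposition~\ref{Prop:Smoothness_BulkFlow} (taking $\rho_0^{\es}\le\rho(A)$ so that $(sH,sK,q)\in U_{\rho,\kappa}$ for all $s\in[0,1]$), I obtain a constant $C_2$ depending on $L,A,h,\kappa$ but not on $N$ or $k$ with
\begin{align*}
\Vert \mathbf{S}_k^q(H,K)\Vert_{k+1}^{(A)} \le \theta\,\Vert K\Vert_k^{(A)} + C_2\bigl(\Vert H\Vert_{k,0}+\Vert K\Vert_k^{(A)}\bigr)^2.
\end{align*}

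Plugging in the bounds from $\mathbb{D}_k(\rho_0^{\es},\eta,\Lambda)$ and using $\eta<1$,
\begin{align*}
\bigl(\Vert H\Vert_{k,0}+\Vert K\Vert_k^{(A)}\bigr)^2 \le (\rho_0^{\es})^2\eta^{2k}(1+\eta^k)^2 \le 4(\rho_0^{\es})^2\eta^{2k},
\end{align*}
so altogether $\Vert\mathbf{S}_k^q(H,K)\Vert_{k+1}^{(A)}\le\rho_0^{\es}\eta^{2k}\bigl(\theta+4C_2\rho_0^{\es}\bigr)$. The desired estimate reduces to the scalar inequality $\theta+4C_2\rho_0^{\es}\le\eta^2$, which I would enforce by (i) invoking Proposition~\ref{Prop:FirstDerivative_BulkFlow} with $\theta\le\eta^2/2$, which by the remark following that proposition is available after enlarging $L_0$ and $A_0$, and then (ii) taking $\rho_0^{\es}\le\min\{\rho(A),\eta^2/(8C_2)\}$.

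The main obstacle is not any individual estimate but rather the correct coupling of constants: the sharper threshold $\theta<\eta^2$ is the new requirement that forces $\theta$ to be chosen smaller than the $\tfrac34\eta$ considered in \cite{ABKM}, which is exactly why Proposition~\ref{Prop:FirstDerivative_BulkFlow} has been formulated here with flexible $\theta$. Once this coupling is fixed, the two-line Taylor expansion together with the $N$-uniform smoothness of $\mathbf{S}_k^q$ yields the single-step bound and, by induction, allows the finite-volume flow to be kept inside the sequence $\mathbb{D}_k(\rho_0^{\es},\eta,\Lambda)$ at every scale.
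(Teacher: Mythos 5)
Your proposal is correct and follows essentially the same route as the paper: Taylor expansion of $\mathbf{S}_k^q$ around $(0,0)$ with second-order integral remainder, the linear term identified as $\mathbf{C}_k^q K$ with $\Vert\mathbf{C}_k^q\Vert\le\theta<\eta^2$ via Proposition~\ref{Prop:FirstDerivative_BulkFlow}, and the remainder controlled by the second-derivative bounds of Proposition~\ref{Prop:Smoothness_BulkFlow}, after which the scalings in $\mathbb{D}_k$ close the estimate. The only difference is a slightly looser constant ($4C_2\rho_0^{\es}$ versus the paper's $2C_2\rho_0^{\es}$), which is immaterial.
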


As the proof will show this estimate reflects the fact that we use first order perturbation: Heuristically, up to first order in $H$,
$$
\mathcal{R}_+ (e^{-H}) \approx e^{-\mathcal{R}_+ H}
$$
since
$
\mathcal{R}_+ (e^{-H}) \approx \mathcal{R}_+ (1 - H)
= 1 - \mathcal{R}_+H \approx e^{-\mathcal{R}_+ H}.
$

\begin{proof}
Fix $\theta < \eta^2$. Let $L_0$ be large enough such that Proposition \ref{Prop:FirstDerivative_BulkFlow} and Proposition \ref{Prop:Smoothness_BulkFlow} can be applied. Define
$$
C_2 = \max \lbrace C_{2,0,0}, C_{1,1,0}, C_{0,2,0} \rbrace
$$
where $C_{j_1,j_2,j_3}$ are the constants from Proposition \ref{Prop:Smoothness_BulkFlow}. Choose $\rho_0^{\es}$ small enough that
$$
\rho_0^{\es} \leq \rho(A)
\quad \text{and }
\theta + 2 C_2 \rho_0^{\es} \leq \eta^2.
$$
Then $(H,K) \in \mathbb{D}_k(\rho_0^{\es},\eta,\Lambda)$ implies $(H,K) \in U_{\rho(A)}$ so we can apply Proposition \ref{Prop:Smoothness_BulkFlow} to estimate as follows.

We Taylor-expand $\mathbf{S}(H,K)$ up to first order with second order integral remainder around $(0,0)$:
\begin{align*}
\mathbf{S}(H,K) &= \mathbf{S}(0,0) + D\mathbf{S}(0,0)(H,K) + \int_0^1 D^2 \mathbf{S}(tH,tK)(H,K)(H,K) (1-t) \de t
\\
&= \mathbf{C}^q K + \int_0^1 D^2 \mathbf{S}(tH,tK)(H,K)(H,K) (1-t) \de t.
\end{align*}
Then we estimate
\begin{align*}
\Vert \mathbf{S}(H,K) \Vert_{k+1}^{(A)}
&\leq \Vert \mathbf{C}^q \Vert \Vert K \Vert_k^{(A)}
+ \frac{1}{2} C_2 \left( 
\Vert H \Vert_{k,0}^2 + 2 \Vert H \Vert_{k,0} \Vert K \Vert_k^{(A)} + \left(\Vert K \Vert_k^{(A)}\right)^2
 \right)
\\
&\leq \rho^{\es}_0 \eta^{2k} \left( \theta + \frac{1}{2} C_2 4 \rho^{\es}_0 \right)
\leq \rho_0^{\es}\eta^{2(k+1)}.
\end{align*}
The last inequality follows by the assumption on $\rho_0^{\es}$. This finishes the proof.
\end{proof}

\subsection{Infinite-volume flow}\label{subsection:Inf_vol_bulk_flow}

\subsubsection{Definition of the infinite-volume flow}

In our context, the renormalisation map $T_k$ is most naturally defined to be a map in finite volume, since a defining property is that is should preserve the circ product under expectation. There is no analogue of this property for infinite volume. Nevertheless, there is a natural definition of a map $(H,K) \mapsto (H_+,K_+)$ which lives on~$\Z^d$ rather than on a torus $\Lambda$, as an appropriate inductive limit of the corresponding maps on the family of all tori. The infinite-volume map has the advantage that it is defined for all scales $k \in \N$, with no restriction due to finite volume. In particular we can study the limit $k \rightarrow \infty$ which we use to apply an implicit function theorem to the dynamical system defined by the RG.

~\\
Let $\mathcal{B}_k(\Z^d)$ be the set of all $k$-blocks in $\Z^d$ and $\mathcal{P}_k(\Z^d)$ be the set of all finite unions of $k$-blocks. Since we are dealing with boxes $\Lambda$ of varying side length $L^N$ let us introduce the notation $N(\Lambda)$ for the exponent describing the side length of the box~$\Lambda$.

~\\
A relevant functional $H \in M_0(\mathcal{B}_k)$ can easily be thought of as an element dependent on a block living in $\Z^d$ instead of $\Lambda$ due to translation invariance. More precisely, given $H \in M_0(\mathcal{B}_k(\Lambda))$, we define $H^{\Z^d}$ on a block $B \in M_0(\Z^d)$ as $H(B)$ for a translation of $B$ to the fundamental domain of $\Lambda$ and suppress the index $\Z^d$ as well as the translation of the block in the notation.

~\\
The irrelevant part is extended as follows.
\begin{defn}
Let $(K^{\Lambda})_{\Lambda}$ be a family of maps which satisfy the $\left(\Z^d\right)$-property.
 For $X \in \mathcal{P}_k(\Z^d)$ choose $\Lambda$ large enough such that $k< N(\Lambda)$ and $\diam (X) \leq \frac{1}{2} \diam(\Lambda)$. Then we define
\begin{align*}
K^{\Z^d}(X) = K^{\Lambda}(X).
\end{align*}
\end{defn}
Here we use that $X \in \mathcal{P}_k(\Z^d)$ has a straight-forward analogon in $\mathcal{P}_k(\Lambda)$ if $\Lambda$ is large enough which we do not record in the notation.

The definition does not depend on the choice of $\Lambda$ owing to the $(\Z^d)$-property required for the family $(K^{\Lambda})_{\Lambda}$.

~\\
Given $(H,K^{\Z^d})$ and the finite-volume maps $\left(\mathbf{S}^{\Lambda}\right)_{\Lambda}$, we define $K_+^{\Z^d}$ as follows.
\begin{defn}
For $U \in \mathcal{P}_{k+1}(\Z^d)$ choose $\Lambda$ large enough such that $k+1 < N(\Lambda)$ and $\diam (U) \leq \frac{1}{2} \diam (\Lambda)$. Then
\begin{align*}
K_+^{\Z^d}\left(H,K^{\Z^d}\right)(U)
= \mathbf{S}^{\Lambda}\left(H,K^{\Lambda} \vert_{U^*}\right).
\end{align*}
\end{defn}
As it is claimed in Proposition \ref{Prop:Smoothness_BulkFlow} the map $\mathbf{S}^{\Lambda}$ satisfies the restriction property and preserves the $(\Z^d)$-property. Moreover, the map $\mathbf{S}^{\Lambda}$ involves integration with respect to $\mu_{k+1}$ of functionals which again only depend on $U^*$ and thus, referring to Remark~\ref{Rem:FRD_Indep_of_N}, the covariance is also independent of the choice of $\Lambda$. So $K_+^{\Z^d}$ is well-defined.

~\\
Defining the relevant flow in infinite volume is straightforward: Fix $B \in \mathcal{B}_{k+1}(\Z^d)$ and $(H,K^{\Z^d})$. Define
$$
H^{\Z^d}_+(B) = \mathbf{A}^q H (B) + \mathbf{B}^q K^{\Z^d} (B).
$$
As before we can skip the index $\Z^d$ on $H$ due to the following reasoning: Let $k<N(\Lambda)$ and $B \in \mathcal{B}_{k+1}(\Z^d)$. Then $B \in \mathcal{B}_{k+1}(\Lambda)$ and for all $b \in \mathcal{B}_k(B)$ it holds that $K^{\Z^d}(b) = K^{\Lambda}(b)$. Thus $H^{\Z^d}_{k+1}(B) = \sum_{b \in \mathcal{B}_{k}(b)} \mathcal{R}_+ H(b) - \Pi_2 \mathcal{R}_+ K^{\Z^d}(b) = \sum_{b \in \mathcal{B}_{k}(b)} \mathcal{R}_+ H(b) - \Pi_2 \mathcal{R}_+ K^{\Lambda}(b) = H_+^{\Lambda}(B)$.

~\\
We just defined the infinite-volume renormalisation map
\begin{align*}
\mathbf{T}^{\Z^d}_k (H_k,K_k^{\Z^d}, q) = (H_{k+1},K_{k+1}^{\Z^d}).
\end{align*}

Now we extend the norms.

There is no need to change the norm for the relevant variable since it does not depend at all on the size of the torus.

For the irrelevant variable let $X \in \mathcal{P}_k^c(\Z^d)$ and choose $\Lambda$ large enough such that $\diam(X) \leq \frac{1}{2} \diam(\Lambda)$. Then $K^{\Z^d}(X) = K^{\Lambda}(X)$ and we can use the same definition as in \cite{ABKM} for
$$
\Vert K^{\Z^d} (X) \Vert_k = \Vert K^{\Lambda} (X) \Vert_k
= \sup_{\varphi \in \mathcal{V}(X^*)}w_k^{-X}(\varphi) \vert K(X,\varphi) \vert_{k,X,T_{\varphi}}
$$
(the weights $w_k$, $w_{k:k+1}$ and $W_k$ do not depend on the size of the torus as long as $X$ is small enough compared to the torus, see Remark \ref{Remark:Weight_Independent_N}).

\subsubsection{Properties of the infinite-volume renormalisation map}

Due to the definition the single step estimates for the map $(H,K^{\Lambda}) \mapsto (H_+,K_+^{\Lambda})$ can be transferred to the infinite-volume flow.

\begin{prop}[Smoothness and contractivity in infinite volume]\label{Prop:Estimates_BulkFlow_InfiniteVolume}
For any $\theta~\in~(0,1)$ there is $L_0$ such that for all integers $L \geq L_0$ (and corresponding $A,h,\kappa$) the following bounds hold independently of $k$ and $N$ for each $q \in B_{\kappa}(0)$:
\begin{align*}
\Vert \mathbf{C}_k^q \Vert \leq \theta,
\quad
\Vert \left( \mathbf{A}^q_k \right)^{-1} \Vert \leq \frac{3}{4},
\quad
\Vert \mathbf{B}_k^q \Vert \leq \frac{1}{3}.
\end{align*}
The derivatives with respect to $q$ are bounded. Moreover, there is $\rho(A)$ such that
$$
\mathbf{S}_k \in C^{\infty}\left(U_{\rho,\kappa},M(\mathcal{P}_{k+1}^c)\right)$$
and
\begin{align*}
\Vert D_1^{j_1} D_2^{j_2} D_3^{j_3} \mathbf{S}_k(H,K,q)(\dot{H}^{j_1},\dot{K}^{j_2},\dot{q}^{j_3}) \Vert^{(A)}_{k+1}
\leq C_{j_1,j_2,j_3} \Vert \dot{H} \Vert_{k,0}^{j_1} \left( \Vert \dot{K} \Vert_k^{(A)} \right)^{j_2} \Vert \dot{q} \Vert^{j_3}.
\end{align*}
\end{prop}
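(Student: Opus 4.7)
The plan is to reduce each assertion about the infinite-volume flow to the corresponding statement on a sufficiently large torus and then invoke the $N$-uniform finite-volume bounds already established in Propositions \ref{Prop:FirstDerivative_BulkFlow} and \ref{Prop:Smoothness_BulkFlow}. The structural fact that makes this possible, collected in Propositions \ref{Prop:field_locality} and \ref{Prop:Zd-property}, Remark \ref{Rem:FRD_Indep_of_N}, and Remark \ref{Remark:Weight_Independent_N}, is that the map $\mathbf{S}_k^{\Lambda}$ evaluated at a polymer $U$, together with the weights controlling it, depends on $\Lambda$ only through data supported in the small neighbourhood $U^*$. Hence both the output and the norms are independent of $\Lambda$ as soon as $\diam(U) \leq \tfrac{1}{2}\diam(\Lambda)$.

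First I would handle the triangular operators. For each $B \in \mathcal{B}_{k+1}(\Z^d)$ or $U \in \mathcal{P}_{k+1}^c(\Z^d)$, I would pick any $\Lambda$ with $\diam(U) \leq \tfrac{1}{2}\diam(\Lambda)$ and establish the pointwise identities
\begin{equation*}
\mathbf{A}^q_k H(B) = \mathbf{A}^{\Lambda,q}_k H(B), \qquad \mathbf{B}^q_k K^{\Z^d}(B) = \mathbf{B}^{\Lambda,q}_k K^{\Lambda}(B), \qquad \mathbf{C}^q_k K^{\Z^d}(U) = \mathbf{C}^{\Lambda,q}_k K^{\Lambda}(U).
\end{equation*}
These follow directly from the formulas in Subsection \ref{subsubsec:The_renormalisation_map} together with locality of $\Pi_2 \mathcal{R}_+$ and the agreement of the covariance $\mathcal{C}_{k+1}^q$ on test objects supported in $U^*$. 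Because the bounds of Proposition \ref{Prop:FirstDerivative_BulkFlow} are already uniform in $N$, passing to the supremum over $U$ immediately yields $\Vert \mathbf{C}^q_k\Vert \leq \theta$, $\Vert (\mathbf{A}^q_k)^{-1}\Vert \leq 3/4$, and $\Vert \mathbf{B}^q_k\Vert \leq 1/3$, together with the boundedness of the $q$-derivatives, which enter only through $\mathcal{C}_k^q$.

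Next I would establish the smoothness of $\mathbf{S}_k$. For $(H,K^{\Z^d},q) \in U_{\rho,\kappa}$ and $U \in \mathcal{P}^c_{k+1}(\Z^d)$, the definition in Subsection \ref{subsection:Inf_vol_bulk_flow} combined with the restriction property gives
\begin{equation*}
\mathbf{S}_k^{\Z^d}(H,K^{\Z^d},q)(U) = \mathbf{S}_k^{\Lambda}(H,K^{\Lambda},q)(U)
\end{equation*}
for every $\Lambda$ with $\diam(U) \leq \tfrac{1}{2}\diam(\Lambda)$, the right-hand side being independent of $\Lambda$ by the $(\Z^d)$-property. Fréchet differentiability in each argument and the associated derivative bounds then transfer polymer-by-polymer from Proposition \ref{Prop:Smoothness_BulkFlow}, whose constants $C_{j_1,j_2,j_3}$ are already uniform in $N$. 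Multiplying by $A^{|U|_{k+1}}$ and taking the supremum over $U \in \mathcal{P}_{k+1}^c(\Z^d)$ converts these pointwise bounds into the announced bounds on $\Vert D_1^{j_1} D_2^{j_2} D_3^{j_3} \mathbf{S}_k\Vert^{(A)}_{k+1}$, and in particular yields membership in $C^{\infty}(U_{\rho,\kappa}, M(\mathcal{P}_{k+1}^c))$.

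The main obstacle is the bookkeeping required to check that derivatives with respect to $K^{\Z^d}$ measured by the global norm $\Vert \cdot \Vert^{(A)}_k$ coincide with derivatives with respect to $K^{\Lambda}$ measured by the finite-volume analogue, for every polymer $U$ to which the identity above applies, and that the weights $w_k, w_{k:k+1}, W_k$ entering both norms actually agree on the supports involved. Each of these points is a direct consequence of the locality and $(\Z^d)$-property results of the previous subsections, but careful accounting for each factor in the defining formula \eqref{eq:defn_K+} and for the finite-range covariance of Remark \ref{Rem:FRD_Indep_of_N} is necessary to guarantee that the constants obtained are genuinely $\Lambda$-independent.
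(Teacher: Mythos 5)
Your proposal is correct and follows essentially the same route as the paper, which justifies this proposition only by the remark that ``due to the definition the single step estimates \ldots can be transferred to the infinite-volume flow'': you reduce each bound to the corresponding $N$-uniform finite-volume statement (Propositions \ref{Prop:FirstDerivative_BulkFlow} and \ref{Prop:Smoothness_BulkFlow}) via the restriction and $(\Z^d)$-properties, the $N$-independence of the finite-range decomposition (Remark \ref{Rem:FRD_Indep_of_N}) and of the weights (Remark \ref{Remark:Weight_Independent_N}). Your write-up simply makes explicit the bookkeeping that the paper leaves implicit.
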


\subsubsection{Existence of the global flow}

\begin{prop}[Existence of the global flow]\label{Prop:FirstIFT}
Fix $\zeta,\eta \in (0,1)$. There is $L_0$ such that for all integers $L \geq L_0$ there is $A_0,h_0$ and $\kappa$ with the following property.
Given $\epsilon > 0$ there exist $\epsilon_1 > 0$ and $\epsilon_2 > 0$ such that for each $(\mathcal{K},\mathcal{H},q) \in B_{\epsilon_1}(0)\times B_{\epsilon_2} (0)\times B_{\kappa}(0) \subset \mathbf{E} \times M_0(\mathcal{B}_0) \times \R^{(d \times m) \times (d \times m)}_{\text{sym}}$ there exists a unique global flow $\left(H_k,K_k^{\Z^d}\right)_{k \in \N}$ such that
\begin{align*}
\Vert H_k \Vert_{k,0} ,\, \left\Vert K^{\Z^d}_k\right\Vert_k^{(A)} \leq \epsilon \eta^k  \quad \text{for all } k \in \N_0,
\end{align*}
with initial condition given by
$$
K^{\Z^d}_0(X,\varphi) = e^{-\mathcal{H}(X,\varphi)} \prod_{x \in X} \mathcal{K}(\nabla \varphi(x))
$$
and
$$
\left(H_{k+1},K_{k+1}^{\Z^d}\right)
= \mathbf{T}_k^{\Z^d} \left(H_k,K_k^{\Z^d},q\right).
$$
Moreover, the flow is smooth in $(\mathcal{K},\mathcal{H},q)$ with bounds on the derivatives which are independent of $N$.

\end{prop}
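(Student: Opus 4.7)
The plan is to recast the global flow as the zero set of a smooth functional on a weighted sequence space and apply a quantitative implicit function theorem, leveraging the hyperbolic splitting furnished by Proposition~\ref{Prop:Estimates_BulkFlow_InfiniteVolume}. Introduce the Banach space
\[
\mathcal{X} = \Bigl\{ (H_k, K_k^{\Z^d})_{k \in \N_0} : \|(H,K)\|_{\mathcal{X}} := \sup_{k \geq 0} \eta^{-k}\bigl( \|H_k\|_{k,0} + \|K_k^{\Z^d}\|_k^{(A)} \bigr) < \infty \Bigr\},
\]
and define $\mathcal{F} : \mathcal{X} \times \mathbf{E} \times M_0(\mathcal{B}_0) \times \R^{(d\times m)\times (d\times m)}_{\mathrm{sym}} \to \mathcal{Y}$ by
\[
\mathcal{F}_0 = K_0 - \Bigl[(X,\varphi) \mapsto e^{-\mathcal{H}(X,\varphi)}\prod_{x \in X}\mathcal{K}(\nabla\varphi(x))\Bigr], \qquad \mathcal{F}_{k+1} = (H_{k+1}, K_{k+1}^{\Z^d}) - \mathbf{T}_k^{\Z^d}(H_k, K_k^{\Z^d}, q),
\]
with $\mathcal{Y}$ the analogous weighted product space. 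Zeros of $\mathcal{F}$ are precisely the global flows in the statement, and $\mathcal{F}(0;0,0,q)=0$. By Proposition~\ref{Prop:Estimates_BulkFlow_InfiniteVolume} together with the explicit smoothness of the initial-condition map, $\mathcal{F}$ is $C^\infty$ on a neighbourhood of the origin with $N$-independent derivative estimates.

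The main technical step is to invert $D_{(H,K)}\mathcal{F}(0;0,0,q)$ as an isomorphism $\mathcal{Y} \to \mathcal{X}$. For a source $(f_0, (g_{k+1}, h_{k+1})_{k \geq 0})$ the linearized equations are
\[
K_0 = f_0, \quad K_{k+1} = \mathbf{C}_k^q K_k + h_{k+1}, \quad H_{k+1} = \mathbf{A}_k^q H_k + \mathbf{B}_k^q K_k + g_{k+1}.
\]
The $K$-recursion runs forward: choosing $\theta < \eta$ in Proposition~\ref{Prop:Estimates_BulkFlow_InfiniteVolume} (permissible since $\theta \in (0,1)$ is at our disposal), the bound $\|\mathbf{C}_k^q\| \leq \theta$ and a telescoping sum yield an $\mathcal{X}$-bounded $K$-sequence. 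The $H$-recursion is forward-expansive but backward-contractive via $\|(\mathbf{A}_k^q)^{-1}\| \leq 3/4 < 1$, so the unique $\mathcal{X}$-bounded solution is the convergent backward sum
\[
H_k = -\sum_{j\geq k}\Bigl[\prod_{i=k}^{j} (\mathbf{A}_i^q)^{-1}\Bigr]\bigl(\mathbf{B}_j^q K_j + g_{j+1}\bigr),
\]
which gives a bounded inverse uniformly in $q \in B_\kappa(0)$ with constants independent of $k$.

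With the linearization invertible, the quantitative smooth implicit function theorem produces, for each sufficiently small $\epsilon$, thresholds $\epsilon_1, \epsilon_2 > 0$ and a unique $(H_k, K_k^{\Z^d})_{k \geq 0} \in \mathcal{X}$ of norm at most $\epsilon$, depending smoothly on $(\mathcal{K}, \mathcal{H}, q)$ on the ball $B_{\epsilon_1}(0) \times B_{\epsilon_2}(0) \times B_\kappa(0)$, which solves $\mathcal{F}=0$; the bound $\|H_k\|_{k,0}, \|K_k^{\Z^d}\|_k^{(A)} \leq \epsilon \eta^k$ is built into $\mathcal{X}$, and the $N$-uniform derivative estimates are inherited from Proposition~\ref{Prop:Estimates_BulkFlow_InfiniteVolume}. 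The principal obstacle is arranging the weights so that the hyperbolic splitting actually closes on $\mathcal{X}$: one must match the geometric decay $\eta^k$ simultaneously with the forward contraction $\|\mathbf{C}^q\| \leq \theta$ and with the backward contraction $\|\mathbf{A}^{-1}\| \leq 3/4$, and then, to pass from the inverted linearization to the full IFT, invoke the single-step estimate of Proposition~\ref{Prop:SingleStepRG_BulkFlow} to control the quadratic remainder of $\mathbf{T}_k^{\Z^d}$ in the weighted norm so that its Lipschitz constant on the $\epsilon$-ball of $\mathcal{X}$ is strictly smaller than the operator norm of the inverse linearization.
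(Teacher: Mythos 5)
Your proposal is correct and follows essentially the same route as the paper: the paper also works on the weighted sequence space with norm $\sup_k \eta^{-k}(\cdots)$ and applies the implicit function theorem, encoding the very same hyperbolic splitting — forward contraction in $K$ via $\Vert \mathbf{C}_k^q\Vert \leq \theta < \eta$ and backward contraction in $H$ via $\Vert (\mathbf{A}_k^q)^{-1}\Vert \leq 3/4$. The only (equivalent) presentational difference is that the paper rearranges the flow into a map $\mathcal{T}$ whose $Z$-derivative at the origin is a strict contraction (so $D\mathcal{T}-\mathrm{id}$ is invertible by Neumann series), whereas you keep the forward flow equations and invert the linearization explicitly by the backward sum for $H$; also, the single-step estimate of Proposition~\ref{Prop:SingleStepRG_BulkFlow} is not needed here, since the second-derivative bounds of Proposition~\ref{Prop:Estimates_BulkFlow_InfiniteVolume} already control the quadratic remainder.
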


Proposition \ref{Prop:FirstIFT} implies that for any $(\mathcal{K},\mathcal{H}) \in B_{\epsilon_1} \times B_{\epsilon_2}$ there is $H_0(\mathcal{K},\mathcal{H})$ such that the flow (using the parameter $q(\mathcal{H})$ in the measures) converges to the fixed point of the RG. In our application we require the $q$-component of $H_0$ to correspond to the parameter $q(\mathcal{H})$ in the measure.

\begin{prop}[Global flow with renormalised initial condition]\label{Prop:SecondIFT}
Let
$$
\left(H_k,K_k^{\Z^d}\right)_k = \left(H_k(\mathcal{K},\mathcal{H}),K_k^{\Z^d}(\mathcal{K},\mathcal{H})\right)_k
$$
be the global flow from Proposition \ref{Prop:FirstIFT}. There is $0 < \delta\leq \epsilon_1$ and a smooth map
$$
\hat{\mathcal{H}}: B_{\delta}(0) \subset \mathbf{E} \rightarrow B_{\epsilon_2}(0) \subset M_0(\mathcal{B}_0)
$$
such that
$$
H_0(\hat{\mathcal{H}}(\mathcal{K}), \mathcal{K}) = \hat{\mathcal{H}}(\mathcal{K})
$$
and $q(\hat{\mathcal{H}(\mathcal{K})}) \subset B_{\kappa}(0)$ for all $\mathcal{K} \in B_{\delta}(0)$. Moreover, the derivatives of $\hat{\mathcal{H}}$ can be bounded uniformly in $N$.
\end{prop}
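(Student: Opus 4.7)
The plan is to solve the self-consistency equation by the implicit function theorem on Banach spaces. Denote by $H_0(\mathcal{K}, \mathcal{H}) := H_0(\mathcal{K}, \mathcal{H}, q(\mathcal{H}))$ the initial $H$-component of the global flow from Proposition \ref{Prop:FirstIFT} with the choice $q = q(\mathcal{H})$, and define
$$
G : B_{\epsilon_1}(0) \times B_{\epsilon_2}(0) \subset \mathbf{E} \times M_0(\mathcal{B}_0) \to M_0(\mathcal{B}_0), \qquad G(\mathcal{K}, \mathcal{H}) = H_0(\mathcal{K}, \mathcal{H}) - \mathcal{H}.
$$
Finding $\hat{\mathcal{H}}$ as claimed is equivalent to producing a smooth solution branch of $G(\mathcal{K}, \hat{\mathcal{H}}(\mathcal{K})) = 0$ through the base point $(0, 0)$, using Proposition \ref{Prop:FirstIFT} and the linearity (hence smoothness) of the projection $\mathcal{H} \mapsto q(\mathcal{H})$ to guarantee that $G$ is smooth with $N$-independent bounds on its derivatives.

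The crucial observation is that at $\mathcal{K} = 0$ the initial irrelevant datum
$$
K_0^{\Z^d}(X, \varphi) = e^{-\mathcal{H}(X, \varphi)} \prod_{x \in X} \mathcal{K}(\nabla \varphi(x))
$$
vanishes identically for every non-empty polymer $X$, independently of $\mathcal{H}$ and $q$. The trivial flow $(H_k, K_k^{\Z^d}) \equiv (0, 0)$ then satisfies both the recursion $(H_{k+1}, K_{k+1}^{\Z^d}) = \mathbf{T}_k^{\Z^d}(H_k, K_k^{\Z^d}, q)$ and the smallness bound $\Vert H_k \Vert_{k,0}, \Vert K_k^{\Z^d} \Vert_k^{(A)} \leq \epsilon \eta^k$; by the uniqueness statement of Proposition \ref{Prop:FirstIFT} it \emph{is} the global flow. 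Hence $H_0(0, \mathcal{H}, q) = 0$ for all admissible $(\mathcal{H}, q)$, so $G(0, 0) = 0$ and $D_{\mathcal{H}} H_0(0, 0) = 0$, from which
$$
D_{\mathcal{H}} G(0, 0) = -\mathrm{id},
$$
a bounded invertible operator on $M_0(\mathcal{B}_0)$ of operator norm one, independently of $N$.

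With the hypotheses of the Banach-space implicit function theorem verified at $(0,0)$, one obtains $\delta \in (0, \epsilon_1]$ and a smooth map $\hat{\mathcal{H}} : B_{\delta}(0) \to M_0(\mathcal{B}_0)$ with $\hat{\mathcal{H}}(0) = 0$ satisfying $H_0(\mathcal{K}, \hat{\mathcal{H}}(\mathcal{K})) = \hat{\mathcal{H}}(\mathcal{K})$. Shrinking $\delta$ if necessary, continuity of $\hat{\mathcal{H}}$ and $q$ together with $\hat{\mathcal{H}}(0) = 0$ and $q(0) = 0$ guarantees $\hat{\mathcal{H}}(\mathcal{K}) \in B_{\epsilon_2}(0)$ and $q(\hat{\mathcal{H}}(\mathcal{K})) \in B_{\kappa}(0)$. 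Differentiating the fixed-point equation gives $D\hat{\mathcal{H}} = (\mathrm{id} - D_{\mathcal{H}} H_0)^{-1} D_{\mathcal{K}} H_0$, and iterating this identity provides bounds on all higher derivatives of $\hat{\mathcal{H}}$ in terms of those of $H_0$.

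The only delicate point, and the whole reason for setting up the infinite-volume flow in Subsection \ref{subsection:Inf_vol_bulk_flow} in the first place, is that every quantity produced by the IFT---the radius $\delta$, the norm of the inverse, all derivative bounds on $\hat{\mathcal{H}}$---must be independent of $N$. In the finite-volume approach of \cite{ABKM} this is precisely what fails and forces the passage to a subsequence. Here it is automatic: Proposition \ref{Prop:FirstIFT} is formulated on $\Z^d$ with $N$-independent derivative bounds built in, so the construction of $\hat{\mathcal{H}}$ reduces to a standard IFT argument, and the uniform bounds on $\hat{\mathcal{H}}$ and its derivatives follow directly from those on $H_0$.
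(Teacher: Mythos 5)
Your proposal is correct and follows essentially the same route as the paper: define $G(\mathcal{K},\mathcal{H})=H_0(\mathcal{K},\mathcal{H})-\mathcal{H}$, observe that at $\mathcal{K}=0$ the initial datum $K_0$ vanishes so by uniqueness the global flow is trivial and $H_0(0,\mathcal{H})=0$, deduce $D_{\mathcal{H}}G(0,0)=-\mathrm{id}$, and apply the implicit function theorem. The only cosmetic difference is that the paper phrases the argument via the fixed point $\hat{Z}$ of the dynamical system $\mathcal{T}$ and the projection $\Pi_{H_0}$ rather than directly via the flow, which amounts to the same thing.
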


In what follows we will prove Proposition \ref{Prop:FirstIFT} and Proposition \ref{Prop:SecondIFT}. The proofs are very similar to the corresponding proofs in \cite{ABKM}. In fact, here the arguments are slightly easier since we do not have to care about last scale maps due to the change of the finite-range decomposition, see \eqref{eq:dec}. For the sake of completeness we review most of the steps.

The main ingredient is the application of the implicit function theorem.
For the convenience of the reader, we state the implicit function theorem as we will use it in the following.

\begin{thm}[Implicit function theorem]
Let $X,Y,Z$ be Banachspaces, and for $U \subset X, V \subset Y$ open subsets, let $f$ be a $C^p$ Frechet differentiable map $f: U \times V \rightarrow Z$.
If $(x_0,y_0) \in U \times V$, $f(x_0,y_0) = 0$, and $y \mapsto D_2f(x_0,y_0)y$ isomorphism, then there exist a neighbourhood $U_0$ of $x_0$ in $U$ and a Frechet differentiable $C^p$ map $g: U_0 \rightarrow V$ such that $g(x_0)=y_0$ and $f(x,g(x)) = f(x_0,y_0)$ for all $x\in U_0$.
\end{thm}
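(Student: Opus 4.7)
The plan is to recast the problem as finding a fixed point of a parameter-dependent contraction and then use the Banach fixed point theorem with parameters to obtain $g$ and its regularity. Write $A := D_2 f(x_0, y_0) \in L(Y, Z)$, which is an isomorphism by assumption, and introduce
\[
T(x, y) \;=\; y - A^{-1} f(x, y),
\]
so that for each $x$, fixed points of $T(x, \cdot)$ coincide with zeros of $f(x, \cdot) - f(x_0,y_0) = f(x,\cdot)$, and $T$ inherits $C^p$-smoothness from $f$.

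First I would check contractivity in the $y$-variable. Since $D_y T(x_0, y_0) = \mathrm{id} - A^{-1} A = 0$ and $D_y T$ is continuous (using $p \geq 1$), there exist radii $r_1, r_2 > 0$ with $\bar B_{r_1}(x_0) \subset U$ and $\bar B_{r_2}(y_0) \subset V$ on which $\Vert D_y T(x, y) \Vert \leq 1/2$, giving the Lipschitz bound $\Vert T(x, y) - T(x, y') \Vert \leq \tfrac{1}{2} \Vert y - y' \Vert$. Shrinking $r_1$ further if necessary, continuity of $f$ also forces $\Vert T(x, y_0) - y_0 \Vert = \Vert A^{-1} f(x, y_0) \Vert \leq r_2/2$ for $x \in \bar B_{r_1}(x_0)$, so $T(x, \cdot)$ maps $\bar B_{r_2}(y_0)$ into itself and is a $1/2$-contraction. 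Banach's fixed point theorem then produces a unique fixed point $g(x) \in \bar B_{r_2}(y_0)$, defining $g: U_0 := \bar B_{r_1}(x_0) \to V$ with $g(x_0) = y_0$ and $f(x, g(x)) = 0$.

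Continuity of $g$ comes for free from the contraction estimate: inserting $\pm T(x, g(x'))$ into $g(x) - g(x') = T(x, g(x)) - T(x', g(x'))$ yields
\[
\Vert g(x) - g(x') \Vert \leq \tfrac{1}{2} \Vert g(x) - g(x') \Vert + \Vert A^{-1} \Vert \cdot \Vert f(x, g(x')) - f(x', g(x')) \Vert,
\]
and the last term vanishes as $x' \to x$ by continuity of $f$.

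The step I expect to require the most care is promoting continuity to $C^p$-regularity without circular reasoning. By openness of isomorphisms in $L(Y, Z)$ and continuity of $g$, one may shrink $U_0$ so that $D_2 f(x, g(x))$ stays invertible throughout. A candidate derivative is
\[
Dg(x) = - \bigl[ D_2 f(x, g(x)) \bigr]^{-1} D_1 f(x, g(x));
\]
checking that this really is the Fr\'echet derivative uses the identity $f(x, g(x)) = f(x+h,g(x+h))$, a mean-value-type expansion of $f$, and the continuity of $g$ already established. Higher regularity follows by induction: the right-hand side is a composition and product of $C^{p-1}$ objects (using that the inversion map $L \mapsto L^{-1}$ on isomorphisms is $C^\infty$), so if $g$ is $C^{p-1}$ then $Dg$ is $C^{p-1}$, i.e.\ $g$ is $C^p$.
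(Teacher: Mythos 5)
Your argument is the standard contraction-mapping proof of the implicit function theorem, and it is essentially correct; note, however, that the paper does not prove this statement at all --- it is quoted verbatim as a classical tool (``for the convenience of the reader, we state the implicit function theorem as we will use it in the following'') and then applied in the proofs of Propositions \ref{Prop:FirstIFT} and \ref{Prop:SecondIFT}. So there is nothing in the paper to compare your route against. Within your write-up, the construction of $T$, the smallness of $D_yT$ near $(x_0,y_0)$, the self-mapping estimate, and the continuity of $g$ are all fine. The one step you should make explicit is in the differentiability argument: to conclude that the error term $o(\Vert h\Vert + \Vert g(x+h)-g(x)\Vert)$ is actually $o(\Vert h\Vert)$, you need $\Vert g(x+h)-g(x)\Vert = O(\Vert h\Vert)$, i.e.\ a local Lipschitz bound on $g$, not merely continuity. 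This follows from your own displayed inequality
\[
\Vert g(x)-g(x')\Vert \leq 2\,\Vert A^{-1}\Vert\,\Vert f(x,g(x'))-f(x',g(x'))\Vert
\]
combined with the local Lipschitz continuity of $f$ in its first argument (available since $f$ is $C^1$), but as written you only invoke ``the continuity of $g$ already established,'' which is not quite enough. With that point filled in, the induction giving $C^p$-regularity via smoothness of the inversion map is standard and sound.
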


We give definitions which prepare the proof of Proposition \ref{Prop:FirstIFT}.
Let us set
\begin{align*}
\mathcal{Z}_{\infty}
= \Big\lbrace
Z=(H_0,H_1,K_1,H_2,K_2, \ldots), 
 \,H_k \in M_0(\mathcal{B}_k), K_k \in &M(\mathcal{P}_k^c), 
\\ &
\Vert Z \Vert_{\mathcal{Z}_{\infty}} < \infty
\Big\rbrace
\end{align*}
where
\begin{align*}
\Vert Z \Vert_{\mathcal{Z}_{\infty}}
= \max \left(
\sup_{k \geq 0} \frac{1}{\eta^k} \Vert H_k \Vert_{k,0},
\,\,
\sup_{k \geq 1} \frac{1}{\eta^k} \Vert K_k \Vert_k^{(A)} \right).
\end{align*}
Clearly, $\Vert \cdot \Vert_{\mathcal{Z}_{\infty}}$ is a norm on $\mathcal{Z}_{\infty}$.
We define a dynamical system on $\mathcal{Z}_{\infty}$ as follows:
\begin{align*}
\mathcal{T}: \mathbf{E} \times M(\mathcal{B}_0) \times \mathcal{Z}_{\infty} \rightarrow \mathcal{Z}_{\infty},
\quad
\mathcal{T}(\mathcal{K},\mathcal{H},Z) = \tilde{Z},
\end{align*}
where
\begin{align*}
\tilde{H}_0(\mathcal{K},\mathcal{H},Z)
&= \left( \mathbf{A}_0^{q(\mathcal{H})} \right)^{-1} \left(H_1 - \mathbf{B}_0^{q(\mathcal{H})} \hat{K}_0(\mathcal{K},\mathcal{H}) \right),
\\
\tilde{H}_k(\mathcal{K},\mathcal{H},Z)
&= \left( \mathbf{A}_k^{q(\mathcal{H})} \right)^{-1} \left(H_{k+1} - \mathbf{B}_k^{q(\mathcal{H})} K_k \right),
\quad k \geq 1,
\\
\tilde{K}_{k+1}(\mathcal{K},\mathcal{H},Z) &= \mathbf{S}_k\left(H_k,K_k,q(\mathcal{H})\right), \quad k \geq 1,
\\
\tilde{K}_1(\mathcal{K},\mathcal{H},Z) &= \mathbf{S}_0\left(H_0,\hat{K}_0(\mathcal{K},\mathcal{H})q(\mathcal{H})\right),
\end{align*}
with fixed initial condition
\begin{align*}
\hat{K}_0(\mathcal{K},\mathcal{H})(X,\varphi)
= e^{-\mathcal{H}(X,\varphi)} \prod_{x \in X} \mathcal{K}(\nabla \varphi(x)),
\end{align*}
and $q(\mathcal{H})$ is the projection on the coefficients of the quadratic part of $\mathcal{H}$.

One easily sees that
$$
\mathcal{T}(\mathcal{K},\mathcal{H},Z) = Z
$$ is satisfied if and only if
$$
\mathbf{T}_k(H_k,K_k,q(\mathcal{H})) = (H_{k+1},K_{k+1})
$$ with $K_0 = \hat{K}_0(\mathcal{K},\mathcal{H})$.

Proposition \ref{Prop:FirstIFT} is equivalent to the statement that for sufficiently small $(\mathcal{K},\mathcal{H})$ there is a unique fixed point $\hat{Z}(\mathcal{K},\mathcal{H})$ which depends smoothly on $(\mathcal{K},\mathcal{H})$.

\begin{proof}[Proof of Proposition \ref{Prop:FirstIFT}]
Let $L_0$ (and $A_0,h_0,\kappa$) and $\rho(A)$ be as in Proposition \ref{Prop:Estimates_BulkFlow_InfiniteVolume}.

Let $f: \mathbf{E} \times M(\mathcal{B}_0) \times \mathcal{Z}_{\infty} \rightarrow \mathcal{Z}_{\infty}$ be the map
$$
f(\mathcal{K},\mathcal{H},Z) = \mathcal{T}(\mathcal{K},\mathcal{H},Z) - Z.
$$
We apply the implicit function theorem on $f$. The required assumptions on $f$ are checked below.

It holds that $f(0,0,0)=0$.
To show that $f$ is smooth we have to check that $\mathcal{T}$ is smooth.

\textbf{Claim:} For every triple $(L,h,A)$ which satisfies $L \geq L_0, h \geq h_0(L), A \geq A_0(L)$ there exist constants $\rho_1 >0,\rho_2>0$ such that $\mathcal{T}$ is smooth in
$$
B_{\rho_1}(0) \times B_{\rho_2}(0) \times B_{\rho(A)}(0) \subset \mathbf{E} \times M(\mathcal{B}_0) \times \mathcal{Z}_{\infty},
$$
 i.e., for all $(\mathcal{K},\mathcal{H},Z) \in B_{\rho_1}(0) \times B_{\rho_2}(0) \times B_{\rho(A)}(0)$,
\begin{align*}
\frac{1}{j_1!j_2!j_3!} \Vert
D_{\mathcal{K}}^{j_1} D_{\mathcal{H}}^{j_2} D_{Z}^{j_3}
\mathcal{T}(\mathcal{K},\mathcal{H},Z)
&(\dot{\mathcal{K}}, \ldots,\dot{\mathcal{H}},\ldots,\dot{Z}) \Vert_{\mathcal{Z}_{\infty}}
\\
&\leq C_{j_1,j_2,j_3}(L,h,A) \Vert \dot{\mathcal{K}} \Vert_{\zeta}^{j_1} \Vert \dot{\mathcal{H}} \Vert_{0,0}^{j_2} \Vert \dot{Z} \Vert_{\mathcal{Z}_{\infty}}^{j_3}  .
\end{align*}
Furthermore $q(\mathcal{H}) \in B_{\kappa}(0)$ for all $\mathcal{H} \in B_{\rho_2}(0)$.

\textbf{Proof of the claim:}
We establish smoothness of the coordinate maps for $\tilde{H}_k$ and $\tilde{K}_k$ in a neighbourhood  of the origin. Let $Z \in B_{\rho(A)}(0)$.
\begin{itemize}
	\item Since $\tilde{K}_{k+1}(\mathcal{K},\mathcal{H},Z) = \mathbf{S}_k(H_k,K_k,q(\mathcal{H}))$ for $k \geq 1$, smoothness follows from the smoothness of $\mathbf{S}_k$ in Proposition \ref{Prop:Estimates_BulkFlow_InfiniteVolume}. The proposition can be applied if
	$$
	(H_k,K_k,q(\mathcal{H})) \in U_{\rho(A),\kappa}.
	$$
	Since $Z \in B_{\rho(A)}(0)$ is assumed, $(H_k,K_k) \in U_{\rho(A)}$ is satisfied. Moreover, the map $\mathcal{H} \mapsto q(\mathcal{H})$ is linear and satisfies
	$$
	\left| q(\mathcal{H}) \right| \leq \frac{C}{h^2} \Vert \mathcal{H} \Vert_{0,0}.
	$$
	For $\rho_2$ small enough we thus have $q(\mathcal{H}) \in B_{\kappa}$.
	Bounds on the derivatives of $\tilde{K}_{k+1}$ are obtained as follows. Note that for $k \geq 1$ the function $\tilde{K}_{k+1}$ does not depend on $\mathcal{K}$.
	\begin{align*}
	&\frac{1}{j_2!j_3!} \frac{1}{\eta^{k+1}} \Vert D_{\mathcal{H}}^{j_2} D_{Z}^{j_3} \tilde{K}_{k+1} (\mathcal{K},\mathcal{H},Z)(\dot{\mathcal{H}}, \ldots, \dot{Z}) \Vert_{k+1}^{(A)}
	\\ & \quad\quad\quad\quad\quad\quad
	\leq
	C_{j_2,j_3} \frac{1}{\eta^{k+1}} \left( \Vert \dot{H}_k \Vert_{k,0} + \Vert \dot{K}_k \Vert_k^{(A)} \right)^{j_3} C_{j_2} \Vert \dot{\mathcal{H}} \Vert_{0,0}^{j_2}
	\\
	&\quad\quad\quad\quad\quad\quad
	\leq
	C_{j_2,j_3} \frac{1}{\eta} \Vert \dot{Z} \Vert_k^{j_3} C_{j_2} \Vert \dot{\mathcal{H}} \Vert_{0,0}^{j_2}.
	\end{align*}
	\item
	The smoothness of $\tilde{H}_k$ follows similarly with the help of Proposition \ref{Prop:Estimates_BulkFlow_InfiniteVolume}.
	\item The smoothness of the map $\tilde{K}_1(\mathcal{K},\mathcal{H},Z) = \mathbf{S}_0(\hat{K}_0(\mathcal{K},\mathcal{H}),H_0,q(\mathcal{H}))$ and bounds on the derivatives are done in detail in \cite{ABKM}. Smoothness for $\hat{K}_0$ is proven in Lemma 12.2, and then we apply Proposition \ref{Prop:Estimates_BulkFlow_InfiniteVolume} and chain rule.
\end{itemize}

Now we show that $Z \mapsto D_Zf(0,0) Z$ is an isomorphism. Since
$$
D_Z f(0,0)Z = D_Z \mathcal{T}(0,0,0)Z-Z
$$
 one needs $Z \mapsto \mathcal{T}(\mathcal{K},\mathcal{H},Z)$ to be a contraction at the origin.
From the definition of the maps $\tilde{H}_k$ and $\tilde{K}_k$ and from Proposition \ref{Prop:FirstDerivative_BulkFlow} it follows that
\begin{align*}
\frac{\de \tilde{H}_k}{\de H_{k+1}}
&= \left( \mathbf{A}_k^0 \right)^{-1}
\quad \text{for } k \geq 0,
\\
\frac{\de \tilde{H}_k}{\de K_k}
&= - \left( \mathbf{A}_k^0 \right)^{-1} \mathbf{B}_k^0
\quad \text{for } k \geq 1,
\\
\frac{\de \tilde{K}_{k+1}}{\de K_k}
&= \mathbf{C}_k^0
\quad \text{for } k \geq 1,
\end{align*}
and all other derivatives vanish. Let $Z\in \mathcal{Z}_{\infty}$ satisfy $\Vert Z \Vert_{\mathcal{Z}_{\infty}} \leq 1$. Let us denote by
$$
Z' = \frac{\partial \mathcal{T}(0,0,Z)}{\partial Z} \bigg\vert_{Z=0}Z,
$$
and denote the coordinates of $Z'$ by $H_k'$ and $K_k'$. The bounds on the operators $\left(\mathbf{A}_k^q\right)^{-1}$, $\mathbf{B}_k^q$ and $\mathbf{C}_k^q$ from Proposition \ref{Prop:FirstDerivative_BulkFlow} and $\Vert Z \Vert_{\mathcal{Z}_{\infty}}$ imply that
\begin{align*}
&\Vert H'_0 \Vert_{0,0}
\leq \left\Vert \left( \mathbf{A}_0^0 \right)^{-1} \right\Vert \eta
\leq \frac{3}{4} \eta,
\\
&\eta^{-k} \Vert H_k' \Vert_{k,0}
\leq \eta^{-k} \left\Vert \left(\mathbf{A}_k^0 \right)^{-1} \right\Vert \eta^{k+1} + \eta^{-k} \left\Vert \left( \mathbf{A}_k^0 \right)^{-1} \right\Vert \Vert \mathbf{B}_k^0 \Vert \eta^k
\leq \frac{3}{4} \left( \eta + \frac{1}{3} \right), \, 1 \leq k,
\\
&\eta^{-1} \Vert K_1' \Vert 
= 0,
\\
&\eta^{-k} \Vert K_k' \Vert
 \leq \eta^{-k} \Vert \mathbf{C}_{k-1}^0 \Vert \eta^{k-1}
\leq \frac{\theta}{\eta},\, k \geq 2.
\end{align*}
For $\eta<1$ this implies that
$$
\left\Vert \frac{\partial \mathcal{T}(0,0,Z)}{\partial Z} \bigg\vert_{Z=0} \right\Vert
\leq \varrho < 1.
$$
Thus we can apply the implicit function theorem. It follows that there exist $\epsilon_1$ and $\epsilon_2$ and a smooth function $\hat{Z}: B_{\epsilon_1}(\mathbf{E}) \times B_{\epsilon_2} (M_0(\mathcal{B}_0)) \rightarrow B_{\rho(A)}(\mathcal{Z}_{\infty})$ such that $\hat{Z}(0,0)=0$ and $\mathcal{T}(\mathcal{K},\mathcal{H},\hat{Z}(\mathcal{K},\mathcal{H})) = \hat{Z}(\mathcal{K},\mathcal{H})$ for all $(\mathcal{K},\mathcal{H}) \in B_{\epsilon_1}(0) \times B_{\epsilon_2}(0)$.

~\\
It remains to show that the bounds mentioned in Proposition \ref{Prop:FirstIFT} are satisfied.

The fixed point map satisfies
$$
\Vert\hat{Z}(\mathcal{K},\mathcal{H}) \Vert_{\mathcal{Z}_{\infty}}
\leq \rho(A)
$$
uniformly in $(\mathcal{K},\mathcal{H}) \in B_{\epsilon_1}(0) \times B_{\epsilon_2}(0)$.

The connections between the parameters $\epsilon_1, \epsilon_2$ and $\epsilon$ is clearly explained in \cite{ABKM}.

From this it follows that
$$
\Vert \hat{H}_k \Vert_{k,0} \text{ and } \Vert \hat{K}_k \Vert_{k}^{(A)} \leq \epsilon \eta^k.
$$
\end{proof}

\begin{proof}[Proof of Proposition \ref{Prop:SecondIFT}]
Let $\hat{Z}: B_{\epsilon_1}(0) \times B_{\epsilon_2}(0) \rightarrow B_{\epsilon}(0)$ be the fixed point map from Proposition \ref{Prop:FirstIFT}. Denote by $\Pi_{H_0}: \mathcal{Z}_{\infty} \rightarrow M_0(\mathcal{B}_0)$ the bounded linear map that extracts the coordinate $H_0$ form $Z$.

Define
$$
f(\mathcal{K},\mathcal{H}) = \Pi_{H_0} \hat{Z} (\mathcal{K},\mathcal{H}) - \mathcal{H}
$$
as a map from $B_{\epsilon_1}(0) \times B_{\epsilon_2}(0) \rightarrow M_0(B_0)$. $f$ is surely smooth. The equality
$$
f(0,0) = \Pi_{H_0} \hat{Z}(0,0) = 0
$$ holds since $\hat{Z}(0,0) = 0$.
Our next concern is to show that
$$
D_2f(0,0) \mathcal{H} = -\mathcal{H}.
$$
By definition, $\mathcal{T}(0,\mathcal{H},0)=0$ for all $\mathcal{H} \in B_{\epsilon_2}(0)$. Due to the uniqueness of the fixed point, $\hat{Z}(0,\mathcal{H})=0$ for all $\mathcal{H} \in B_{\epsilon_2}(0)$. It follows that $D_2 \hat{Z}(0,0)=0$ and thus $D_2 \Pi_{H_0}\hat{Z}(0,0) \mathcal{H} = 0$ for all $\mathcal{H} \in B_{\epsilon_2}(0)$.

In summary we obtain that $D_2f(0,0) \mathcal{H}$ is an isomorphism. By the implicit function theorem it follows that there is $\delta$ and a smooth function $\hat{\mathcal{H}}: B_{\delta}(0) \subset \mathbf{E} \rightarrow B_{\epsilon_2}(0) \subset M_0(\mathcal{B}_0)$ such that $\Pi_{H_0} \hat{Z}(\mathcal{K},\hat{\mathcal{H}}(\mathcal{K}))= \hat{\mathcal{H}}(\mathcal{K})$.
\end{proof}

\subsection{Back to finite volume and proof of Theorem \ref{Thm:RepresentationPartitionFunction}}\label{subsec:Back_to_finite-volume}

In the last section we constructed the global flow $(H_k,K_k^{\Z^d})_{k \in \N}$ and proved useful estimates. Now we transfer the properties to the finite-volume flow and deduce the proof of Theorem~\ref{Thm:RepresentationPartitionFunction}.

\subsubsection{Estimates for the finite-volume flow}

The relevant part of the flow is the same in finite and infinite volume,
$$
H_k^{\Z^d} = H_k^{\Lambda}
\quad \text{for } k \leq N(\Lambda),
$$
so the estimates of the global flow are also valid in finite volume.
The irrelevant parts coincide only for polymers $X$ with $\diam(X) \leq \frac{1}{2} \diam(\Lambda)$. However, we can use the improved bound on $D \mathbf{S}_k$ in Lemma \ref{lem:First_der_improved_large_polymers} and the single step estimate in Proposition \ref{Prop:SingleStepRG_BulkFlow} to prove inductively that $K_k^{\Lambda}$ also satisfies the desired estimates.

\begin{prop}[Existence of the finite-volume bulk flow]\label{Prop:Existence_BulkFlow}
Fix $\zeta,\eta \in (0,1)$. There is $L_0$ such that for all integers $L \geq L_0$ there is $A_0,h_0,\kappa$ with the following property. There is $\bar{\delta}$ and $\bar{\epsilon}$ such that for a fixed $\Lambda$ the finite-volume flow
$$
(H_k,K_k^{\Lambda}) \mapsto (H_{k+1},K_{k+1}^{\Lambda})
$$
exists for all $k \leq N(\Lambda)$, is smooth in $\mathcal{K}\in B_{\bar{\delta}}(0)$ with bounds which are uniform in $N(\Lambda)$ and satisfies $(H_k,K_k^{\Lambda}) \in \mathbb{D}_k(\bar{\epsilon},\eta,\Lambda)$.

Moreover,
$$
\Pi_2(H_0(\mathcal{K})) = q(\mathcal{K})
$$
and
$$
K_0(\varphi,X) = K_0(\mathcal{K},H_0)(\varphi,X) = e^{-H_0(\varphi,X)}\prod_{x \in X} \mathcal{K}(\nabla\varphi(x)).
$$
\end{prop}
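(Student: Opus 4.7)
The plan is to transfer the global flow constructed in Propositions~\ref{Prop:FirstIFT} and \ref{Prop:SecondIFT} back to finite volume and to upgrade the $\eta^k$-bound on the irrelevant part to the $\eta^{2k}$-bound required by $\mathbb{D}_k$ via the single step estimate. First I apply Proposition~\ref{Prop:SecondIFT} to obtain the smooth renormalised initial condition $\hat{\mathcal{H}}: B_{\delta}(0) \to B_{\epsilon_2}(0)$ and set $H_0(\mathcal{K}) := \hat{\mathcal{H}}(\mathcal{K})$; by construction of $\hat{\mathcal{H}}$ this immediately gives $\Pi_2(H_0(\mathcal{K})) = q(\mathcal{K})$, and the prescribed form of $K_0$ is the initial condition of the associated trajectory. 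Let $(H_k, K_k^{\Z^d})_k$ denote the infinite-volume trajectory from Proposition~\ref{Prop:FirstIFT} with this initial data. I then define $K_{k+1}^{\Lambda} := \mathbf{S}_k^{\Lambda}(H_k, K_k^{\Lambda}, q(\mathcal{K}))$ inductively, using the $H_k$'s from the global flow at every scale.

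The next step is to check that the sequence so defined is actually a trajectory of the finite-volume map $\mathbf{T}_k^{\Lambda}$, i.e.\ that $H_{k+1} = \mathbf{A}_k^{q} H_k + \mathbf{B}_k^{q} K_k^{\Lambda}$. Since $\mathbf{B}_k^{q}$ only probes $K_k$ on individual blocks via $\Pi_2 \mathcal{R}_{k+1}$, and for $k < N(\Lambda)$ any block $B \in \mathcal{B}_k$ satisfies $\diam(B) \leq L^k \leq \tfrac{1}{2}\diam(\Lambda)$, the $(\Z^d)$-property established in Proposition~\ref{Prop:Zd-property} gives $K_k^{\Lambda}(B) = K_k^{\Z^d}(B)$, so $\mathbf{B}_k^{q} K_k^{\Lambda} = \mathbf{B}_k^{q} K_k^{\Z^d}$ and the desired identity reduces to the corresponding consistency relation for the global flow.

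With these preliminaries in place, I would prove $(H_k, K_k^{\Lambda}) \in \mathbb{D}_k(\bar{\epsilon}, \eta, \Lambda)$ by induction on $k \leq N(\Lambda)$. The bound $\|H_k\|_{k,0} \leq \bar{\epsilon}\eta^k$ is inherited from Proposition~\ref{Prop:FirstIFT} by choosing $\bar{\epsilon} \geq \epsilon$. The base case $\|K_0^{\Lambda}\|_0^{(A)} \leq \bar{\epsilon}$ follows from the explicit formula for $K_0^{\Lambda}$ together with smallness of $\mathcal{K}$ and $\hat{\mathcal{H}}(\mathcal{K})$. The inductive step is exactly the single step estimate of Proposition~\ref{Prop:SingleStepRG_BulkFlow}, which turns $(H_k, K_k^{\Lambda}) \in \mathbb{D}_k(\bar{\epsilon}, \eta, \Lambda)$ into $\|K_{k+1}^{\Lambda}\|_{k+1}^{(A)} \leq \bar{\epsilon}\eta^{2(k+1)}$, provided $\bar{\epsilon} \leq \rho_0^{\es}$. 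This closes the induction.

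Smoothness of $\mathcal{K} \mapsto (H_k, K_k^{\Lambda})$ on $B_{\bar{\delta}}(0)$ follows by induction on $k$ from smoothness of $\hat{\mathcal{H}}$ and of $\mathbf{S}_k^{\Lambda}$ (Proposition~\ref{Prop:Smoothness_BulkFlow}); the main obstacle, which I expect to be the delicate part of the argument, is to obtain derivative bounds that are uniform in $N(\Lambda)$. My plan is to decompose $K_k^{\Lambda}$ into its restriction to small polymers, where by the $(\Z^d)$-property it coincides with $K_k^{\Z^d}$ and its $\mathcal{K}$-derivatives are $N$-independent by Proposition~\ref{Prop:FirstIFT}, and its restriction to the large-polymer set $\mathcal{P}_k^{2}(\Lambda)$. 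On the latter, Lemma~\ref{lem:First_der_improved_large_polymers} contributes a factor $A^{-\frac{x}{2} L^{N-(k+1)}}$ at each occurrence of $\mathbf{S}_j^{\Lambda}|_{\mathcal{P}^{2}}$ appearing in the chain rule for the iterated flow; these factors decay geometrically in $N-k$ and are summable, so the resulting derivative estimates close up independently of $N$. Propagating this small-versus-large polymer split cleanly through the chain rule is the genuine bookkeeping step, but the combination of the $N$-uniform smoothness of the global flow with the improved large-polymer derivative bound provides exactly the ingredients needed.
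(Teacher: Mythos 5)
Your overall strategy coincides with the paper's: transfer the global flow of Propositions~\ref{Prop:FirstIFT} and \ref{Prop:SecondIFT} back to $\Lambda$, get the $\eta^{2k}$-bound on $K_k^{\Lambda}$ by induction via Proposition~\ref{Prop:SingleStepRG_BulkFlow}, and prove $N$-uniform smoothness by splitting polymers into $\mathcal{P}_k^1(\Lambda)$ (where $K_k^{\Lambda}=K_k^{\Z^d}$ and Proposition~\ref{Prop:FirstIFT} gives $N$-independent derivative bounds) and $\mathcal{P}_k^2(\Lambda)$ (where Lemma~\ref{lem:First_der_improved_large_polymers} is used). The choice $\bar{\epsilon}=\min\lbrace\rho_0^{\es},\epsilon\rbrace$ with $\epsilon$ shrunk via $\epsilon_1$ reconciles your two requirements $\bar{\epsilon}\geq\epsilon$ and $\bar{\epsilon}\leq\rho_0^{\es}$.

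The one step whose justification as written would fail is the closing of the large-polymer derivative bounds. The factor from Lemma~\ref{lem:First_der_improved_large_polymers} is $C_1 A^{4-\frac{x}{2}L^{N-(k+1)}}$, which is \emph{not} small for the last few scales: at $k=N-1$ it equals $C_1A^{4-x/2}$ with $x<2\alpha<2$ and $A$ large, so it is large, and the factors are not ``summable'' in the sense you invoke. Moreover the mechanism is not summability at all but a fixed-point argument for the recursion on the constants $\bar{C}_r$: one needs the coefficient multiplying the unknown $\bar{C}_r$ (coming from the first-derivative term $D\mathbf{S}_k\cdot D^r_{\mathcal{K}}(H_k,K_k,q)$, which is the only term in the chain rule containing $\bar{C}_r$ itself) to be a contraction $\varrho<1$, so that $\bar{C}_r\geq\frac{1}{1-\varrho}(D+\varrho\,3\tilde{C}_r)$ can be solved. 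This holds only for $k<N-M$ with $M=\left\lfloor 1-\frac{\ln(8/x)}{\ln L}\right\rfloor$; for the remaining scales $N-M\leq k\leq N$ the paper simply lets the constants accumulate, which is harmless because $M$ is independent of $k$ and $N$. You need to add this two-regime argument explicitly; with it, your proof matches the paper's.
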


\begin{proof}
Let $(L_0,A_0,h_0,\kappa)$ be as in Proposition \ref{Prop:FirstIFT} and let $(H_k,K_k^{\Z^d})$ be the global flow with renormalised initial condition from Proposition \ref{Prop:SecondIFT}. Let $\bar{\epsilon} = \min \lbrace \rho_0^{\es}, \epsilon \rbrace$, where $\rho_0^{\es}$ is the quantity from Proposition \ref{Prop:SingleStepRG_BulkFlow} and $\epsilon$ is as in Proposition \ref{Prop:FirstIFT}. 
From the infinite-volume flow we already know that $\Vert H_k \Vert_{k,0} \in B_{\epsilon \eta^k}(0)$ for any $k \leq N$  where $\epsilon$ can be made arbitrarily small by decreasing $\epsilon_1$, in particular we can presume that $\Vert H_k \Vert_{k,0} \in B_{\bar{\epsilon}\eta^k}(0)$ for $\mathcal{K} \in B_{\bar{\delta}}(0)$ for sufficiently small $\bar{\delta}$. Thus we just have to show that $K_k^{\Lambda} \in B_{\bar{\epsilon} \eta^{2k}}(0)$ for $\bar{\delta}$ small enough.

We proceed by induction. For $k=0$ it holds by definition that $K_0^{\Lambda} = K_0^{\Z^d}$ and thus $K_0^{\Lambda} \in B_{\bar{\epsilon}}(0)$ is satisfied.

Now let $K_k^{\Lambda} \in B_{\bar{\epsilon} \eta^{2k}}(0)$ for $ \mathcal{K} \in B_{\bar{\delta}}(0)$. 
 To advance the induction, we apply Proposition~\ref{Prop:SingleStepRG_BulkFlow} and obtain that also $K_{k+1}^{\Lambda}$ satisfies the desired estimate.


~\\
Smoothness in $\mathcal{K}$ can be proven as follows. Let $\mathcal{P}_k^1(\Lambda)$ be the set of polymers $X~\in~\mathcal{P}_k(\Lambda)$ such that $\diam(X) \leq \frac{1}{2} \diam(\Lambda)$. Since $K_k^\Lambda = K_k^{\Z^d}$ on $\mathcal{P}_k^1(\Lambda)$ and since the global flow $(H_k,K_k^{\Z^d})$ is smooth in $\mathcal{K}$, we know that for all $r \in \N$ there is $\tilde{C}_r > 0$ such that for all $k \in \N$
\begin{align}
\left\Vert
D^r_{\mathcal{K}} H_k (\dot{\mathcal{K}}, \ldots, \dot{\mathcal{K}})
\right\Vert_{k,0}
&
\leq
\tilde{C}_r \Vert \dot{\mathcal{K}} \Vert^r_{\zeta},
\label{eq:smoothness_H}
\\
\left\Vert
D^r_{\mathcal{K}} K_k^{\Lambda}\big\vert_{\mathcal{P}_k^1(\Lambda)} (\dot{\mathcal{K}}, \ldots, \dot{\mathcal{K}})
\right\Vert_{k,0}
&
\leq
\tilde{C}_r \Vert \dot{\mathcal{K}} \Vert^r_{\zeta}.
\label{eq:smoothness_K_small}
\end{align}
We will prove inductively (induction on $k$ and $r$) that also $D^r_{\mathcal{K}} K_k^{\Lambda}\big\vert_{\mathcal{P}_k^2(\Lambda)}$ satisfies a bound with a constant $\bar{C}_r$ that is uniform in $k$ and $N$,
\begin{align}
\left\Vert
D^r_{\mathcal{K}} K_k^{\Lambda}\big\vert_{\mathcal{P}_k^2(\Lambda)} (\dot{\mathcal{K}}, \ldots, \dot{\mathcal{K}})
\right\Vert_{k,0}
&
\leq
\bar{C}_r \Vert \dot{\mathcal{K}} \Vert^r_{\zeta}.
\label{eq:smoothness_K_long}
\end{align}
Note that by Lemma \ref{lem:First_der_improved_large_polymers} it holds
\begin{align*}
\left\Vert
D_HD_KD_q \mathbf{S}_k \big\vert_{\mathcal{P}^2_{k+1}(\Lambda)}
(\dot{H},\dot{K})
\right\Vert_{k+1}^{(A)}
\leq
C_1
A^{4-\frac{x}{2}L^{N-(k+1)}}
\Vert\dot{H}\Vert_{k,0}\Vert\dot{K}\Vert_k^{(A)} \Vert \dot{q} \Vert.
\end{align*}
If $k$ is small enough such that for $0 <\vartheta < 1$
$$
4 - \frac{x}{2}L^{N-(k+1)} \leq \vartheta < 0,
\quad
\text{i.e., if} \quad
k< N-M,
\text{ for }
 M= \left\lfloor 1-\frac{\ln(8/x)}{\ln(L)} \right\rfloor,
$$
we can choose $A$ large enough such that for $0<\varrho<1$
$$
C_1 A^{4-\frac{x}{2}L^{N-(k+1)}}
\leq C_1 A^{\vartheta}
\leq \varrho < 1.
$$
In this case, i.e., if $k< N-M$,
\begin{align}
\left\Vert
D_HD_K D_q \mathbf{S}_k(H,K,q)(\dot{H},\dot{K}) \big\vert_{\mathcal{P}_{k+1}^2(\Lambda)}
\right\Vert^{(A)}_{k+1}
\leq
\varrho
\Vert\dot{H}\Vert_{k,0}\Vert\dot{K}\Vert_k^{(A)} \Vert \dot{q} \Vert
\label{eq:improved_DS}
\end{align}
This estimate will be the main point in the argument to advance the induction.
For the remaining scales $k \in \lbrace N-M, N-(M+1), \ldots N \rbrace$ we will use
\begin{align}
\left\Vert
D_HD_K D_q \mathbf{S}_k(H,K,q)(\dot{H},\dot{K}) \big\vert_{\mathcal{P}_{k+1}^2(\Lambda)}
\right\Vert^{(A)}_{k+1}
\leq
C_1
\Vert\dot{H}\Vert_{k,0}\Vert\dot{K}\Vert_k^{(A)} \Vert \dot{q} \Vert
\label{eq:S_high_scale}
\end{align}
where $C_1$ is the constant that appears in Proposition \ref{Prop:Smoothness_BulkFlow}. Accumulation of constants above scale $N-M$ is no problem since there are only finitely many (independently of $k$ and $N$) scales left.

We start with the case $r=1$. We use induction on $k$ until scale $N-M-1$.
Choose
$$
\bar{C}_1 \geq
 \max\left\lbrace
\tilde{C}_1,
 \frac{\varrho}{(1 - \varrho)}3 \tilde{C}_1
\right\rbrace.
$$
 For $k=0$ nothing is to show since both $H_0$ and $K_0^{\Lambda}$ coincide with the corresponding maps in the global flow.
To advance the induction (until scale $N-M-1$), let us assume that
$$
\left\Vert
D_{\mathcal{K}} K_k^{\Lambda} \big\vert_{\mathcal{P}_k^2(\Lambda)} \dot{\mathcal{K}} 
\right\Vert_k^{(A)} 
\leq \bar{C}_1
\Vert\dot{\mathcal{K}} \Vert_{\zeta}.
$$
Then, as long as $k+1 < N-M$, by \eqref{eq:improved_DS}, \eqref{eq:smoothness_H}, \eqref{eq:smoothness_K_small} and induction hypothesis,
\begin{align*}
&
\left\Vert D_{\mathcal{K}} K_{k+1}^{\Lambda} \big\vert_{\mathcal{P}_{k+1}^2(\Lambda)} \dot{\mathcal{K}}
\right\Vert_{k+1}^{(A)}
=
\left\Vert D \mathbf{S}_k (H_k,K_k,q) D_{\mathcal{K}}(H_k,K_k,q) \dot{\mathcal{K}}
\right\Vert_{k+1}^{(A)}
\\
& \quad\quad
\leq
\varrho
\left(
\left\Vert D_{\mathcal{K}} H_k \dot{\mathcal{K}} \right\Vert_{k,0}
+
\left\Vert D_{\mathcal{K}} K_k \big\vert_{\mathcal{P}_k^1(\Lambda)} \dot{\mathcal{K}} \right\Vert_{k}^{(A)}
+
\left\Vert D_{\mathcal{K}} K_k \big\vert_{\mathcal{P}_k^2(\Lambda)} \dot{\mathcal{K}} \right\Vert_{k}^{(A)}
+ 
\left\Vert D_{\mathcal{K}} q \dot{\mathcal{K}} \right\Vert
\right)
\\
& \quad\quad
\leq
\varrho
\left(
\tilde{C}_1 + \tilde{C}_1 + \bar{C}_1 + \tilde{C}_1
\right)
\Vert \dot{\mathcal{K}} \Vert_{\zeta}.
\end{align*}
Our choice of $\bar{C}_1$ and $\varrho<1$ implies that $\varrho\left(\tilde{C}_1 + \tilde{C}_1 + \bar{C}_1\right) \leq \bar{C}_1$ and the induction step is proven. 

If $k= N-M + l$, $l \in \lbrace 0, \ldots, M \rbrace$, then we get inductively by \eqref{eq:S_high_scale}
$$
\left\Vert
D_{\mathcal{K}} K_{k+1}^{\Lambda} \big\vert_{\mathcal{P}^2_{k+1}(\Lambda)} \dot{\mathcal{K}}
\right\Vert_{k+1}^{(A)}
\leq D_{l+1} \Vert \dot{K} \Vert_{\zeta},
$$
where $D_l$ is given by the recursion
$$
D_0 = \bar{C}_1,
\quad
D_l = C_1 \left( 2 \tilde{C}_1 + D_{l-1} \right).
$$
These constants are also independent of $k$ and $N$ since $l$ is independent of $k$ and $N$.

~\\
Next we consider the case $r=2$. Again we use induction on $k$ until scale $N-M-1$. Choose
$$
\bar{C}_2 \geq \max
\left\lbrace
\tilde{C}_2,
\frac{1}{1-\varrho} 
\left(
C_2 (3\tilde{C}_1 + \bar{C}_1)^2 + \varrho 3 \tilde{C}_2
\right)
\right\rbrace,
$$
where $C_2$ is the constant which appears in the estimate
$
\left\Vert
D^2\mathbf{S}_k(H,K)
\right\Vert
\leq C_2
$
in Proposition \ref{Prop:Smoothness_BulkFlow}. For $k=0$ nothing is to show.
Let us assume that
$$
\left\Vert
D_{\mathcal{K}}^2 K_k^{\Lambda}\big\vert_{\mathcal{P}_k^2(\Lambda)} \dot{\mathcal{K}}^2
\right\Vert_k^{(A)}
\leq \bar{C}_2 \Vert\dot{\mathcal{K}}\Vert_{\zeta}^2.
$$
By chain rule we have
\begin{align*}
&
D^2_{\mathcal{K}}
K_{k+1}^{\Lambda} \big\vert_{\mathcal{P}_{k+1}^2(\Lambda)}
\left(\dot{\mathcal{K}},\dot{\mathcal{K}} \right)
\\
&\quad\quad
=
D^2 \mathbf{S}_k (H_k,K_k,q)\big\vert_{\mathcal{P}_{k+1}^2(\Lambda)}
\left(
D_{\mathcal{K}} (H_k,K_k,q) \dot{\mathcal{K}}
\right)^2
\\
& \quad\quad\quad
+
D \mathbf{S}_k (H_k,K_k,q) \big\vert_{\mathcal{P}_{k+1}^2(\Lambda)}
D^2_{\mathcal{K}} (H_k,K_k,q)
\left( \dot{\mathcal{K}},\dot{\mathcal{K}} \right)
\end{align*}
and thus we can estimate with \eqref{eq:improved_DS}, \eqref{eq:smoothness_H}, \eqref{eq:smoothness_K_small} and induction hypothesis,
\begin{align*}
&
\left\Vert
D^2_{\mathcal{K}}
K_{k+1}^{\Lambda} \big\vert_{\mathcal{P}_{k+1}^2(\Lambda)}
\left(\dot{\mathcal{K}},\dot{\mathcal{K}} \right)
\right\Vert_{k+1}^{(A)}
\\ &
\quad\quad\quad
\leq
C_2 \left( \tilde{C}_1 + \tilde{C}_1 + \bar{C}_1 + \tilde{C}_1 \right)^2
+ \varrho \left( \tilde{C}_2 + \tilde{C}_2 + \bar{C}_2 + \tilde{C}_2 \right)
\Vert \dot{\mathcal{K}} \Vert_{\zeta}^2.
\end{align*}
The desired bound is satisfied by our choice of $\bar{C}_2$ and since $\varrho<1$. The key point here is, that the "dangerous" bound $\bar{C}_2$ (the application of the induction hypothesis) comes with the occurrence of $\varrho$.

As before the scales $N-M \leq k \leq N$ can be handled by allowing the constants to accumulate in dependence on $M$. This is no problem since $M$ is independent of $k$ and $N$.

~\\
By a second induction in $r$ we show that \eqref{eq:smoothness_K_long} holds for any $r$.

From the chain rule we deduce inductively that
$$
D^r_{\mathcal{K}} K_{k+1}^{\Lambda} \big\vert_{\mathcal{P}_{k+1}^2(\Lambda)} 
\left( \dot{\mathcal{K}}, \ldots, \dot{\mathcal{K}} \right)
$$
is a linear combination of terms
$$
\left( D^i \mathbf{S}_k(H_k,K_k,q) \right)
\left( D^{j_1}_{\mathcal{K}} (H_k,K_k,q) \dot{\mathcal{K}}^{j_1} \right)
\ldots
\left( D^{j_k}_{\mathcal{K}} (H_k,K_k,q) \dot{\mathcal{K}}^{j_k} \right),
$$
where $1 \leq i \leq r$, $j_s \geq 1$ and $\sum_{s=1}^i j_s = r$.
For $i > 1$ this term is estimated as follows:
\begin{align*}
&
\left\Vert
\left( D^i \mathbf{S}_k(H_k,K_k,q) \right)
\left( D^{j_1}_{\mathcal{K}} (H_k,K_k,q) \dot{\mathcal{K}}^{j_1} \right)
\ldots
\left( D^{j_s}_{\mathcal{K}} (H_k,K_k,q) \dot{\mathcal{K}}^{j_s} \right)
\right\Vert_{k+1}^{(A)}
\\
& \quad\quad\quad
\leq
C_i \prod_{s=1}^i \left( 3\tilde{C}_{j_s} + \bar{C_{j_s}} \right) \Vert \dot{\mathcal{K}} \Vert_{\zeta}^{j_s},
\end{align*}
where we used that $\Vert D^i \mathbf{S}_k (H,K) \Vert \leq C_i$, \eqref{eq:smoothness_H}, \eqref{eq:smoothness_K_small} and induction hypothesis. Note that for $i > 1$ it holds that $j_s < r$ so that only constants $\bar{C}_{l}$ for $l < r$ appear. The term with $i=1$ is
$$
\left(
D \mathbf{S}_k (H_k,K_k,q)
\right)
D^r_{\mathcal{K}}(H_k,K_k,q) \dot{\mathcal{K}}^r,
$$
which can be bounded for scales $k<N-M$ with the help of \eqref{eq:improved_DS} by
$$
\left\Vert
\left(
D \mathbf{S}_k (H_k,K_k,q)
\right)
D^r_{\mathcal{K}}(H_k,K_k,q) \dot{\mathcal{K}}^r
\right\Vert_{k+1}^{(A)}
\leq
\varrho
\left(
3 \tilde{C}_r + \bar{C}_r
\right).
$$
Again the "dangerous" term $\bar{C}_r$ appears with $\varrho$ in front, so that in summary we get
$$
\left\Vert 
D^r_{\mathcal{K}} K^{\Lambda}_{k+1} \big\vert_{\mathcal{P}^2_{k+1}(\Lambda)} \dot{\mathcal{K}}^r
\right\Vert_{k+1}^{(A)}
\leq D + \varrho \left( 3 \tilde{C}_r + \bar{C}_r \right)
$$
for a constant $D$ which depends on $C_i$ for $1 < i \leq r$ and $\tilde{C}_{j_s}$ for $1 \leq j_s < r$.
By the choice
$$
\bar{C}_r \geq \frac{1}{1 - \varrho} \left( D + \varrho 3 \tilde{C}_r \right)
$$
we obtain \eqref{eq:smoothness_K_long}.

Constants are allowed to accumulate for scales $N-M \leq k \leq N$ since $M$ is independent of $k$ and $N$.

~\\
This finishes the proof of smoothness of the finite-volume flow in $\mathcal{K}$.

\end{proof}

\subsubsection{Proof of Theorem \ref{Thm:RepresentationPartitionFunction}}

\begin{proof}[Proof of Theorem \ref{Thm:RepresentationPartitionFunction}]

Let $L_0$ and $\epsilon_0 = \bar{\delta}$ be
 as in Proposition \ref{Prop:Existence_BulkFlow}. Let $f \in \chi_N$.

The starting point is the identity
\begin{align*}
\mathcal{Z}_{N}(\mathcal{K}, \mathcal{Q},f) = \int e^{(f,\varphi)} \sum_{X \subset \T^N} \prod_{x \in X} \mathcal{K} \left(\nabla \varphi(x)\right) \mu_{\mathcal{Q}}(\de\varphi).
\end{align*}
Let us denote
$$
F(\Lambda,\varphi) = \sum_{X \subset \T^N} \prod_{x \in X} \mathcal{K} \left(\nabla \varphi(x)\right).
$$
For $\mathcal{K} \in B_{\epsilon_0}(0)$ let $q=q(\mathcal{K})$ be the quadratic part in $H_0(\mathcal{K})$ from Proposition \ref{Prop:Existence_BulkFlow}.
 Then
\begin{align*}
\mathcal{Z}_{N}(\mathcal{K},\mathcal{Q},f) = \int e^{(f,\varphi)} F(\Lambda,\varphi) \mu_{\mathcal{Q}}(\de\varphi)
&= \frac{Z_N^{(q)}}{Z_N^{(0)}}
\int e^{(f,\varphi)} F^q(\Lambda,\varphi) \mu_{\mathcal{C}^q}(\de\varphi)
\\
&=\frac{Z_N^{(q)}}{Z_N^{(0)}}
e^{\frac{1}{2}(f,\mathcal{C}^q f)}\int F^q(\Lambda,\varphi + \mathcal{C}^qf) \mu_{\mathcal{C}^q}(\de\varphi)
\end{align*}
with 
$$
F^q(\Lambda,\varphi) = e^{\frac{1}{2}\sum_{i,j=1}^d(\nabla_i\varphi,q_{ij}\nabla_j\varphi)} F(\Lambda,\varphi).
$$
 Now let $e=e(\mathcal{K})$ be the constant part and $l(\mathcal{K})(\varphi)$ the linear part of $H_0(\mathcal{K})(\varphi)$. Since $\sum_{x \in \Lambda} l(\mathcal{K})(\varphi)(x)=0$, and since $K_0$ satisfies the correct initial data, it holds that
$$
F^q(\Lambda,\varphi) = e^{-e L^{Nd}} e^{-H_0}\circ K_0(\Lambda,\varphi),
$$
and thus, by Proposition \ref{Prop:Existence_BulkFlow},
\begin{align*}
\mathcal{Z}_{N}(\mathcal{K}, \mathcal{Q},f)
&= \frac{Z_N^{(q(\mathcal{K}))}}{Z_N^{(0)}} e^{\frac{1}{2}(f,\mathcal{C}^{q(\mathcal{K})} f)} e^{-e(\mathcal{K}) L^{Nd}}
\int 
\left(e^{-H_0(\mathcal{K})} \circ K_0(\mathcal{K})\right) (\Lambda,\varphi + \mathcal{C}^{q(\mathcal{K})} f) 
\mu_{\mathcal{C}^{q(\mathcal{K})}}(\de\varphi)
\\
&= \frac{Z_N^{(q(\mathcal{K}))}}{Z_N^{(0)}} e^{\frac{1}{2}(f,\mathcal{C}^{q(\mathcal{K})} f)} e^{-e(\mathcal{K}) L^{Nd}}
\left( e^{-H_N(\mathcal{K})} + K_N(\mathcal{K}) \right)(\Lambda,\mathcal{C}^{q(\mathcal{K})} f).
\end{align*}
Let
$$
Z_N(\mathcal{K},\mathcal{C}^{q(\mathcal{K})} f)
= \left( e^{-H_N(\mathcal{K})} + K_N(\mathcal{K}) \right)(\Lambda,\mathcal{C}^{q(\mathcal{K})} f).
$$
The map $Z_N$ is smooth in $\mathcal{K}$ uniformly in $N$ by Proposition \ref{Prop:Existence_BulkFlow} and and linearity and uniform boundedness of the projection $H_0 \mapsto q(H_0)$.
We shall have established the proof of the theorem if we show that there is a constant $C$ such that $Z_N(\mathcal{K},\mathcal{C}^{q(\mathcal{K})} f)$
satisfies the estimate $\left| Z_N(\mathcal{K},\mathcal{\mathcal{C}}^q f) - 1 \right| \leq C \eta^N$ for special choices of $f$.
First we get
\begin{align*}
\left| Z_N(\mathcal{K},\mathcal{\mathcal{C}}^q f) - 1 \right| 
&\leq 
\left| e^{-H_N(\mathcal{C}^q f)} - 1 \right|
+ \left| K_N(\mathcal{C}^q f) \right|
\\
&\leq \Vert K_N \Vert_N^{(A)} w_N^{\Lambda_N}(\mathcal{C}^q f) A^{-1}
+ \vertiii{e^{-H_N} - 1}_N W_N^{\Lambda_N}(\mathcal{C}^q f ).
\end{align*}
For $f=g_N-c_N$ as given in the assumptions of the theorem it holds that $f \in \chi_N$. Then one can show (see Lemma 5.1 in \cite{Hil16} or the proof of Theorem 2.7 in \cite{ABKM}) that
$$
w_N^{\Lambda_N}(\mathcal{C}^q f),
W_N^{\Lambda_N}(\mathcal{C}^q f)
\leq C
$$
for a constant which is independent of $N$.
Moreover, by Lemma 9.3 in \cite{ABKM}, one can estimate
$$
\vertiii{e^{-H_N} - 1}_N
\leq 8 \Vert H_N \Vert_{N,0}
$$
and since $(H_N,K_N)\in \mathbb{D}_N(\bar{\epsilon},\eta,\Lambda)$ by Proposition \ref{Prop:Existence_BulkFlow} we finally get
$$
\left| Z_N^{\es}(\mathcal{K},\mathcal{\mathcal{C}}^q f) - 1 \right| 
\leq C \eta^N
$$ 
for a constant $C$ which is independent of $N$.
\end{proof}

\bibliography{meinbib}
\bibliographystyle{alpha}

\end{document}